\newtheorem{assumption}{Assumption}
\newtheorem{proposition}{Proposition}
\newtheorem{theorem}{Theorem}
\newtheorem{corollary}{Corollary}
\theoremstyle{definition}
\newtheorem{example}{Example}
\newtheorem{definition}{Definition}
\def \bz {\boldsymbol{z}}
\def \bZ {\boldsymbol{Z}}
\def \bA {\boldsymbol{A}}
\def \bY {\boldsymbol{Y}}
\def \bX {\boldsymbol{X}}
\newcommand{\defeq}{\vcentcolon=}
\newcommand\reallywidehat[1]{%
\savestack{\tmpbox}{\stretchto{%
  \scaleto{%
    \scalerel*[\widthof{\ensuremath{#1}}]{\kern-.6pt\bigwedge\kern-.6pt}%
    {\rule[-\textheight/2]{1ex}{\textheight}}%WIDTH-LIMITED BIG WEDGE
  }{\textheight}% 
}{0.5ex}}%
\stackon[1pt]{#1}{\tmpbox}%
}
\newcommand{\blind}{1}
\begin{document}

\def\spacingset#1{\renewcommand{\baselinestretch}%
{#1}\small\normalsize} \spacingset{1}

%%%%%%%%%%%%%%%%%%%%%%%%%%%%%%%%%%%%%%%%%%%%%%%%%%%%%%%%%%%%%%%%%%%%%%%%%%%%%%

\if1\blind
{
    \title{\bf Causal Inference with Misspecified Network Interference Structure}
\author{Bar Weinstein\thanks{barwein@mail.tau.ac.il}
    \;and Daniel Nevo\thanks{
    The authors gratefully acknowledge support from the Israel Science Foundation (ISF grant No. 827/21)}
\hspace{.2cm}\\
   Department of Statistics and Operations Research, Tel Aviv University}
  \maketitle
 \fi

\if0\blind
{
  \bigskip
  \bigskip
  \bigskip
  \begin{center}
    {\LARGE\bf Causal inference with misspecified network interference structure}
\end{center}
  \medskip
} \fi
\bigskip
\begin{abstract}
Under interference, the treatment of one unit may affect the outcomes of other units. Such interference patterns between units are typically represented by a network. Correctly specifying this network requires identifying which units can affect others -- an inherently challenging task. Nevertheless, most existing approaches assume that a known and accurate network specification is given. In this paper, we study the consequences of such misspecification.

We derive bounds on the bias arising from estimating causal effects using a misspecified network, showing that the estimation bias grows with the divergence between the assumed and true networks, quantified through their induced exposure probabilities. To address this challenge,
we propose a novel estimator that leverages multiple networks simultaneously and remains unbiased if at least one of the networks is correct, even when we do not know which one. Therefore, the proposed estimator provides robustness to network specification.  We illustrate key properties and demonstrate the utility of our proposed estimator through simulations and analysis of a social network field experiment.
\end{abstract}
\noindent%
{\it Keywords:} Exposure mapping; Multi-layer networks; Network experiments; Spillovers; SUTVA. 
\vfill

\newpage
\spacingset{1.2} 

\section{Introduction}
A common assumption in causal inference is that there is \emph{no interference}.
% \citep{Cox1958}. 
%In many settings, excluding interference is unreasonable as interactions between units are present,
However, interference between units is present in many settings where units interact, resulting in the spread of treatment effects.
% e.g., in studying the effect of preventive measures on the spread of infectious diseases \citep{Halloran1995}. 
% Removing the no-interference assumption allows for testing the null hypothesis of no treatment effects
% \citep{Rosenbaum2007}, but  not for unbiased estimation of treatment effects \citep{Basse2018}.
When relaxing the no-interference assumption, researchers typically represent the interference structure as a network, where nodes represent units and edges indicate pairwise interference. Researchers have to specify the network to estimate causal effects in such settings.
% \citep{aronow_estimating_2017, Forastiere2020, Tchetgen2020, ogburn2022}. 
However, correctly specifying the interference network is often challenging due to the complex interactions between units that characterize interference scenarios. 

Consider two examples that illustrate this challenge. 
\citet{Paluck2016} studied the effects of an educational intervention within a student social network.
They constructed the network from questionnaires asking students to list up to ten friends they spend time with. This approach could misrepresent actual social interactions if students' responses were inaccurate or important relationships existed beyond the ten-friend limit.
In another study, \cite{Hayek2022} examined the indirect protective effect of parental vaccination on children's SARS-CoV-2 infection, assuming interference occurred only within households. Since infections can spread between households, this assumption overlooked potentially important community-level effects from other vaccinated individuals \citep{Halloran1991}. Despite such challenges in accurately specifying network interference structures, researchers typically treat these structures as unique and correctly specified \citep[e.g.,][]{ aronow_estimating_2017, Forastiere2020, Tchetgen2020, gao2023,ogburn2022}. 

We extend the exposure mapping framework \citep{Manski2013,ugander2013, aronow_estimating_2017} to explicitly address misspecified network interference structures. Network misspecification can be viewed as a distinct type of exposure mapping misspecification with its own unique consequences and implications. While previous work examined exposure mapping misspecification \citep{aronow_estimating_2017, Saevje2023}, it did not explicitly distinguish between misspecification of the mapping itself and misspecification of the underlying network.
We develop a formal framework highlighting that the correctly specified network interference structure may not be unique, that is, different networks can represent the same effective interference structure. We show that uniqueness emerges under specific constraints on exposure mapping and potential outcomes. 
Using this framework, we consider the settings of randomized experiments under networked interference and derive bounds on the estimation bias that occurs when an incorrect network is assumed.

To address the challenge of network misspecification, we propose a novel estimator that simultaneously incorporates multiple networks. We prove this estimator is robust to misspecification, namely, it remains unbiased if at least one of the networks correctly specifies the interference structure, even when we do not know which network is correct. 
% However, this robustness comes at a cost.  
We illustrate that this unbiasedness may come with a price of increased variance, where the magnitude of the increase depends on the number of networks used and their relative (dis)similarity. Additionally, we establish the estimator's theoretical properties under large-sample conditions, showing that it is both consistent and asymptotically normal under standard assumptions. 
% The \citet{MisspecifiedInterference2025} 
% implements our methodology.
% Proofs and derivations are given in the Web Appendix.

The rest of the paper is organized as follows. Section~\ref{sec:lit.rev} reviews relevant literature. Section~\ref{sec:notation.ass.estimand} introduces notations and formalizes the problem. Section~\ref{sec:bias} reviews practical examples of misspecified networks and shows that commonly used estimators are biased when the network is misspecified. Section~\ref{sec:NMR} presents the novel network-misspecification-robust estimator. Section~\ref{sec:simulations} presents simulation studies that show the bias from network misspecification and the proposed estimator's bias-variance tradeoff.
Section~\ref{sec:data} analyzes a social network field experiment. Finally, Section~\ref{sec:disc} discusses the findings and potential areas for future research.

\section{Related literature}
\label{sec:lit.rev}
Previous research has proposed various methods for estimating causal effects when the interference network is uncertain or only partially measured. These methods typically either impute missing edges or assume a specific measurement error model.
\citet{bhattacharya2020causal} developed a causal discovery method for partial interference settings, focusing on networks with well-separated clusters but unknown within-cluster structures. \citet{Tortu2021} proposed imputing missing edges using a network model trained on observed edges. \citet{Egami2020} introduced a sensitivity analysis for settings with both online and offline networks, examining how unobserved offline networks affect causal estimates. \citet{Leung2022} extended the traditional neighborhood interference assumption by allowing interference effects to decay with network distance.
Under the linear-in-means model, 
\citet{boucher2021estimating} considered estimation when only a distribution of the network is known, and \citet{Griffith2021} analyzed the impact of edge censoring (see Example \ref{exmp:network_censored}). 

Building on the exposure mapping framework, 
\citet{Li2021} developed unbiased estimators for networks measured with random error, requiring specific measurement error models and at least three noisy network measurements.
\citet{Hardy2019} assumed a parametric model for the exposure mapping and
proposed an EM algorithm.
In comparison to both \citet{Li2021} and \citet{Hardy2019}, which assumed a specific network measurement error model and implicitly regarded the true network as unique, our approach acknowledges the possibility that the correct network is not unique and does not view the network specification problem as a measurement error problem. This perspective complements, rather than contradicts, previous approaches.

Notably, some causal effects under interference can be estimated without network data. \citet{Saevje2021} showed that the Expected Average Treatment Effect (EATE), which is an effect marginalized by other units' treatments, can be consistently estimated with the common design-based estimators, under limiting interference dependence between units. \citet{Yu2022} showed that the Total Treatment Effect (TTE) -- treating all units versus none -- can be unbiasedly estimated, under restrictions on the potential outcomes and the experimental design. TTE and EATE are closely related \citep{Saevje2021}. 
However, analyzing other causal estimands requires correct network measurements.

% While \citet{Yu2022} showed that the total treatment effect -- treating all versus none of the units --  can be consistently estimated without network data,
% analyzing 
% % more complex interference channels
% other causal estimands
% requires correct network measurement.

\section{Notations, assumptions and causal estimands}
\label{sec:notation.ass.estimand}

\subsection{Setup}
\label{subsec:setup}

Consider a population of $n$ units, indexed by $i=1,\dots,n$. Let $\bZ$ be the treatment assignment vector of the entire population and let $\mathcal{Z}$ denote the treatments' space which is assumed to be finite. Each unit has a function $Y_i: \mathcal{Z} \to \mathbb{R}$ denoting the \textit{potential outcomes}, that is,  $Y_i(\bz)$ is the outcome of $i$ when, possibly contrary to the fact, the population treatment is set to $\bz \in \mathcal{Z}$.  
In our framework,  $Y_i(\bz)$ are fixed, hence randomness arises solely from the assignment of $\bZ$. 

%No interference occurs when the potential outcomes of each unit depend only on its own treatment assignment, that is, whenever $Y_i(\bz) = Y_i(z_i)$ \citep{Cox1958}. 
We focus on network interference that, for simplicity, is assumed to be represented by an undirected and unweighted network. Extensions to directed and weighted networks are possible with appropriate modifications. 
% The network is a collection of nodes and edges, where each node represents a unit and the edges indicate possible pairwise interference, as we define below. 
In the network, each node represents a unit and the edges indicate possible pairwise interference, as we define below.
We represent the network by its symmetric $n \times n$ adjacency matrix $\bA$, with $A_{ij} = 1$ only if an edge exists between units $i$ and $j$, and by convention $A_{ii} = 0$. %Although exceptions exist (Section \ref{sec:lit.rev}), the network $\bA$ is typically assumed to be fixed and known.
Let $\mathcal{N}_i(\bA)=\{j: A_{ij}=1\}$ be the set of \textit{neighbors} of unit $i$. Let $\mathscr{A} \subseteq \{0,1\}^{n\times n}$ denote the space of all undirected and unweighted networks of size $n$.
% We take the common neighborhood network interference assumption \citep{Forastiere2020, ogburn2022}, which states that interference occurs only between neighbors.
% , that is, for any $\bz,\bz', \in \mathcal{Z}$ 
% \begin{equation}
% \label{eq:neighbor_interference}
% \text{if} \; \; z_i=z'_i \; \;  \text{and} \; \; z_j = z'_j \; \; \forall j \in \mathcal{N}_i(\bA), \; \; \text{then} \; \;
%          Y_i(\bz) = Y_i(\bz').
%  \end{equation}
We further assume that the treatments affect the outcomes only through values of an exposure mapping $f: \mathcal{Z} \times \mathscr{A} \to \mathcal{C} = \{c_1,\ldots, c_L\}$ which maps from the treatments and networks space into $L=\lvert \mathcal{C} \rvert$ different discrete exposure levels.
We take the common neighborhood network interference assumption \citep{Forastiere2020, ogburn2022}, which states that interference occurs only between neighbors. Specifically, we assume that for any unit $i$ the values of $f$ depend only on the treatments assigned to its neighbors. Let $\bA_i$ be the $i$-th row of $\bA$. 
We denote the exposures by $f(\bz,\bA_i)$.

Turning to the treatments' assignment, we assume that the experimental design $\Pr(\bZ = \bz)$ is known. Let $\mathbb{I}\{\cdot\}$ denote the indicator function. Define the probability that unit $i$ has exposure $c_\ell \in \mathcal{C}$ under $\bA \in \mathscr{A}$ by
$p_i^{(\bA)}(c_\ell) = \mathbb{E}_{\bZ}[\mathbb{I}\{f(\bZ,\bA_i) = c_\ell \}]$.
% \begin{align*}
%         %\label{eq:prob_of_exposures}
%        p_i^{(\bA)}(c_\ell) &\defeq \mathbb{E}_{\bZ}[\mathbb{I}\{f(\bZ,\bA_i) = c_\ell \}] 
%         = \sum_{\bz\in \mathcal{Z}}\Pr(\bZ = \bz)\mathbb{I}\{f(\bz,\bA_i) = c_\ell \}.
% \end{align*}
Calculating $p_i^{(\bA)}(c_\ell)$ is computationally intensive, but can be approximated (Web Appendix~F). 
% Moving on, we will treat $p_i^{(\bA)}(c_\ell)$ as fixed and ignore its estimation error, which can be easily bounded \citep{aronow_estimating_2017}. 
The following definition is the exposure mapping analog of the standard positivity assumption. 
\begin{definition}[Positivity]
    \label{defi:positivity} We say that $\bA \in \mathscr{A}$ satisfies positivity if
    $p_i^{(\bA)}(c_\ell)>0$
    for all units $i=1,...,n$ and exposure values $c_\ell \in \mathcal{C}$. 
\end{definition}
 Given the experimental design and the exposure mapping, positivity is a property of the network.
Positivity may not hold for some networks. For instance, if $f$ indicates whether a unit and at least one of its neighbors are treated \citep{aronow_estimating_2017}, then if a unit is isolated ($\mathcal{N}_i(\bA) = \emptyset$), there will be a structural violation of positivity for some exposures.

\subsection{Correctly specified network}
\label{subsec:correctly.spec}

Assume that for each unit there exists a function $\widetilde{Y}_i: \mathcal{C} \to \mathbb{R}$ such that $\widetilde{Y}_i(c_\ell)$ is the outcome of unit $i$ when its exposure value is $c_\ell$. We denote $\widetilde{Y}_i(c_1),\dots,\widetilde{Y}_i(c_L)$ as the induced potential outcomes expressed in terms of exposure values.
To connect $\widetilde{Y}(\cdot)$ to $Y(\cdot)$, the researcher must specify a network that accurately represents the interference structure, as expressed in the following definition.
\begin{definition}[Correctly specified interference structure]
\label{defi:correct_spec_interference}
    For an exposure mapping $f$, we say that the interference structure is correctly specified by $\bA \in \mathscr{A}$, if $\bA$ satisfies Definition~\ref{defi:positivity}, and for all $\bz \in \mathcal{Z}$,
   \begin{equation*}
        \text{if}\; f(\bz,\bA_i) = c_\ell,\; \text{then}\; Y_i(\bz) = \widetilde{Y}_i(c_\ell), \quad i = 1,\ldots,n.
    \end{equation*}
\end{definition}
If some $\bA \in \mathscr{A}$ satisfies Definition~\ref{defi:correct_spec_interference}, then for any $\bz,\bz'$, if $f(\bz,\bA_i)=f(\bz',\bA_i)$ then $Y_i(\bz) = Y_i(\bz')$. The latter property is often called an \textit{exclusion restriction} condition \citep{Puelz2022}. Therefore, Definition~\ref{defi:correct_spec_interference} formalizes the role of the exposure mapping as a bridge between the network $\bA$ and treatments $\bz$ on one side and the potential outcomes on the other side. 
We assume there exists at least one network that satisfies Definition~\ref{defi:correct_spec_interference}.
% This implies that we can use $\widetilde{Y}_i(c_\ell)$ instead of $Y_i(\bz)$.

\paragraph{Exposure Mapping Misspecification} A misspecified interference network represents a specific type of exposure mapping misspecification. 
Our framework explicitly separates between two components -- the assumed network $\bA$ and the mapping $f(\bz,\bA_i)$. Through this separation, we can see that exposure mapping misspecification can arise from two distinct sources: an incorrect mapping $f$ or a network $\bA$. 
Previous work \citep{aronow_estimating_2017, Saevje2023} studied exposure mapping misspecification without distinguishing between these sources. In contrast, we focus specifically on network misspecification while assuming the mapping $f$ is correct, as expressed in
Definition \ref{defi:correct_spec_interference}.

\paragraph{Network Uniqueness} Typically, it is explicitly or implicitly assumed that a unique network correctly specifies the interference structure \citep[e.g.,][]{aronow_estimating_2017, Li2021}. We show that uniqueness holds under further strong constraints on the exposure mapping and the potential outcomes (see Web Appendix~A for a formal statement and proof). 
Let $\bA^\ast$ denote any network that correctly specifies the interference structure. This $\bA^\ast$ can be unique or belong to an equivalence class $\mathscr{A}^\ast \subseteq \mathscr{A}$ of networks that yield equivalent interference structures, where $\mathscr{A}^\ast$ contains all networks satisfying Definition~\ref{defi:correct_spec_interference}.
Furthermore, while one might consider a \emph{minimal} class of correctly specified networks, which includes networks from $\mathscr{A}^\ast$ with the fewest edges, this minimal class is not necessarily a singleton (Web Appendix~A).
Under the sharp null
$\big(\widetilde{Y}_i(c_k)=\widetilde{Y}_i(c_\ell),\; \forall i,k,\ell \big)$, given any exposure value, all other potential outcomes $\widetilde{Y}(\cdot)$ are imputable \citep{Athey2018, Basse2019_rand} and any network that satisfies positivity (Definition~\ref{defi:positivity}) will correctly specify the interference structure.
% \begin{remark}
    % \label{remark:uniqueness}
\paragraph{Exposure Mapping Implications for Uniqueness}
% Under neighborhood interference
Without the additional assumption of exposure mapping, any superset of a correctly specified network (i.e., networks with additional edges) would also correctly specify the interference structure. In the extreme, the fully connected network is always correct, implying that the network that correctly specifies the interference cannot be unique.
The exposure mapping framework fundamentally changes this property.
By Definition~\ref{defi:correct_spec_interference}, a superset of a correctly specified network may no longer be correct, and notably, even the fully connected network is not guaranteed to be correctly specified. 
Thus, while the exposure mapping framework reduces the number of effective potential outcomes through its summarizing property, it
also implies restrictions on the class $\mathscr{A}^\ast$ of correctly specified networks.
% it leads to a different characterization of the class $\mathscr{A}^\ast$ of correctly specified networks.
% Thus, exposure mapping, through its summarizing property, reduces the number of effective potential outcomes but entails a different class $\mathscr{A}^\ast$ of correctly specified networks.
% \end{remark}

We denote by $\bY=\left(Y_1,\dots, Y_n\right)$ the observed outcomes vector, which we assume are related to the potential outcomes in the following manner.
\begin{assumption}[Consistency]
    \label{ass:consistency}
   The observed outcomes are generated from one of the potential outcomes by
   % \begin{equation*}
   $
       Y_i = \sum_{j = 1}^{L}\mathbb{I}\{f(\bZ, \bA^{\ast}_i) = c_j\} \widetilde{Y}_i(c_j),\; i=1,\dots,n, \bA^\ast \in \mathscr{A}^\ast.
   $
   % \end{equation*}
\end{assumption}
 Even if $\mathscr{A}^\ast$ is not a singleton, all networks in it will result in the same observed outcomes. That is, the sum $ \sum_{j = 1}^{L}\mathbb{I}\{f(\bZ, \bA^{\ast}_i) = c_j\} \widetilde{Y}_i(c_j)$ is constant for any $\bA^{\ast} \in \mathscr{A}^\ast$. 

\subsection{Causal estimands}
\label{subsec:estimands}

To define causal effects under the above-described framework, we first define the mean potential outcomes $\mu(c_\ell) = \frac{1}{n}\sum_{i=1}^{n}\widetilde{Y}_i(c_\ell),\; c_\ell \in\mathcal{C}$.
Causal effects are defined as the difference in the mean potential outcomes, 
% \begin{equation}   
% \label{eq:definition_causal_effect}
    $
    \tau(c_\ell, c_k) = \mu(c_\ell) - \mu(c_k).
    $
% \end{equation}
%The causal effects are defined
%on the population of $n$ units and can thus be viewed as ``data-adaptive" \citep{Hubbard2016}.
This definition is common in the literature \citep[e.g.,][]{ugander2013,aronow_estimating_2017, Forastiere2020}.
% Such causal effects typically do not correspond to actual feasible interventions, as one cannot individually intervene on the exposure level. For example, it is impossible that all units are untreated and also have at least one treated friend. Nevertheless, the estimands in \eqref{eq:definition_causal_effect} offer insights into the mechanism of treatment spillover. They elucidate aspects such as the presence and magnitude of indirect effects induced by the treatments, thereby enhancing our understanding of how the propagation of treatments across a network of units influences the outcomes.
%Nevertheless, the causal effects \eqref{eq:definition_causal_effect} are not defined in terms of the experimental design \citep[e.g., as in,][]{Hudgens2008,Tchetgen2010}.

\section{Bias from using a misspecified network}
\label{sec:bias}
Let $\bA^{sp}$ be the network specified by the researchers. In this section, we study the bias resulting from using a misspecified network, i.e., when $\bA^{sp} \notin \mathscr{A}^\ast$. We first review common sources and types of network misspecification that can lead to incorrect interference structures.

\begin{figure}[h!]
    \centering
   \includegraphics[scale=0.4]{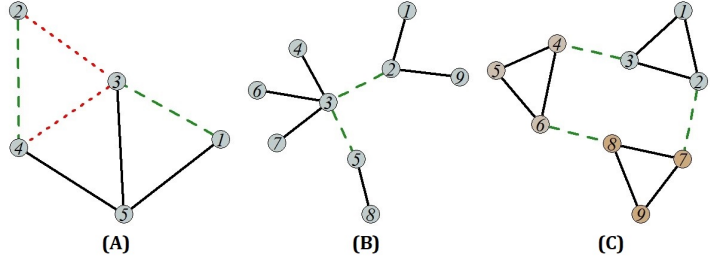}
    \caption{Schematic view of network misspecification. Edges in dashed lines are missing whereas edges in dotted lines are assumed to be present but should be removed. (A) Network with an incorrect list of edges. 
    (B) Network with edges censored at $K=3$. Node $3$ has five edges but two are censored ($2-3,3-5$).
    (C) Cross-clusters contamination with three clusters.}
    \label{fig:network_misspec_example}
\end{figure}

\begin{example}[Incorrect reporting of social connections]
\label{exmp:network_noise}
Networks are often measured from participant self-reported surveys listing frequently interacted friends \citep{Paluck2016, Cai2015} or through epidemiological contact tracing \citep{Nagarajan2020}. 
However, determining the interference structure through surveys can be susceptible to inaccuracies.
For instance, if participants omit friends they interact with or report non-relevant friends, the specified network may fail to reflect the actual interference structure.
A misspecified network due to incorrect reporting of social interactions is illustrated in  Figure~\hyperref[fig:network_misspec_example]{1(A)}.
\end{example}

\begin{example}[Censoring]
\label{exmp:network_censored}
Questionnaires often request participants to list their top $K>0$ friends, but this limitation can result in neglected social connections,
known as \textit{censoring} of edges \citep{Griffith2021}.
For example, \citet{Cai2015} and \citet{Paluck2016} asked participants to list five and ten friends, respectively. 
To assess the extent of censoring present, one can look at the percentage of participants that listed the maximum number of friends, which were $91\%$ in \citet{Cai2015} and $46\%$ in \citet{Paluck2016}.
An illustration of censoring can be seen in Figure~\hyperref[fig:network_misspec_example]{1(B)}.
\end{example}

\begin{example}[Reciprocity]
\label{exmp:reciprocity}
Undirected network edges are mutual, meaning if unit $j$'s treatment affects unit $i$, then $i$ also affects $j$.
When constructing undirected networks from questionnaires, researchers may define an edge if either participant names the other as a friend, or only if both do.
These two options will likely result in different network structures.  
% \citet{Cai2015} analyzed the same dataset under both options and found moderate differences.
\end{example}

\begin{example}[Temporality]
\label{exmp:repeated_measures}
Social interactions evolve over time, so observed networks often reflect only a ``snapshot". Networks are typically defined using data collected before treatment assignment, but using post-treatment data can yield different structures. \citet{Paluck2016} found that only $42.2\%$ of pre-intervention edges persisted a year later. Nonetheless, using a network that is measured post-treatment necessitates the assumption that treatment did not affect the network structure and further assumptions required by the dynamic nature of the problem.
\end{example}

\begin{example}[Cross-clusters contamination]
\label{exmp:contamination}
In partial interference settings, interference is assumed to occur only between units within the same cluster.
The resulting network consists of well-separated clusters, but contamination can occur between clusters, leading to unaccounted-for interference.
For example, 
\citet{Hayek2022} estimated the indirect effect of vaccination against SARS-CoV-2 while implicitly assuming that the protective effect was limited to households. However, if infection can occur outside the household, then the vaccination status of individuals from different households may affect household members, resulting in contamination between clusters.
The network structure of clusters with possible contamination is illustrated in  Figure~\hyperref[fig:network_misspec_example]{1(C)}. 
\end{example}

\subsection{Estimation bias}
Given the specified network $\bA^{sp}$, the mean potential outcomes $\mu(c_\ell)$ are often estimated by the Horvitz-Thompson (HT) estimator \citep{ugander2013, aronow_estimating_2017}
\begin{equation}
    \label{eq:HT_estimator}
    \hat{\mu}_{\bA^{sp}}(c_\ell) = \frac{1}{n}\sum_{i=1}^{n}\frac{\mathbb{I}\{f(\bZ,\bA^{sp}_i)=c_\ell\}}{p_i^{(\bA^{sp})}(c_\ell)} Y_i.
\end{equation}
Let $\widetilde{n}(\bA,c_\ell) \defeq \sum_{i=1}^{n}\frac{\mathbb{I}\{f(\bZ,\bA_i)=c_\ell\}}{p_i^{(\bA)}(c_\ell)}$. Alternatively, the Hajek estimator,
\begin{equation}
    \label{eq:hajek_estimator}
    \hat{\mu}^{H}_{\bA^{sp}}(c_\ell) = 
    \frac{1}{\widetilde{n}(\bA^{sp},c_\ell)}\sum_{i=1}^{n}\frac{ 
    \mathbb{I}\{f(\bZ,\bA^{sp}_i)=c_\ell\}}{p_i^{(\bA^{sp})}(c_\ell)} Y_i,
\end{equation}
is known to have better finite-sample accuracy \citep{Saerndal2003}. Subsequently, $\tau(c_\ell,c_k)$ is estimated by the plug-in HT estimator
$\hat{\tau}_{\bA^{sp}}(c_\ell,c_k) = \hat{\mu}_{\bA^{sp}}(c_\ell) - \hat{\mu}_{\bA^{sp}}(c_k)$, and similarly for the Hajek estimator $\hat{\tau}^H_{\bA^{sp}}$.
The researcher estimates the causal effects with $\bA^{sp}$, which, as previously indicated, may or may not be in $\mathscr{A}^\ast$. Namely, $\bA^{sp}$ might not correctly represent the interference structure. By replacing $Y_i$ in \eqref{eq:HT_estimator} with its definition under consistency (Assumption~\ref{ass:consistency}), we obtain that 
\begin{align}
\begin{split}
     \hat{\mu}_{\bA^{sp}}(c_\ell) &=
            \frac{1}{n}
	\sum_{i=1}^{n} \bigg[
	\underbrace{\frac{\mathbb{I}\{f(\bZ,\bA^{sp}_i)=c_\ell\}}{p_i^{(\bA^{sp})}(c_\ell)}}_{\text{Selection and weighting}} %\cdot
% \\ &\quad \quad \quad \cdot
 \underbrace{\sum_{j=1}^{L}\mathbb{I}\{f(\bZ,\bA^\ast_i)=c_j\}\widetilde{Y}_i(c_j)}_{\text{Observation}}\bigg].
\end{split}
 \label{eq:HT_decomp}
 \end{align}
Eq.~\eqref{eq:HT_decomp} highlights that unit selection and weighting are based on $\bA^{sp}$, while the observed outcomes are generated according to a network in $\mathscr{A}^\ast$. Consequently, if $\bA^{sp} \notin \mathscr{A}^{\ast}$, estimation using $\bA^{sp}$ may lead to erroneous results --- either by selecting incorrect units or by applying incorrect weights to the observed outcomes. 
% Eq.~\eqref{eq:HT_decomp} clarifies that the selection and weighting of units are performed with $\bA^{sp}$, whereas the observed outcomes are generated by a network in $\mathscr{A}^\ast$. Thus, if $\bA^{sp} \notin \mathscr{A}^{\ast}$, estimation using $\bA^{sp}$ could yield erroneous results due to the selection of wrong units or weighing the observed outcomes with possibly wrong weights. We now provide a formal result to reflect this intuition. Assume that the potential outcomes are bounded.

For any two networks $\bA, \bA' \in \mathscr{A}$, define the joint probability that unit $i$ is exposed to $c_\ell$ under $\bA$ and to $c_k$ under $\bA'$ by
\begin{equation}
\label{eq:joint_exposure_prob}
    \begin{split}
     p_i^{(\bA, \bA')}(c_\ell, c_k)
      &=
    \mathbb{E}_{\bZ}\Big[\mathbb{I}\Big\{\big(f(\bZ,\bA_i)=c_\ell \big) \cap \big(f(\bZ,\bA'_i)=c_k\big)\Big\}\Big].
%    \mathbb{E}_{\bZ}\Big[\mathbb{I}\{f(\bZ,\bA_i)=c_\ell\}\mathbb{I}\{f(\bZ,\bA'_i)=c_k\}\Big]
   % \\ &=\sum_{\bz\in\mathcal{Z}}\Pr(\bZ=\bz)\mathbb{I}\{f(\bz,\bA_i)=c_\ell\}\mathbb{I}\{f(\bz,\bA'_i)=c_k\}.
    \end{split}
\end{equation}
\begin{assumption}[Bounded potential outcomes]
    \label{ass:bounded.po}
    There exists a constant $\kappa > 0$ such that $\big\lvert \widetilde{Y}_i(c_\ell) \big \rvert \leq \kappa$ for all $i=1,\ldots,n$ and $c_\ell \in \mathcal{C}$.
\end{assumption}
The following theorem derives bounds on the absolute bias of $\hat{\mu}_{\bA^{sp}}$.
% and shows that, given the exposure mapping, the bounds depend on the divergence of $\bA^{sp}$ from networks in $\mathscr{A}^\ast$ (in terms of exposure mapping) and on the range of the potential outcomes. 
\begin{theorem}
    \label{thm:ht_bias}
    Let $\bA^\ast$ be an arbitrarily chosen network from $\mathscr{A}^\ast$,
    and let $\bA^{sp} \in \mathscr{A}$ be a network  satisfying  Definition~\ref{defi:positivity}. Under Assumptions~\ref{ass:consistency}-\ref{ass:bounded.po}, for any $c_\ell \in \mathcal{C}$,
    \begin{equation*}\Big \lvert \mathbb{E}_{\bZ}\left[\hat{\mu}_{\bA^{sp}}(c_\ell)\right] - \mu(c_\ell) \Big \rvert 
    \leq 
        \frac{2\kappa}{n}
        \sum_{i=1}^{n}
        \Big[1 - p_i(c_\ell;\bA^\ast \mid c_\ell; \bA^{sp})\Big],
     \end{equation*}
  where $p_i(c_\ell; \bA^\ast \mid c_\ell; \bA^{sp}) = \frac{p_i^{(\bA^\ast, \bA^{sp})}(c_\ell, c_\ell)}{ p_i^{(\bA^{sp})}(c_\ell)}$ is the conditional probability that unit $i$ is exposed to $c_\ell$ under $\bA^\ast$ given it is exposed to $c_\ell$ under $\bA^{sp}$. Furthermore, this bound is sharp.
\end{theorem}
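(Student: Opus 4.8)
The plan is to compute $\mathbb{E}_{\bZ}[\hat{\mu}_{\bA^{sp}}(c_\ell)]$ in closed form from the decomposition in Equation~\eqref{eq:HT_decomp} and then bound its discrepancy from $\mu(c_\ell)$ term by term. First I would take $\mathbb{E}_{\bZ}$ of \eqref{eq:HT_decomp}: since the induced potential outcomes $\tilde{Y}_i(c_j)$ are fixed and all randomness lives in $\bZ^{o}$, linearity pushes the expectation onto the product of indicators, and by the definition of the joint exposure probability in \eqref{eq:joint_exposure_prob} we have $\mathbb{E}_{\bZ}[\mathbb{I}\{f(\bZ,\bA^{sp}_i)=c_\ell\}\,\mathbb{I}\{f(\bZ,\bA^\ast_i)=c_j\}] = p_i^{(\bA^{sp},\bA^\ast)}(c_\ell,c_j)$. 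This yields
$$\mathbb{E}_{\bZ}[\hat{\mu}_{\bA^{sp}}(c_\ell)] = \frac{1}{n}\sum_{i=1}^{n}\frac{1}{p_i^{(\bA^{sp})}(c_\ell)}\sum_{j=1}^{L}\tilde{Y}_i(c_j)\,p_i^{(\bA^{sp},\bA^\ast)}(c_\ell,c_j),$$
which is well-defined because $\bA^{sp}$ satisfies positivity.

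The crucial algebraic step is a law-of-total-probability identity: because the events $\{f(\bZ,\bA^\ast_i)=c_j\}_{j=1}^{L}$ partition the assignment space, $\sum_{j=1}^{L}p_i^{(\bA^{sp},\bA^\ast)}(c_\ell,c_j) = p_i^{(\bA^{sp})}(c_\ell)$. Writing $q_i \defeq p_i(c_\ell;\bA^\ast\mid c_\ell;\bA^{sp})$, I would isolate the $j=\ell$ summand (whose coefficient is exactly $q_i$) from the $j\neq\ell$ summands in the per-unit bias $\mathbb{E}_{\bZ}[\cdots]-\tilde{Y}_i(c_\ell)$. This splits the bias of unit $i$ into a mis-weighting term $\tilde{Y}_i(c_\ell)(q_i-1)$ and a contamination term $p_i^{(\bA^{sp})}(c_\ell)^{-1}\sum_{j\neq\ell}\tilde{Y}_i(c_j)\,p_i^{(\bA^{sp},\bA^\ast)}(c_\ell,c_j)$. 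The same identity shows that the total probability mass in the contamination term equals $p_i^{(\bA^{sp})}(c_\ell)-p_i^{(\bA^{sp},\bA^\ast)}(c_\ell,c_\ell) = (1-q_i)\,p_i^{(\bA^{sp})}(c_\ell)$.

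With this structure the remaining work is routine bounding. By the triangle inequality and Assumption~\ref{ass:bounded.po}, the mis-weighting term is at most $\kappa|q_i-1| = \kappa(1-q_i)$ (using $0\le q_i\le 1$, which holds since $p_i^{(\bA^{sp},\bA^\ast)}(c_\ell,c_\ell)\le p_i^{(\bA^{sp})}(c_\ell)$), while the contamination term is at most $\kappa(1-q_i)$ by the mass computation above. Adding the two gives a per-unit bound of $2\kappa(1-q_i)$, and averaging over $i$ with a final triangle inequality delivers $\frac{2\kappa}{n}\sum_{i=1}^{n}[1-q_i]$, as claimed. I do not expect a genuine obstacle: the only subtlety is the bookkeeping that recognizes both error sources — incorrect weighting of the $c_\ell$ contribution and leakage from the other exposure levels — share the common envelope $1-q_i$, which is precisely what produces the clean factor of $2$. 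One should also note in passing that $p_i^{(\bA^\ast,\bA^{sp})}(c_\ell,c_\ell)=p_i^{(\bA^{sp},\bA^\ast)}(c_\ell,c_\ell)$ by symmetry of the indicator product, so the two orderings used for $q_i$ agree.
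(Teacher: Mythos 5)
Your proposal is correct and follows essentially the same route as the paper's proof: compute the expectation exactly as a linear combination of the $\tilde{Y}_i(c_j)$ weighted by the joint (equivalently, conditional) exposure probabilities, split off the $j=\ell$ term, and bound the mis-weighting and contamination pieces separately by $\kappa(1-q_i)$ each, using the fact that the conditional probabilities sum to one. The only cosmetic difference is that you carry joint probabilities and divide by $p_i^{(\bA^{sp})}(c_\ell)$ at the end, whereas the paper works with the conditional probabilities $p_i(c_j;\bA^\ast\mid c_\ell;\bA^{sp})$ throughout.
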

Theorem \ref{thm:ht_bias} shows that the bounds on the absolute bias of $\hat{\mu}_{\bA^{sp}}$ increase with the divergence of $\bA^{sp}$ from $\bA^\ast$, in terms of resulting exposure levels. Namely, the conditional probabilities $p_i(c_\ell; \bA^\ast \mid c_\ell; \bA^{sp})$ quantify how the extent of misspecification of $\bA^{sp}$ impacts the maximal bias. The difference between $\bA^{sp}$ and $\bA^\ast$ affects the bias only through their disagreement on the set of exposures. The absolute bias also increases with $\kappa$, the assumed bound of the potential outcomes. The maximal bias of the plug-in causal effects estimator $\hat{\tau}_{\bA^{sp}}$ follows from Theorem \ref{thm:ht_bias} and is given in Web Appendix~A. 

We also derive the exact bias of $\hat{\mu}_{\bA^{sp}}$ and $\hat{\tau}_{\bA^{sp}}$, which are found to be linear combinations of all potential outcomes with weights relating to the aforementioned conditional probabilities (Web Appendix~A). 
% Related results were obtained by \citet{Hardy2019} and \citet{Li2021} who derived the bias from using an incorrect network for a specific choice of exposure mapping while assuming $\bA^\ast$ is unique.
The following corollary states that the bias is zero when $\bA^{sp} \in \mathscr{A}^{\ast}$.
\begin{corollary}
\label{corollary:unbiased_HT}
Under the conditions stated in Theorem~\ref{thm:ht_bias}, if $\bA^{sp} \in \mathscr{A}^{\ast}$,   
$\mathbb{E}_{\bZ}\left[\hat{\mu}_{\bA^{sp}}(c_\ell)\right] =
        \mu(c_\ell)$ for all $c_\ell \in \mathcal{C}$.
\end{corollary}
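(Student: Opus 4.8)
The plan is to derive the corollary directly from Theorem~\ref{thm:ht_bias} by exploiting the freedom in the choice of $\bA^\ast$. Since the theorem holds for \emph{any} $\bA^\ast \in \mathscr{A}^\ast$ and the corollary's hypothesis is precisely $\bA^{sp} \in \mathscr{A}^\ast$, the first step is to apply the theorem with the particular choice $\bA^\ast = \bA^{sp}$.

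With this choice, the relevant conditional probability becomes $p_i(c_\ell; \bA^{sp} \mid c_\ell; \bA^{sp}) = p_i^{(\bA^{sp},\bA^{sp})}(c_\ell,c_\ell)/p_i^{(\bA^{sp})}(c_\ell)$. In the joint probability \eqref{eq:joint_exposure_prob} the two events $\{f(\bZ,\bA^{sp}_i)=c_\ell\}$ and $\{f(\bZ,\bA^{sp}_i)=c_\ell\}$ coincide, so their intersection is a single event and $p_i^{(\bA^{sp},\bA^{sp})}(c_\ell,c_\ell)=p_i^{(\bA^{sp})}(c_\ell)$; hence the ratio equals one for every $i$. Each summand $1 - p_i(c_\ell;\bA^{sp}\mid c_\ell;\bA^{sp})$ therefore vanishes, the upper bound in Theorem~\ref{thm:ht_bias} equals zero, and because that bound controls an absolute value, the bias must be exactly zero for every $c_\ell$.

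As a self-contained alternative I would argue straight from the HT decomposition \eqref{eq:HT_decomp}. Because $\bA^{sp}\in\mathscr{A}^\ast$ and all networks in $\mathscr{A}^\ast$ produce the same observed outcomes (the identity noted after Assumption~\ref{ass:consistency}), I may invoke consistency with $\bA^{sp}$ itself, so that the selection/weighting network and the observation network coincide. Then the product $\mathbb{I}\{f(\bZ^{o},\bA^{sp}_i)=c_\ell\}\,\mathbb{I}\{f(\bZ^{o},\bA^{sp}_i)=c_j\}$ is nonzero only when $j=\ell$, which collapses the inner sum to $\tilde{Y}_i(c_\ell)$. Taking the expectation over $\bZ$ replaces the remaining indicator by $p_i^{(\bA^{sp})}(c_\ell)$, which cancels the denominator and leaves $\frac{1}{n}\sum_{i=1}^{n} \tilde{Y}_i(c_\ell)=\mu(c_\ell)$.

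There is no genuinely hard step here; the proof is essentially a specialization of Theorem~\ref{thm:ht_bias}. The only point that needs care is the justification for setting $\bA^\ast=\bA^{sp}$ (equivalently, applying consistency with $\bA^{sp}$): this is legitimate precisely because every network in $\mathscr{A}^\ast$ yields identical observed outcomes, so nothing in the theorem's hypotheses is violated. Once that is granted, the cancellation of the conditional probability to one --- or, in the direct route, of $p_i^{(\bA^{sp})}(c_\ell)$ against the denominator --- is immediate.
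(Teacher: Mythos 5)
Your proposal is correct and matches the paper's own argument exactly: the paper likewise proves the corollary by choosing $\bA^\ast = \bA^{sp}$ in Theorem~\ref{thm:ht_bias}, observing that the conditional probabilities $p_i(c_\ell;\bA^{sp}\mid c_\ell;\bA^{sp})$ all equal one, so the bound vanishes. Your self-contained alternative via the decomposition \eqref{eq:HT_decomp} is a valid direct computation but adds nothing beyond the specialization argument the paper uses.
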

 The corollary follows from the fact that if $\bA^{sp} \in \mathscr{A}^{\ast}$, then in Theorem~\ref{thm:ht_bias} we can choose  $\bA^\ast=\bA^{sp}$. Thus, the conditional probabilities are all equal to one, and the bound is equal to zero.
\cite{ugander2013} and  \cite{aronow_estimating_2017} proved a similar version of Corollary~\ref{corollary:unbiased_HT} without considering the class $\mathscr{A}^\ast$ nor the bounds shown in Theorem \ref{thm:ht_bias}. 
The Hajek estimator \eqref{eq:hajek_estimator} is biased even if $\bA^{sp}\in\mathscr{A}^\ast$, but the bias can be bounded (Web Appendix~B).
% Regarding the Hajek estimator \eqref{eq:hajek_estimator}, being a ratio estimator, it is biased even if $\bA^{sp}\in\mathscr{A}^\ast$, but the bias can be bounded (Web Appendix~B).

% As mentioned before, in the specific case of the sharp null, $\widetilde{Y}_i(c_k)=\widetilde{Y}_i(c_\ell)\; (\forall i,k,\ell)$,
% the interference structure is correctly specified by any network that satisfies positivity. Thus,
% Corollary~\ref{corollary:unbiased_HT} implies that any network structure with a non-zero probability of exposures will yield an unbiased estimation.

\section{Network-misspecification-robust estimator}
\label{sec:NMR}
% \label{sec:dealing.with.miss}
As established in Section~\ref{sec:bias}, using a misspecified network may lead to biased estimation. 
We propose a solution for a common scenario where researchers observe several possible networks but are uncertain which, if any, correctly specifies the interference structure. Our proposed Network Misspecification Robust (NMR) estimator leverages multiple networks simultaneously, remaining unbiased if at least one network is correct.
% - even when we do not know which one.

Assume that researchers observe a collection  $\mathcal{A} =\big\{\bA^1,\ldots\,\bA^M\big\}$ of $M$ networks.
% but are uncertain which of the networks, if any, correctly specifies the interference structure.  
% Typically, $\mathcal{A}$ includes networks that are natural to consider (e.g., different encoding of questionnaire answers).
Define 
% \begin{equation*}
    $
    I_i^{(\mathcal{A})}(\bZ,c_{\ell}) =\prod_{\bA \in \mathcal{A}}\mathbb{I}\{f(\bZ,\bA_i)=c_{\ell}\},    
    $
% \end{equation*}
to be the indicator that equals one only if the exposure value equals $c_\ell$ under each of the networks in $\mathcal{A}$. 
Extending \eqref{eq:joint_exposure_prob}, we define the joint probability that unit $i$ has exposure value $c_\ell$ under \emph{all} $\bA \in \mathcal{A}$ by
% \begin{equation*}
    $
    p_i^{(\mathcal{A})}(c_\ell) =
    \mathbb{E}_{\bZ}\left[
    I_i^{(\mathcal{A})}(\bZ,c_\ell) 
    \right]
    $.
% \end{equation*}
Our proposed modified HT estimator of $\mu(c_\ell)$ that simultaneously utilizes the $M$ different networks is
\begin{equation}
   \label{eq:NMR_HT}
    \hat{\mu}_{\mathcal{A}}(c_\ell) =   
    \frac{1}{n}\sum_{i=1}^{n} 
    \frac{I_i^{(\mathcal{A})}(\bZ,c_{\ell})}{p_i^{(\mathcal{A})}(c_\ell)}Y_i.
\end{equation}
That is, $\hat{\mu}_{\mathcal{A}}(c_\ell)$ selects only units that has exposure value $c_\ell$ under \textit{all the networks in} $\mathcal{A}$ and weights them with the inverse of the joint probability $p_i^{(\mathcal{A})}(c_\ell)$. 
%For any two exposure values $c_\ell, c_k$,
The estimator of $\tau(c_k, c_\ell)$ is the plug-in estimator
$\hat{\tau}_{\mathcal{A}}(c_\ell,c_k) =
\hat{\mu}_{\mathcal{A}}(c_\ell) - \hat{\mu}_{\mathcal{A}}(c_k)$.
The following theorem establishes the network misspecification robustness of the proposed estimator $\hat{\mu}_{\mathcal{A}}$.
%is an unbiased estimator of $\mu$ if at least one of the networks in $\mathcal{A}$ correctly specifies the interference structure.
\begin{theorem}
\label{thm:unbiased_of_multiple_robust_HT}
    Let $\mathcal{A}$ be a collection of $M$ networks such that each of the networks satisfies Definition~\ref{defi:positivity}. Under Assumption~\ref{ass:consistency}, if $\mathcal{A} \cap \mathscr{A}^\ast \ne \emptyset$, then
    % \begin{equation*}
    $
    \mathbb{E}_{\bZ}\Big[\hat{\mu}_{\mathcal{A}}(c_\ell)\Big] = \mu(c_\ell)
    $
     for all $c_\ell \in \mathcal{C}$.
    % \end{equation*}
\end{theorem}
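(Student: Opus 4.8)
The plan is to compute $\mathbb{E}_{\bZ}[\hat{\mu}_{\mathcal{A}}(c_\ell)]$ directly, substitute the consistency expression for $Y_i^o$, and exploit the fact that the \emph{product} indicator $I_i^{(\mathcal{A})}$ forces agreement with a correctly specified network that lies inside the collection. First I would fix a network $\bA^\ast \in \mathcal{A} \cap \mathscr{A}^\ast$, which exists by the hypothesis $\mathcal{A} \cap \mathscr{A}^\ast \ne \emptyset$; the crucial point is that this $\bA^\ast$ plays a double role, belonging both to the collection $\mathcal{A}$ that defines the product indicator and to the equivalence class $\mathscr{A}^\ast$ of correctly specified networks. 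By linearity of expectation, the claim reduces to showing, for each unit $i$, that $\mathbb{E}_{\bZ}\big[I_i^{(\mathcal{A})}(\bZ^o,c_\ell)\, Y_i^o\big] = p_i^{(\mathcal{A})}(c_\ell)\,\tilde{Y}_i(c_\ell)$.

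The key observation driving the proof is that whenever $I_i^{(\mathcal{A})}(\bZ^o,c_\ell) = 1$, \emph{every} network in $\mathcal{A}$ assigns exposure $c_\ell$ to unit $i$; in particular $f(\bZ^o,\bA^\ast_i) = c_\ell$. Substituting the consistency form $Y_i^o = \sum_{j=1}^L \mathbb{I}\{f(\bZ^o,\bA^\ast_i)=c_j\}\tilde{Y}_i(c_j)$ and using that the exposure values $c_1,\dots,c_L$ are mutually exclusive, the product $I_i^{(\mathcal{A})}(\bZ^o,c_\ell)\,\mathbb{I}\{f(\bZ^o,\bA^\ast_i)=c_j\}$ collapses to $I_i^{(\mathcal{A})}(\bZ^o,c_\ell)$ when $j=\ell$ and vanishes otherwise. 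Hence $I_i^{(\mathcal{A})}(\bZ^o,c_\ell)\,Y_i^o = I_i^{(\mathcal{A})}(\bZ^o,c_\ell)\,\tilde{Y}_i(c_\ell)$, which replaces the random observed outcome by the fixed induced potential outcome. With this simplification $\tilde{Y}_i(c_\ell)$ factors out of the expectation, leaving $\mathbb{E}_{\bZ}[I_i^{(\mathcal{A})}(\bZ^o,c_\ell)] = p_i^{(\mathcal{A})}(c_\ell)$ by definition, which cancels the weighting denominator exactly; summing over $i$ and dividing by $n$ recovers $\mu(c_\ell) = \frac{1}{n}\sum_i \tilde{Y}_i(c_\ell)$.

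The main obstacle to monitor is the well-definedness of the weights, since the argument divides by $p_i^{(\mathcal{A})}(c_\ell)$ and therefore requires this joint probability to be strictly positive. Individual positivity of each network (Definition~\ref{defi:positivity}) guarantees that each \emph{marginal} probability $p_i^{(\bA)}(c_\ell)$ is positive, but it does not guarantee that all networks in $\mathcal{A}$ \emph{simultaneously} assign exposure $c_\ell$ with positive probability. To close this gap I would either invoke a joint-positivity condition on the collection $\mathcal{A}$, or adopt the convention that a unit with $p_i^{(\mathcal{A})}(c_\ell)=0$ (which forces $I_i^{(\mathcal{A})}(\bZ^o,c_\ell)=0$ almost surely) contributes nothing and is excluded from both the estimator and the target $\mu(c_\ell)$. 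Everything else is elementary once the indicator-collapse step is in place; the conceptual crux is recognizing that including a correct network inside the product indicator pins the observed outcome down to $\tilde{Y}_i(c_\ell)$ on the selection event, so that the misspecified networks in $\mathcal{A}$ only restrict \emph{which} units are selected and never corrupt the value being averaged.
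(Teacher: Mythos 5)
Your proof is correct and follows essentially the same route as the paper's: substitute the consistency expression for $Y_i^o$, use the fact that the product indicator contains the factor for a correctly specified network $\bA^\ast \in \mathcal{A} \cap \mathscr{A}^\ast$ so that mutual exclusivity of the exposure levels collapses $I_i^{(\mathcal{A})}(\bZ^o,c_\ell)\,Y_i^o$ to $I_i^{(\mathcal{A})}(\bZ^o,c_\ell)\,\tilde{Y}_i(c_\ell)$, and then cancel $p_i^{(\mathcal{A})}(c_\ell)$ against the weight. Your observation that marginal positivity of each network does not by itself guarantee $p_i^{(\mathcal{A})}(c_\ell)>0$ is a legitimate point the paper leaves implicit; a joint-positivity condition on $\mathcal{A}$ is indeed what is needed for the estimator to be well defined, and of your two proposed fixes that is the right one, since the exclusion convention would silently change the target away from $\mu(c_\ell)$.
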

The key property of the estimator $\hat{\mu}_{\mathcal{A}}$ is that by selecting only units with the same exposure values under each of the networks in $\mathcal{A}$, we are guaranteed to observe the correct exposure value if one of the networks is correctly specified, but agnostic to which network it is. 
% Therefore, we term $\hat{\mu}_{\mathcal{A}}$ a \textit{network misspecification robust} (NMR) estimator.
Accordingly, the plug-in estimator $\hat{\tau}_{\mathcal{A}}(c_\ell,c_k)$ is unbiased estimator of $\tau(c_\ell,c_k)$.
% if at least one network in $\mathcal{A}$ is correctly specified.
Similarly to $\hat{\mu}_{\mathcal{A}}$, we also propose the NMR Hajek estimator 
\begin{equation}
  \label{eq:NMR_hajek_esti}
    \hat{\mu}^{H}_{\mathcal{A}}(c_\ell) =   
   \frac{1}{\widetilde{n}(\mathcal{A},c_\ell)}\sum_{i=1}^{n}
    \frac{I_i^{(\mathcal{A})}(\bZ,c_{\ell}) }{p_i^{(\mathcal{A})}(c_\ell)}Y_i,
\end{equation}
where $\widetilde{n}(\mathcal{A},c_\ell) \defeq \sum_{i=1}^{n}
    \frac{I_i^{(\mathcal{A})}(\bZ,c_{\ell}) }{p_i^{(\mathcal{A})}(c_\ell)}$.
Note that $\hat{\mu}^{H}_{\mathcal{A}}$ selects the same subset of units as $ \hat{\mu}_{\mathcal{A}}$, but is biased since it is a ratio estimator. In our simulation study (Section~\ref{subsec:network_bias_simulations}), we found that both NMR estimators had a similar finite sample bias.
Building on previous work \citep{aronow2013conservative} based on Young's inequality, we derive a conservative variance estimator $\reallywidehat{Var}\big(\hat{\tau}_{\mathcal{A}}\big)$, that is, its expected value is not smaller than $Var_{\bZ}\big(\hat{\tau}_{\mathcal{A}}\big)$. The variance estimation of Hajek NMR is obtained similarly with Taylor linearization. Full details are provided in Web Appendix~C.
In Web Appendix~F, the conservativeness property is demonstrated via simulations.

The NMR estimators allow flexible combinations of multiple networks, but face a \emph{bias-variance tradeoff}. While including more networks can eliminate bias whenever at least one network is correct, it increases variance through the reduction in the number of units used in estimation and the decreased values of the joint probabilities $p_i^{(\mathcal{A})}$. This variance increase depends not only on how many networks are included, but also on how similar the networks are in terms of the induced exposure patterns -- networks with different edge sets can still yield nearly identical exposures. 
Section \ref{subsec:NMR_bias_var_sim} demonstrates this tradeoff empirically. 
We discuss practical guidelines for selecting $\mathcal{A}$ in Section~\ref{sec:disc}.

% An useful and easy to calculate metric that can help to select $\mathcal{A}$ is \emph{number of effective units} (NEU), defined as $\text{NEU}(\mathcal{A},c_k) = \sum_{i=1}^{n} I^{(\mathcal{A})}_i(\bZ,c_k)$. NEU represents the number of units used in the NMR estimator.
% For example, if researchers are not sure whether to add a network to $\mathcal{A}$, they can see how NEU declines, and if it is mild, they can add the network. 
% If researchers want to use $m$ out of $M$ available networks, they should pick the combination with the maximal NEU.
% Since NEU will typically decrease with the number of networks, deciding whether to add more networks will depend on the level of NEU decrease from adding networks. 

\subsection{Covariate adjustment}
\label{subsec:nmr_covar}

The NMR estimators can accommodate covariates $\bX_i$. The Hajek NMR estimator is equivalent to a weighted least squares (WLS) regression, where the outcomes are regressed on exposure indicators $I_i^{(\mathcal{A})}(\bZ,c_\ell)$ with weights $w_i = 1/p_i^{(\mathcal{A})}(c_\ell)$ \citep{Saerndal2003, aronow_estimating_2017}. This equivalence facilitates the straightforward inclusion of covariates in the WLS specification.
Moreover, a model-assisted approach using the difference estimator \citep{Saerndal2003, aronow_estimating_2017}, can be employed. This approach combines design-based estimation with model predictions, resembling the structure of doubly robust estimators in causal inference. See \citet{gao2023} for further analysis of model-based alternatives to design-based estimators and their associated variance estimation procedures.

\subsection{Asymptotic properties}
\label{subsec:nmr_asymptotic}

We establish asymptotic properties of the NMR estimators within a growing sequence of populations, building on recent research \citep{aronow_estimating_2017, Li2022, Saevje2023, ogburn2022}. Our analysis focuses on a collection $\mathcal{A}$ of $M$ networks containing at least one correctly specified network $(\mathcal{A} \cap \mathscr{A}^\ast \neq \emptyset)$. 
% We prove consistency and asymptotic normality of the NMR estimators \eqref{eq:NMR_HT} under further restrictions on the exposure dependence structure.
The asymptotic analysis comprises two key components: consistency and asymptotic normality. Consistency requires a weak dependence condition on units' pairwise exposures, mathematically expressed as the sum of exposure covariances having $o(n^2)$ convergence rate. To establish asymptotic normality, we construct a dependency graph that captures the exposure dependencies across the $M$ networks. This approach allows us to apply the Central Limit Theorem (CLT) developed by \citet{baldi1989normal} to our specific setting.
% Consistency, $\hat{\tau}_{\mathcal{A}}(c_\ell,c_k) \xrightarrow{p}
%  \tau(c_\ell, c_k)$ as $n \rightarrow \infty$, requires weak dependence of units pairwise exposures in the sense that 
%     $
%     \sum_{i=1}^{n}\sum_{j\neq i}
%     Cov\big(I_i^{(\mathcal{A})}(\bZ,c_\ell), I_j^{(\mathcal{A})}(\bZ,c_k)
%     \big) = o(n^2).
%     $
% For asymptotic normality, we construct a dependency graph capturing the exposure's dependence structure across the $M$ networks and show how the CLT of \citet{baldi1989normal} applies in our settings. We further establish that confidence intervals based on conservative variance estimators
Additionally, we show that confidence intervals based on the conservative variance estimators 
% \begin{equation*}
    % \label{eq:nmr_ci}
    $
    \Big[\hat{\tau}_{\mathcal{A}}(c_\ell,c_k) \pm z_{1-\alpha/2} \sqrt{\reallywidehat{Var}\big(\hat{\tau}_{\mathcal{A}}(c_\ell,c_k)\big)}\Big],
    $
% \end{equation*}
have coverage of at least $1-\alpha$ as $n \rightarrow \infty$. Detailed proofs are provided in Web Appendix~D.

\section{Simulations}
\label{sec:simulations}
We performed a simulation study consisting of two parts. Section~\ref{subsec:network_bias_simulations} illustrates the bias resulting from using a misspecified network.  Section~\ref{subsec:NMR_bias_var_sim} shows the bias-variance tradeoff of the NMR estimators in practice.

For all simulations, the exposure mapping was defined as follows. For network $\bA$ and binary treatment vector $\bz$, denote the proportion of treated neighbors of unit $i$ by $g(\bz,\bA_i) = \lvert \mathcal{N}_i(\bA) \rvert^{-1}\sum_{j=1}^{n}A_{ij}z_j$.
The heterogeneous thresholds exposure mapping is defined by
\begin{equation}
    f(\bz,\bA_i) = 
    \begin{cases}
        c_{11},& z_i\cdot \mathbb{I}\{g(\bz,\bA_i)>\nu_i\}=1 \\
        % [-12pt]
        c_{01},& (1-z_i)\cdot \mathbb{I}\{g(\bz,\bA_i)>\nu_i\}=1 \\
        % [-12pt]
        c_{10},& z_i\cdot (1-\mathbb{I}\{g(\bz,\bA_i)>\nu_i\})=1 \\
        % [-12pt]
        c_{00},& (1-z_i)\cdot (1-\mathbb{I}\{g(\bz,\bA_i)>\nu_i\})=1,
    \end{cases}
    \label{eq:4_level_exposure_func}
\end{equation}
where  $\nu_i \in [0,1)$ is a known, possibly unit-specific, threshold. The exposure mapping \eqref{eq:4_level_exposure_func} implies the exposure is a result of two components: whether unit $i$ is treated, and whether the proportion of its treated neighbors surpassed the threshold $\nu_i$.  If it is further assumed that $\nu_i =0 \;\forall i,$ \eqref{eq:4_level_exposure_func} reduces to a commonly used exposure mapping \citep{aronow_estimating_2017}. 
We generated the potential outcomes by taking $\widetilde{Y}_i(c_{00}) \sim U[0.5,1.5]$ and $\widetilde{Y}_i(c_{11}) = \widetilde{Y}_i(c_{00})+1,\; \widetilde{Y}_i(c_{10}) = \widetilde{Y}_i(c_{00})+0.5,\; \widetilde{Y}_i(c_{01}) = \widetilde{Y}_i(c_{00})+0.25$. Thresholds were sampled from $\nu_i \sim U[0,1]$ and are assumed to be known. Treatments were assigned with Bernoulli allocation $\Pr(\bZ = \bz)=0.5^n$. A single network $\bA^\ast$ was sampled from a preferential attachment random network \citep{Barabasi1999} with $n=3000$ nodes. All simulations were repeated for $1000$ iterations in each setup. 
We present and discuss our main findings here. Additional details, specifications, and results are provided in Web Appendix~F.

\subsection{Illustrations of the estimation bias}
\label{subsec:network_bias_simulations}
We considered two scenarios of network misspecification
% : incorrect reporting of social connections (Example~\ref{exmp:network_noise}) and censoring (Example~\ref{exmp:network_censored}). We 
% and examined how estimating causal effects using a misspecified network biases the results. 
\noindent 
\paragraph{Scenario (I) (Incorrect reporting of social connections)} 
We created several misspecified networks $\widetilde{\bA}$  by independently adding and removing edges from $\bA^\ast$ with probability $\eta_{1-t,t} = \Pr(\widetilde{A}_{ij}=1-t\vert A^\ast_{ij}=t),\; t=0,1$, for $i\neq j$. We took $\eta := \eta_{0,1}$, fixed $\eta_{1,0} = \eta / 100$.
% and repeated the simulations for $\eta \in \{0.025, 0.05, \dots, 0.25\}$. Treatments were assigned with Bernoulli allocation $\Pr(\bZ = \bz)=0.5^n$. For each $\eta$ value we generated one misspecified network and used it for the estimation of causal effects. 
\noindent
\paragraph{Scenario (II) (Censoring)} 
Censoring of edges in $\bA^\ast$ was created by randomly removing edges of units with more than $K$ edges to obtain a maximum degree of $K \in \{1,\ldots,7\}$. 
\begin{figure}[!ht]
\centering
\includegraphics[scale=0.04]{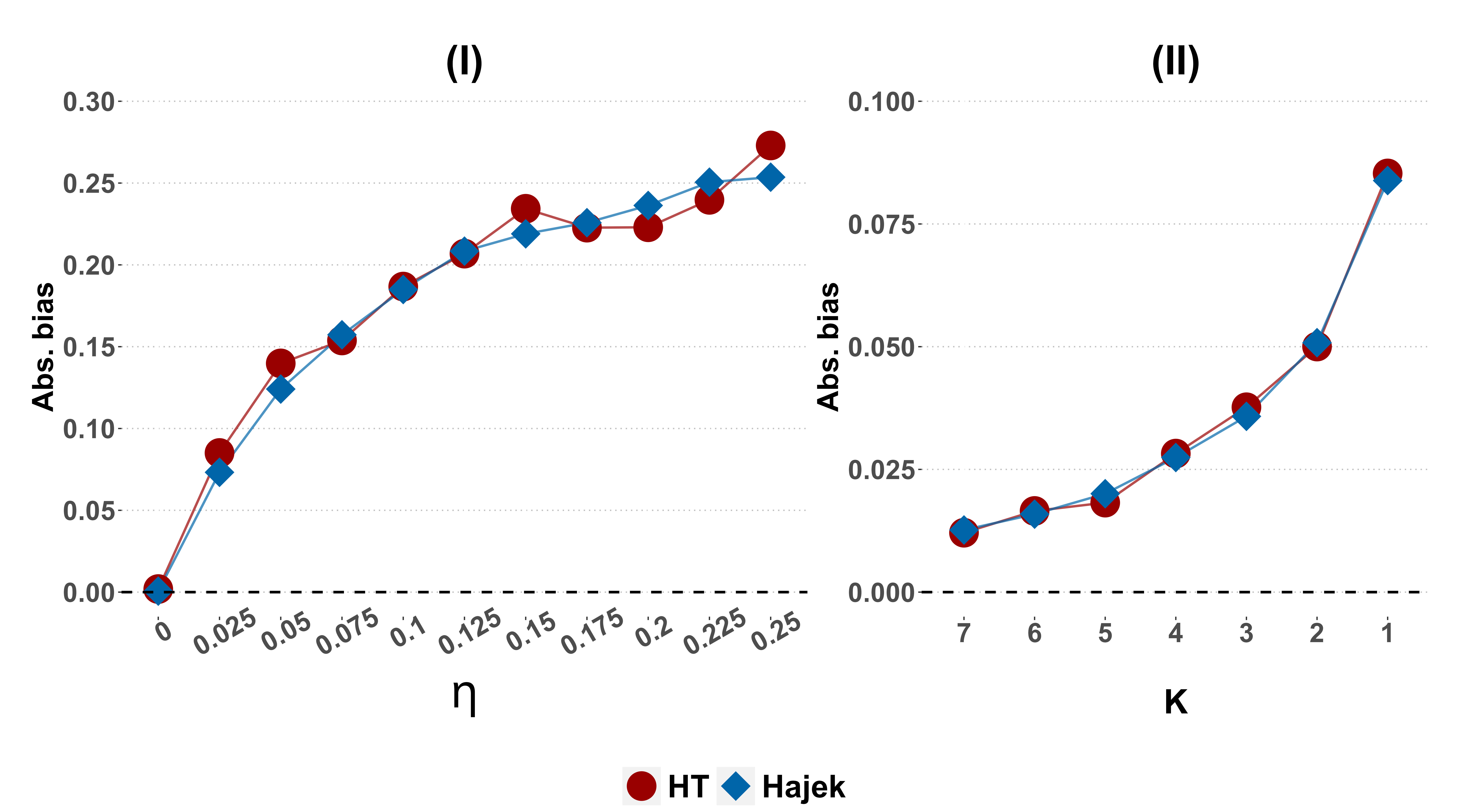}
% \captionsetup{font=small}
\caption{Absolute bias $(\lvert Ave(\hat{\tau})-\tau \rvert)$ due to misspecified network. In Scenarios (I) and (II), $\tau(c_{11},c_{00})$ and $\tau(c_{10},c_{00})$, respectively, were estimated with both HT (red circles) and Hajek (blue diamonds) estimators. In Scenario (I), $\eta$  controls the misspecification level. In Scenario (II), $K$ is the censoring threshold. True causal effects are $\tau(c_{10},c_{00})=0.5,\tau(c_{11},c_{00})=1$.}
    \label{fig:simulation_bias}
    %\vspace{-1cm}
\end{figure}
Figure~\ref{fig:simulation_bias} displays the absolute bias. We report the results for the  HT \eqref{eq:HT_estimator} and Hajek \eqref{eq:hajek_estimator} estimators of the overall $\tau(c_{11},c_{00})$ and direct $\tau(c_{10},c_{00})$ effects, respectively. In Scenario (I), the magnitude of misspecification was controlled by $\eta$. When $\eta=0$, the true network was used, and, as expected from Corollary~\ref{corollary:unbiased_HT}, the bias was practically zero. The absolute bias increased with $\eta$. In Scenario (II), as the censoring threshold $K$ decreased, the censoring increased, and accordingly so was the bias. 
In both Scenarios (I) and (II), the absolute bias of the indirect effects (e.g., $\tau(c_{01},c_{00})$) was larger than that of the direct effects (e.g., $\tau(c_{10},c_{00})$) (Web Appendix~F). These results can be intuitively explained by recognizing that, under the exposure mapping \eqref{eq:4_level_exposure_func}, network misspecification may lead us to classify a person with true exposure level $c_{j0}$ to exposure level $c_{j1}$ (and vice versa), but will not affect $j$ (for either $j=0$ or $j=1$). 
The estimated Monte-Carlo bias shown here was found to be almost identical to the analytic bias (Web Appendix~F).

\subsection{Bias-variance tradeoff of the NMR estimators}
\label{subsec:NMR_bias_var_sim}
The second simulation study illustrates the bias-variance tradeoff of the NMR estimators.
We generated five misspecified networks $\bA^a,\ldots,\bA^e$ from $\bA^{\ast}$ by independently adding and removing edges using $\eta_{0,1}=0.25$ and $\eta_{1,0}=\eta_{0,1}/100$ with $\eta_{1-t,t}$ as defined in Section~\ref{subsec:network_bias_simulations}.  In total, there were six available networks. The NMR estimators were computed under each of the $\binom{6}{M}$ possible combinations of $\mathcal{A}$ specifications for each $M=1,\ldots,6$. For example, if $M=2$, these possible $\mathcal{A}$ combinations are $\Big\{\{\bA^\ast, \bA^a\},\{\bA^\ast, \bA^b\},\ldots, \{\bA^d, \bA^e\}\Big\}$.

Figure~\ref{fig:NMR_bias_var_plot} shows the absolute bias, standard deviation (SD), and root mean squared error (RMSE) of the Hajek NMR estimator for the indirect effect $\tau(c_{11},c_{10})$. The bias was practically zero whenever $\bA^\ast \in \mathcal{A}$, and larger than zero otherwise. The SD increased with $M$, regardless if $\bA^\ast$ was included, due to the smaller effective sample size.
Interestingly, when $\bA^\ast$ was not included, the bias and RMSE decreased with the number of networks used in the NMR estimator. This phenomenon was stable in all setups and estimands. Additional results, networks' similarity, and empirical coverage are in Web Appendix~F.
% Results for additional estimands and settings (which were qualitatively similar), measures of the networks' similarity, and empirical coverage can be found in Web Appendix~F. 
We conducted additional simulations in semi-experimental settings by taking the four networks from \citet{Paluck2016} study (see Section \ref{sec:data} for more details on the networks) as $\mathcal{A}$, and simulating treatments and outcomes with the same DGP. 
The results are qualitatively the same (Web Appendix~F).

% as in our first set of simulations in this study while taking the main network used by \citet{Paluck2016} as the true network. The results are qualitatively the same (Web Appendix~F).

\begin{figure}[!ht]
    \centering
    \includegraphics[scale=0.03]{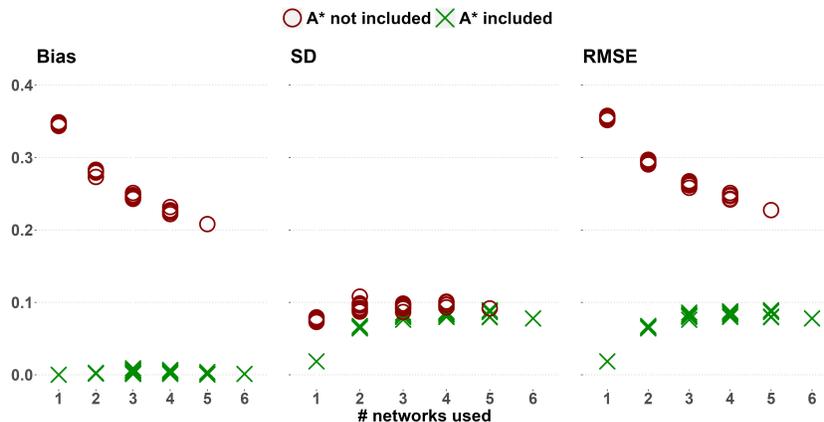}
    \caption{Bias-variance tradeoff of the NMR Hajek estimator for $\tau(c_{11},c_{10})$ as captured by absolute bias, SD, and RMSE. $X$'s indicate that the true network $\bA^{\ast}$ is included in $\mathcal{A}$, and $O$'s otherwise. True causal effect is $\tau(c_{11},c_{10})=0.5$.}
    \label{fig:NMR_bias_var_plot}
\end{figure}

\section{Data analysis} 
\label{sec:data}
We analyzed a field experiment that tested how anti-conflict norms spread in middle school social networks. Key information is provided below; full details are given in \citet{Paluck2016}. Following previous analyses \citep{aronow_estimating_2017}, we analyzed a subset of $n=2983$ eligible students from $56$ schools. Half of the schools were randomly assigned to the intervention arm, and within each selected school, half of the eligible students were given a year-long anti-conflict educational intervention. The social networks were derived from questionnaires. Students were asked to list ten students they spent time (ST) with and two best friends (BF). The questionnaires were given twice: pre- and post-intervention. This resulted in four potential network specifications: ST and BF networks measured before and after the intervention.
% The questionnaires were given twice: pre- and post-intervention. As previously discussed  (Example~\ref{exmp:repeated_measures}), a social network of middle school students is dynamic and probably changed during the study period. 
% Moreover, students' questionnaire answers might not be an accurate representation of the interference structure (Example~\ref{exmp:network_noise}). 
% Moreover, students were also asked to list their two best friends (BF).
% A network structure based on the BF list differs from that derived from the ST list. 
% Thus, researchers can choose between four network specifications: the ST or BF networks measured  pre- or post-intervention. 
A network measured in the post-intervention period is a post-treatment variable, thus using it in the estimation of causal effects implies the assumption that the intervention did not affect the network structure (see Example \ref{exmp:repeated_measures}).

We estimated the effect of the intervention on a behavior outcome (an indicator of wearing a wristband endorsing the program). Following \citet{aronow_estimating_2017}, we use the exposure mapping defined below, which is similar to 
\eqref{eq:4_level_exposure_func}, but also indicates whether the school was assigned to the intervention arm. Let $s_i$ be an indicator of whether the school of unit $i$ was included in the intervention arm. Let $g(\bz,\bA_i)$ denote the proportion of treated neighbors of unit $i$ (as defined before \eqref{eq:4_level_exposure_func}). The exposure mapping is
\begin{equation*}
    f(\bz,\bA_i) = 
    \begin{cases}
        c_{111},& z_i\mathbb{I}\{g(\bz,\bA_i)>0\} s_i=1 \\
        % [-12pt]
        c_{011},& (1-z_i)\mathbb{I}\{g(\bz,\bA_i)>0\}s_i=1 \\
        % [-12pt]
        c_{101},& z_i(1-\mathbb{I}\{g(\bz,\bA_i)>0\})s_i=1 \\
        % [-12pt]
        c_{001},& (1-z_i) (1-\mathbb{I}\{g(\bz,\bA_i)>0\})s_i=1 \\
        % [-12pt]
        c_{000},& (1-s_i) = 1
    \end{cases}
\end{equation*}
We estimated causal effects using two pre-intervention networks individually, both pre-intervention networks and all four networks simultaneously using NMR estimators.
Figure~\ref{fig:palluck} displays the Hajek estimates and $95\%$ confidence intervals of the indirect effect $\tau(c_{011},c_{000})$ and the overall effect $\tau(c_{111},c_{000})$.
Point estimates were consistent across network specifications and combinations in the overall effect estimation. Analysis with all four networks (``ALL") resulted in lower point estimates for the indirect effect. Notably, both indirect and overall effects across all network combinations were statistically nonsignificant, suggesting the intervention may not have substantially altered student behaviors. 
These results reveal the robustness of estimated effects to network specifications and highlight the applicability of the NMR estimators. Additional findings are given in Web Appendix~F.

\begin{figure}[!ht]
    \centering
    \includegraphics[width=0.9\linewidth]{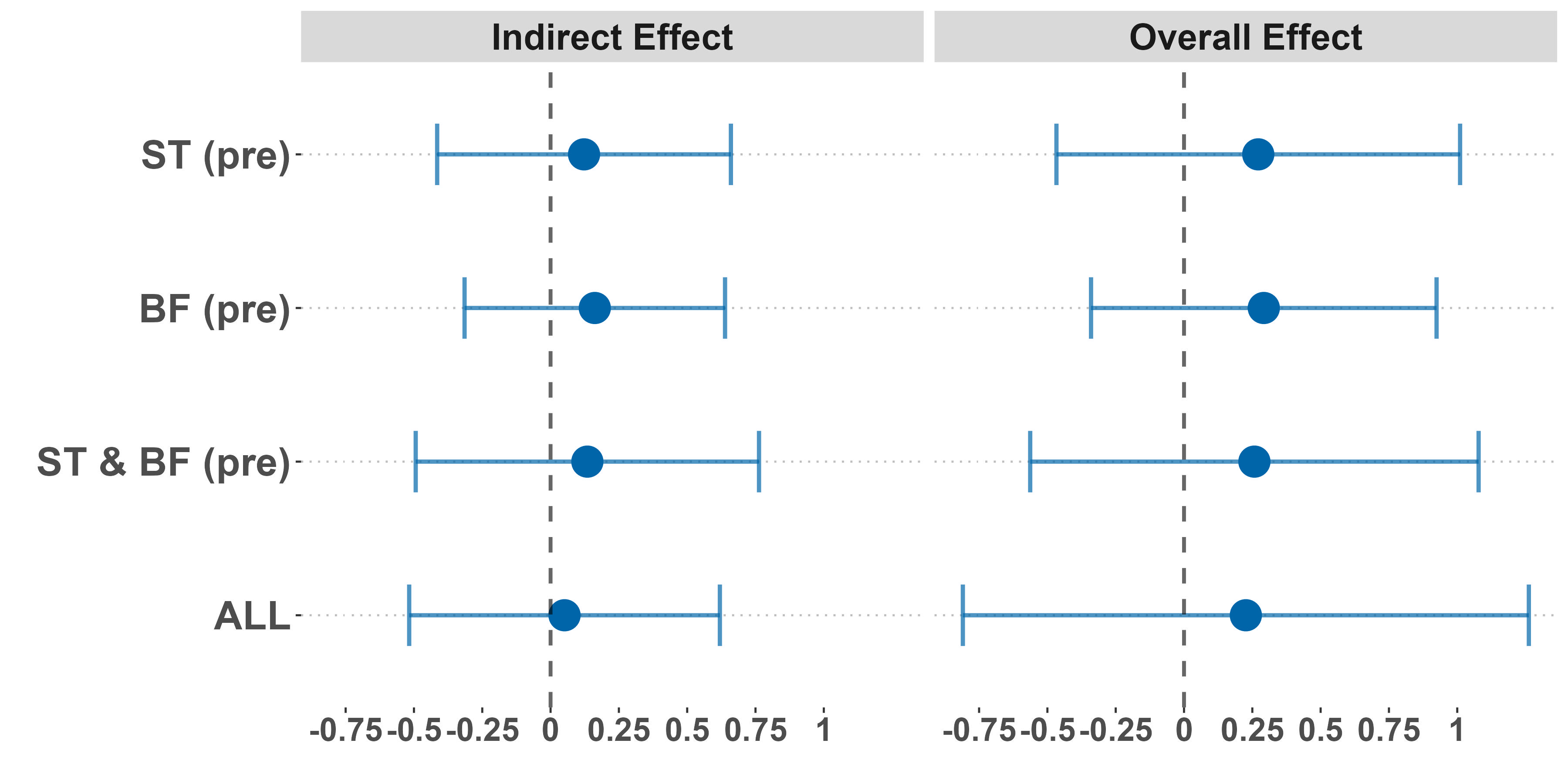}
    \caption{Estimated causal effects in the social network field experiment.
    Indirect and overall effects refer to $\tau(c_{011},c_{000})$ and $\tau(c_{111},c_{000})$, respectively.
    Point estimates and $95\%$ confidence intervals are based on Hajek estimates. ``ST \& BF (pre)" represents the combined pre-intervention networks, while ``ALL" is the four networks combined, both estimated using NMR estimators.}
    \label{fig:palluck}
\end{figure}

\section{Discussion}
\label{sec:disc}

Constructing an interference network from social information requires making additional assumptions and choices. When collecting data through surveys or questionnaires, researchers must consider multiple options, including reciprocity (Example \ref{exmp:reciprocity}), question selection (Example \ref{exmp:network_noise}), and timing (Example \ref{exmp:repeated_measures}). These choices can yield multiple networks, each capturing different aspects of social interactions. Beyond surveys, social networks can be obtained from geospatial data or online interactions. These options result in multi-layer networks measured on the same units but with different edge sets. 
Traditionally, methods have relied on specifying a single network. Our NMR estimators enable researchers to leverage multiple network data sources simultaneously. 

However, this flexibility comes with a bias-variance tradeoff. Each added network may lower the number of units with shared exposures across all networks in $\mathcal{A}$, which can be quantified by the \emph{number of effective units}, defined as $\text{NEU}(\mathcal{A},c_k) = \sum_{i=1}^{n} I^{(\mathcal{A})}_i(\bZ,c_k)$, representing the number of units used in the NMR estimators. $\text{NEU}(\mathcal{A},c_k)$ is decreasing in the number of networks used, regardless of whether $\bA^\ast \in \mathcal{A}$.
% Adding a network $\bA'$ yields a reduction of $\Delta \text{NEU}_{\mathcal{A}}(\bA',c_k) =\text{NEU}(\mathcal{A}\cup \bA',c_k) - \text{NEU}(\mathcal{A},c_k) $ in the effective sample size.
In our simulations (Section~\ref{subsec:NMR_bias_var_sim}), the bias and RMSE decreased when using more incorrect networks ($\bA^\ast \not\in \mathcal{A}$), while RMSE increased slightly when combining $\bA^\ast$ with incorrect networks. 
% Therefore, an easy-to-compute heuristic to select $\mathcal{A}$:  (1) sort available networks in order of prior relevance, (2) compute $\text{NEU}$ for each additional network in order, (3) select all networks until the percentage drop in $\text{NEU}$ reached some threshold. In the simulations, we found that setting a threshold of $\sim25\%$ reduced RMSE by $\sim 60\%$ when $\bA^\ast$ was not included (Web Appendix~F).

% Another limitation of the NMR estimators is in scenarios where researchers observe a single network $\bA^{sp}$ but are unsure if it is correctly specified. Ideally, $\bA^{sp}$ can be augmented such that we create multiple possible extension of it and use the NMR on the augmented set of networks. However, the combinatorial space of undirected networks is huge and this will entail computational limitations, in addition to the above mentioned bias-variance tradeoff.

Another limitation of the NMR estimators arises when researchers observe a single network $\bA^{sp}$ but are unsure whether it is correctly specified. Ideally, researchers could augment $\bA^{sp}$ to create multiple candidate networks, for instance by considering sets of networks derived through all possible additions or removals of edges, and subsequently apply the NMR estimator to this augmented set. However, because the number of possible augmentations grows on the order of $O\left(2^{n^2}\right)$, explicitly enumerating all compatible networks quickly becomes computationally infeasible. While heuristic or sampling-based approaches might mitigate this computational barrier, the bias-variance tradeoff still restricts their practical applicability. 
% \textcolor{red}{TODO: We found that using more networks (larger $M$), yielded smaller errors, even if all the networks are misspecified... write something like this here}

When researchers suspect that both the network and the mapping are misspecified, the NMR estimator can still be used to estimate an \textit{expected exposure effect} \citep{Saevje2023}, which does not assume the exposure mapping is correct (Web Appendix~E).
Furthermore, if, for a given network, researchers can postulate different exposure mappings with the same image space $\mathcal{C}$, but are unsure which map is correct, a modified NMR estimator that estimates causal effects only on units with the same exposure value under all mappings can be constructed. This estimator will be unbiased if one of the mappings is correct, thus providing robustness to exposure mapping specification (see Web Appendix~E for a sketch of the proof).

Our design-based approach assumes that randomness arises only from treatment assignments and takes outcomes as fixed. However, network misspecification could similarly undermine model-based approaches. Adapting NMR-style network aggregation in model-based settings constitutes a promising direction for future research.

% Taking exposure mapping as a map between treatments and networks to exposure levels enabled us to 
We discern between two types of exposure mapping misspecification: incorrect mapping and wrong network. Although previous research has focused mainly on incorrect mapping \citep{aronow_estimating_2017, Saevje2023}, it is plausible that both the mapping and the network are incorrect. %Simultaneously testing the correct specification of the map and the network is an interesting area of future research.
Randomization tests have been developed to test exposure mapping specification without distinguishing whether the mapping or network is misspecified
\citep{Athey2018,Basse2019_rand,Puelz2022, Hoshino2023}. 
% \citet{Athey2018} also considered the null hypothesis that only one network among two ``competing networks" is correctly specified. %without additionally testing the mapping specification. 
An important avenue for future research involves adapting these tests to evaluate a joint null hypothesis of network and mapping correctness. This could be achieved by testing the intersection of multiple null hypotheses of exposure mapping specifications, potentially by modifying the ``exclusion restriction" condition proposed by \citet{Puelz2022}. However, computational and statistical power limitations present significant challenges in implementing such tests. % Extending such tests to determine whether the map, the network, or both are misspecified will be valuable for practitioners.

% \backmatter

%  This section is optional.  Here is where you will want to cite
%  grants, people who helped with the paper, etc.  But keep it short!
\vspace*{-8pt}

\section*{Acknowledgments}

DN gratefully acknowledges support from the Israel Science Foundation (ISF grant No. 827/21). BW is supported by the Israeli Council For Higher Education Data Science Fellowship.

\vspace*{-8pt}

\section*{Supplementary Materials}

Web Appendices, Tables, and Figures referenced in Sections~\ref{sec:notation.ass.estimand}-\ref{sec:disc}, the R code used in the simulation study, and the R package for implementing the proposed method are available below. R code to reproduce the simulation study is also available at \url{https://github.com/barwein/CI-MIS}. In addition, the R package implementing the network-misspecification-robust estimators is also available at \url{https://github.com/barwein/Misspecified_Interference}. 

\vspace*{-8pt}

\section*{Data Availability}
The data underlying this article are available in ICPSR at \url{https://www.icpsr.umich.edu/web/ICPSR/studies/37070}.

\bibliographystyle{biom} 
\bibliography{network}

\newpage

\appendix
\begin{center}
    \LARGE
    \textbf{Web Appendix}
\end{center}

% Reset numbering and change section names
\setcounter{proposition}{0} 
\renewcommand{\theproposition}{A.\arabic{proposition}}

\renewcommand{\theequation}{\thesection.\arabic{equation}}
\numberwithin{equation}{section}

\renewcommand\thefigure{\thesection.\arabic{figure}}    
\counterwithin{figure}{section}

\renewcommand\thetable{\thesection.\arabic{table}}    
\counterwithin{table}{section}

\setcounter{assumption}{2} 

\section{Proofs}
\label{apdx.sec:proof}
\subsection{Uniqueness of the interference network}

We now formalize that the uniqueness property holds under the exposure mapping framework under further assumptions on the exposure mapping and the potential outcomes. The following two assumptions are required only to illustrate the uniqueness of the correct network and are not needed for the theoretical guarantees we provide in subsequent sections.
\begin{assumption}
\label{ass:exposure_function_differences}
    For all $\bA,\bA' \in \mathscr{A}$ that satisfy  Definition~1 (positivity), if $\bA \neq \bA'$, there exists $\bz \in \mathcal{Z}$ such that  for some $i$, 
    %$\Pr(\bZ=\bz)>0$ and 
    $f(\bz,\bA_i)\neq f(\bz,\bA'_i)$.
\end{assumption}
 Assumption~\ref{ass:exposure_function_differences} states that for any two different networks, there is a treatment vector that results in two different exposure values for at least one unit. In the next subsection, we show that an extended version of a commonly assumed exposure mapping \citep{aronow_estimating_2017}, which we also utilize in this paper (Eq. (10)), satisfies Assumption~\ref{ass:exposure_function_differences}. The following assumption states that the sharp null hypothesis does not hold. 
\begin{assumption}
    \label{ass:alt_to_strong_null}
    $\widetilde{Y}_i(c_\ell)\neq \widetilde{Y}_i(c_k)$ , for all $c_\ell \neq c_k \in \mathcal{C},\; i=1,\dots,n$.
\end{assumption}
 Assumption~\ref{ass:alt_to_strong_null} is strong and is only needed for the following lemma.
\begin{proposition}
    \label{prop:network_unique}
    Assume there exists a network $\bA^\ast \in \mathscr{A}$ that satisfies Definition~2 (correctly specified interference structure). Then,
    under Assumptions~\ref{ass:exposure_function_differences}-\ref{ass:alt_to_strong_null}, $\bA^\ast$ is unique. 
\end{proposition}
In the contrapositive, when $\bA^\ast$ is not unique, at least one of Assumptions \ref{ass:exposure_function_differences} and \ref{ass:alt_to_strong_null} does not hold.
If Assumption~\ref{ass:exposure_function_differences} does not hold, there exist at least two different networks under which $f$ maps to identical values for all treatment vectors, making the networks indistinguishable in terms of the exposure values. If Assumption~\ref{ass:alt_to_strong_null} does not hold, then two different networks that yield two different exposure values $c_\ell, c_k$, for some $\bz$, will result in the same potential outcomes $\widetilde{Y}_i(c_\ell)=\widetilde{Y}_i(c_k)$ for at least one unit.

\begin{proof}
Assume in contradiction there exists another network $\bA \in \mathscr{A}$ that satisfies Definition 2, which is not $\bA^\ast$ (i.e., $\bA^\ast \neq \bA$).  Assumption \ref{ass:exposure_function_differences} implies there exists $\bz \in \mathcal{Z}$ such that $f(\bz,\bA^{\ast}_i)=c_\ell$ and  $f(\bz,\bA_i) = c_k$, for some $i$ and some $\ell \neq k$.
By Definition 2, we have that $Y_i(\bz)=\widetilde{Y}_i(c_\ell)$ and $Y_i(\bz) = \widetilde{Y}_i(c_k)$, i.e., $\widetilde{Y}_i(c_\ell) = \widetilde{Y}_i(c_k)$. However, this is in contradiction to Assumption \ref{ass:alt_to_strong_null}, thus it must be that $\bA^\ast = \bA$.
\end{proof}

 Given the (non-empty) class $\mathscr{A}^\ast$ of correctly specified networks (all networks that satisfies Definition~2), we can define the \emph{minimal} class of correctly specified networks by
    \begin{equation*}
        \mathscr{A}^\ast_{min} = \big\{\bA \in \mathscr{A}^\ast : \lvert E(\bA) \rvert = \min_{\bA' \in \mathscr{A}^\ast} \lvert E(\bA') \rvert \big\},
    \end{equation*}
    where $E(\bA)$ is the edge set of network $\bA \in \mathscr{A}$, and $\lvert E(\bA) \rvert$ is its size. That is, $\mathscr{A}^\ast_{min}$ is the class of correctly specified networks with the least number of edges.
    However, $\mathscr{A}^\ast_{min}$ is not necessarily a singleton, and there may be more than one minimal correctly specified network. To see that, we can follow a similar derivation for the proof of network uniqueness (Proposition~\ref{prop:network_unique}). 
    
    Assume there exist two networks $\bA^1,\bA^2 \in \mathscr{A}^\ast_{min}$ with $\bA^1 \neq \bA^2$. Assume that the exposure mapping satisfies Assumption~\ref{ass:exposure_function_differences} of exposure mapping distinguishability (at least for the networks in $\mathscr{A}^\ast$). That is, assume there exists a treatment assignment $\bz \in \mathcal{Z}$ such that for some unit $i$ 
    \begin{align*}
        f(\bz,\bA^1_i) &= c_\ell \\
        f(\bz,\bA^2_i) &= c_k,   
    \end{align*}
    but $c_\ell \neq c_k$. Since both $\bA^1$ and $\bA^2$ correctly specify the interference structure (satisfy Definition~2), we have
    \begin{align*}
        Y_i(\bz) &= \widetilde{Y}_i(c_\ell) \\
        Y_i(\bz) &= \widetilde{Y}_i(c_k),
    \end{align*}
    therefore, $\widetilde{Y}_i(c_\ell) = \widetilde{Y}_i(c_k)$ for some unit $i$. Thus, if we want $\mathscr{A}^\ast_{min}$ to be a singleton we have to:
    \begin{enumerate}[(i)]
        \item  Constrain the exposure mapping to have distinguishability in exposure values between networks in $\mathscr{A}^\ast_{min}$ such that two distinct networks will not yield the same exposures for all treatments and units. Otherwise, two networks with the same number of edges could still have the same effective exposures, and $\mathscr{A}^\ast_{min}$ will not be unique.
        \item Assume that the null hypothesis does not hold for some units.
    \end{enumerate}
    We show in Web Appendix~A.2 that the common four-level exposure mapping (Equation (10) in the main text), has distinguishability (i.e., satisfies Assumption~\ref{ass:exposure_function_differences}). In that case we have to assume that the sharp null does not hold to achieve uniqueness of the minimal class $\mathscr{A}^\ast_{min}$, which can be problematic.

\subsection{The heterogeneous thresholds exposure mapping satisfies 
Assumption \ref{ass:exposure_function_differences}}
\begin{proposition}
Assumption \ref{ass:exposure_function_differences} holds for the exposure mapping $(10)$.
\end{proposition}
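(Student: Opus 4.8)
The plan is to prove the statement directly by exhibiting, for any two distinct networks $\bA \neq \bA'$ that both satisfy positivity (Definition~\ref{defi:positivity}), an explicit treatment vector $\bz \in \mathcal{Z}$ and a unit $i$ with $f(\bz,\bA_i) \neq f(\bz,\bA'_i)$. First I would observe that since the \emph{same} $\bz$ is fed to both networks, the first coordinate of the exposure in $(9)$ --- the indicator of whether unit $i$ itself is treated --- is identical under $\bA$ and $\bA'$. Hence the two exposure values can differ only through the threshold component $\mathbb{I}\{g(\bz,\bA_i) > \nu_i\}$, so it suffices to make this threshold indicator disagree between the two networks for some unit and some treatment vector.

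Since $\bA \neq \bA'$, I would pick an entry where they disagree; by symmetry of the adjacency matrices I may take indices $i,j$ with $A_{ij} = 1$, $A'_{ij} = 0$, so that $j \in \mathcal{N}_i(\bA) \setminus \mathcal{N}_i(\bA')$. Writing $N = \mathcal{N}_i(\bA)$, positivity forces $|N| \ge 1$ (isolated nodes violate positivity for the threshold map, as noted after Definition~\ref{defi:positivity}). The key feature I would exploit is that treating $j$ or not changes the count of treated $\bA$-neighbors of $i$ but leaves the count of treated $\bA'$-neighbors untouched, because $j \notin \mathcal{N}_i(\bA')$. I would tune the remaining neighbors to sit exactly at the threshold under $\bA$: set $s^\ast = \lfloor \nu_i |N| \rfloor$, the unique integer with $s^\ast/|N| \le \nu_i < (s^\ast+1)/|N|$, which satisfies $0 \le s^\ast \le |N|-1$ because $\nu_i \in [0,1)$. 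Construct $\bz^0$ by treating exactly $s^\ast$ of the $|N|-1$ neighbors in $N \setminus \{j\}$ (possible since $s^\ast \le |N|-1$), leaving $j$ untreated and fixing all other coordinates (including $z_i$) arbitrarily; let $\bz^1$ agree with $\bz^0$ except that $z_j = 1$.

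Then under $\bA$ one gets $g(\bz^0,\bA_i) = s^\ast/|N| \le \nu_i$ but $g(\bz^1,\bA_i) = (s^\ast+1)/|N| > \nu_i$, so the $\bA$-threshold indicator flips from $0$ to $1$ between $\bz^0$ and $\bz^1$. Under $\bA'$, since flipping $z_j$ does not alter the treated-$\bA'$-neighbor count, $g(\bz^0,\bA'_i) = g(\bz^1,\bA'_i)$, so the $\bA'$-threshold indicator takes the same value at $\bz^0$ and $\bz^1$. A short case split then finishes the argument: if that common $\bA'$-indicator equals $0$, then at $\bz^1$ the two networks disagree ($1$ versus $0$); if it equals $1$, then at $\bz^0$ they disagree ($0$ versus $1$). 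In either case the chosen $\bz$ and unit $i$ witness $f(\bz,\bA_i) \neq f(\bz,\bA'_i)$, which is exactly Assumption~\ref{ass:exposure_function_differences}.

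The main obstacle --- and the reason for this particular construction --- is the degree normalization in $g$: treating $j$ alone fails whenever $\nu_i$ is large relative to $1/|N|$, and any attempt to force the $\bA'$-proportion below $\nu_i$ directly collides with the fact that $\mathcal{N}_i(\bA)$ and $\mathcal{N}_i(\bA')$ overlap and generally have different cardinalities. The device that sidesteps this is to require only that the $\bA'$-indicator be \emph{constant} across the single-coordinate flip at $j$ (guaranteed because $j \notin \mathcal{N}_i(\bA')$), rather than controlling its value, and then to dispatch both possibilities by the case split. I would also note that positivity enters only lightly, to guarantee $|N| \ge 1$ (and $|\mathcal{N}_i(\bA')| \ge 1$, so that $g(\cdot,\bA'_i)$ is well defined), and that Assumption~\ref{ass:alt_to_strong_null} plays no role in this particular argument.
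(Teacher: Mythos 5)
Your proof is correct, and it takes a genuinely different route from the paper's. The paper decomposes $\mathcal{N}_i(\bA)$ and $\mathcal{N}_i(\bA')$ into the shared set $\mathcal{M}_i(\bA,\bA')$ and the exclusive difference sets, then runs a multi-level case analysis ($\nu_i=0$ versus $\nu_i>0$; whether the exclusive set alone can push $g(\bz,\bA_i)$ past the threshold; and a further sub-case built around the minimal number of shared treated neighbors needed), in each case constructing a single $\bz$ that places $g(\bz,\bA_i)$ and $g(\bz,\bA'_i)$ on opposite sides of $\nu_i$. You instead use a boundary-plus-single-flip argument: pick one index $j\in\mathcal{N}_i(\bA)\setminus\mathcal{N}_i(\bA')$, load exactly $s^\ast=\lfloor \nu_i|N|\rfloor$ treated units into $N\setminus\{j\}$ so that flipping $z_j$ moves $g(\cdot,\bA_i)$ from $s^\ast/|N|\le\nu_i$ to $(s^\ast+1)/|N|>\nu_i$, note that this flip cannot move $g(\cdot,\bA'_i)$ at all, and dispatch the two possible values of the (constant) $\bA'$-indicator by choosing $\bz^1$ or $\bz^0$ accordingly. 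This is shorter, treats $\nu_i=0$ and $\nu_i>0$ uniformly, and entirely avoids having to control the $\bA'$-side proportion, which is exactly where the paper's proof needs its most delicate step (the inequality bounding $\tilde{n}_{i,a\cap a'}/a'$). The paper's version is more explicit in that it names a single witnessing $\bz$ and the specific pair of disagreeing exposure levels in each case, rather than arguing by dichotomy. Two small remarks: your appeal to ``symmetry of the adjacency matrices'' to orient the disagreeing entry as $A_{ij}=1$, $A'_{ij}=0$ is really an appeal to the symmetry of the roles of $\bA$ and $\bA'$ in Assumption~\ref{ass:exposure_function_differences} (relabel the two networks if needed), which is fine since the conclusion is symmetric in them; and you correctly observe that positivity enters only to rule out isolated nodes so that $g(\cdot,\bA'_i)$ is well defined, whereas the paper also invokes positivity inside one of its sub-cases.
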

\begin{proof}
Let $\bA\neq \bA' \in \mathscr{A}$. Since $\bA\neq \bA'$, there exists some unit $i$ with $\bA_i \neq \bA'_i$. The difference between $\bA_i$ and $\bA'_i$ can be due to the addition or removal of at least one edge. Let $d_i(\bA) = \vert \mathcal{N}_i(\bA) \vert$ be the degree of unit $i$ in network $\bA$. Assume that $d_i(\bA)=a$ and $d_i(\bA')=a'$, for some scalars $a,a' \in \mathbb{N}$. Assume WLOG that $a \geq a'$.

Denote the set of joint edges of $i$ in the two networks by $\mathcal{M}_i(\bA,\bA') = \mathcal{N}_i(\bA) \cap \mathcal{N}_i(\bA')$. Denote the complementary set of $\mathcal{N}_i(\bA)$, excluding $i$ itself, by $\mathcal{N}_i(\bA)^c = \{j\neq i : A_{ij}=0\}$, and similarly for $\mathcal{N}_i(\bA')^c$. Denote the edges difference set by $\mathcal{N}_i(\bA)\setminus \mathcal{N}_i(\bA') = \mathcal{N}_i(\bA) \cap \mathcal{N}_i(\bA')^c$. We may write $\mathcal{N}_i(\bA)$ as
    \begin{equation*}
        \begin{split}
            \mathcal{N}_i(\bA) &= 
            \left[\mathcal{N}_i(\bA) \cap \mathcal{N}_i(\bA')^c\right] \cup \left[\mathcal{N}_i(\bA) \cap \mathcal{N}_i(\bA')\right]
            \\ &=
            \left[\mathcal{N}_i(\bA) \cap \mathcal{N}_i(\bA')^c\right] \cup \mathcal{M}_i(\bA,\bA')
        \end{split}
    \end{equation*}
    Since $\mathcal{N}_i(\bA) \cap \mathcal{N}_i(\bA')^c$ and $\mathcal{M}_i(\bA,\bA')$ are disjoint, we can write $g(\bz,\bA_i)$ as
    \begin{equation*}
        g(\bz,\bA_i) = \frac{1}{a}\bigg( \sum_{j\in \mathcal{N}_i(\bA) \cap \mathcal{N}_i(\bA')^c}z_j  + \sum_{j\in\mathcal{M}_i(\bA,\bA')}z_j\bigg),
    \end{equation*}
    and similarly for $g(\bz,\bA'_i)$,
     \begin{equation*}
        g(\bz,\bA'_i) = \frac{1}{a'} \bigg( \sum_{j\in \mathcal{N}_i(\bA') \cap \mathcal{N}_i(\bA)^c}z_j  + \sum_{j\in\mathcal{M}_i(\bA,\bA')}z_j\bigg).
    \end{equation*}
    Since $a\geq a'$, the set $\mathcal{N}_i(\bA) \cap \mathcal{N}_i(\bA')^c$ is not empty.
    Now, taking $\bz$ with $z_i=1$, the possible exposure values are only $c_{10}$ and $c_{11}$. We separate the proof for the two possible cases and further separate as needed. We show that in each of these (sub) cases, one can choose a treatments vector $\bz$  such that $f(\bz, \bA_i)\ne f(\bz, \bA'_i)$ (e.g., under one network we obtain exposure level $c_{11}$ and under the other one $c_{10}$). Turning to the different cases, we start with separating the cases $\nu_i=0$ and $\nu_i>0$
    \begin{enumerate}
        \item Case 1: $\nu_i=0$. Here we can take $z_j = 0$ for all $j\in \left[\mathcal{N}_i(\bA') \cap \mathcal{N}_i(\bA)^c\right] \cup \mathcal{M}_i(\bA,\bA')$, and $z_j =1$ for at least one $j \in \mathcal{N}_i(\bA) \cap \mathcal{N}_i(\bA')^c$, to obtain a specific treatment vector $\bz$ that results with $g(\bz,\bA_i) > 0$ and $g(\bz,\bA'_i) =0$, and therefore $f(\bz,\bA_i)=c_{11}$, while $f(\bz,\bA'_i)=c_{10}$, as required.
        \item Case 2: $0<\nu_i<1$. Denote the number of edges in each of the aforementioned sets by $n_{i,a} = \vert \mathcal{N}_i(\bA) \cap \mathcal{N}_i(\bA')^c\vert, \; n_{i,a'} = \vert \mathcal{N}_i(\bA') \cap \mathcal{N}_i(\bA)^c \vert, \; n_{i,a\cap a'} = \vert \mathcal{M}_i(\bA,\bA') \vert$. We obtain that $n_{i,a} + n_{i,a\cap a'} = a$, $n_{i,a'} + n_{i,a\cap a'} = a'$, and $n_{i,a} \geq 1$. We differentiate between two cases.
        \begin{enumerate}[i.]
            \item If $\frac{n_{i,a}}{a} > \nu_i$ then for $z_j=1$ for all $j \in \mathcal{N}_i(\bA) \cap \mathcal{N}_i(\bA')^c$, and $z_j = 0$ for the rest, we obtain $g(\bz,\bA_i)> \nu_i$ while  $g(\bz,\bA'_i)=0 < \nu_i$, as required.
            \item If $\frac{n_{i,a}}{a} \leq \nu_i$, from positivity of all exposure values under both $\bA$ and $\bA'$, there must exist a set of units in $\mathcal{N}_i(\bA)$ such that $g(\bz,\bA_i) > 0$. Since $\frac{n_{i,a}}{a} \leq \nu_i$, we have to add treated units from $\mathcal{M}_i(\bA,\bA')$ for $g(\bz,\bA_i)$ to be larger than $\nu_i$, thus $\mathcal{M}_i(\bA,\bA')$ is not an empty set. Define the minimal number of such units by
            \begin{equation}
                \label{apdx.eq:minimal_n}
                \widetilde{n}_{i,a \cap a'} = \min_{\widetilde{n} \in \{1,\dots,n_{i,a\cap a'}\}} \widetilde{n},\; \text{s.t.} \; \frac{n_{i,a} + \widetilde{n}}{a} > \nu_i
            \end{equation}
            Here we also have two options.
            \begin{itemize}
                \item If $\frac{\widetilde{n}_{i,a \cap a'}}{a'} \leq \nu_i$, we can take,  $z_j=1$ for all $j \in \mathcal{N}_i(\bA) \cap \mathcal{N}_i(\bA')^c$ and for $\widetilde{n}_{i,a \cap a'}$ units from $\mathcal{M}_i(\bA,\bA')$ to obtain $g(\bz,\bA_i) > \nu_i$ and $g(\bz,\bA'_i) \leq \nu_i$, as required.
                \item If $\frac{\widetilde{n}_{i,a \cap a'}}{a'} > \nu_i$, now the previous treatments selection yields $g(\bz,\bA'_i) > \nu_i$. However, notice that $\widetilde{n}_{i,a \cap a'}$ as defined in \eqref{apdx.eq:minimal_n}, is \emph{minimal}, i.e., $\frac{n_{i,a} + \widetilde{n}_{i,a \cap a'}}{a} > \nu_i$ and $\frac{n_{i,a} + \widetilde{n}_{i,a \cap a'}-1}{a} \leq \nu_i$. Therefore,
                \begin{equation}
                \label{AppEq:finalmove}
                        \frac{\widetilde{n}_{i,a \cap a'}}{a} \ \leq \ \nu_i - \frac{n_{i,a}-1}{a} \ {\leq} \ \nu_i,
                \end{equation}
                where the last inequality in \eqref{AppEq:finalmove} holds since $n_{i,a} \geq 1$. Thus, if we take $z_j = 1$ for $\widetilde{n}_{i,a \cap a'}$ units from $\mathcal{M}_i(\bA,\bA')$, and $z_j =0$ for the rest, we obtain $g(\bz,\bA_i) \leq \nu_i$ and $g(\bz,\bA'_i) > \nu_i$, as required. 
            \end{itemize}
        \end{enumerate}
    \end{enumerate}
\end{proof}

\subsection{Proof of Theorem 1}

\begin{proof}
    Let $\bA^{sp}$ be the specified network. Let $\bA^\ast \in \mathscr{A}^\ast$ be some correctly specified network. By consistency,
    \begin{equation*}
        \begin{split}\mathbb{E}_{\bZ}\left[\hat{\mu}_{\bA^{sp}}(c_k)\right] 
        &=      \mathbb{E}_{\bZ}\left[
        \frac{1}{n}
        \sum_{i=1}^{n}
    \mathbb{I}\{f(\bZ,\bA^{sp}_i)=c_k\}\frac{1}{p_i^{(\bA^{sp})}(c_k)}
        \sum_{j=1}^{L}\mathbb{I}\{f(\bZ,\bA^\ast_i)=c_j\}
        \widetilde{Y}_i(c_j)
         \right]
     \\ &=
        \frac{1}{n}
        \mathbb{E}_{\bZ}\left[
        \sum_{i=1}^{n}
        \sum_{j=1}^{L}
    \frac{1}{p_i^{(\bA^{sp})}(c_k)}
    \mathbb{I}\{f(\bZ,\bA^{sp}_i)=c_k\}
       \mathbb{I}\{f(\bZ,\bA^\ast_i)=c_j\}
         \widetilde{Y}_i(c_j)
         \right]
     \\ &=
     \frac{1}{n}
        \sum_{i=1}^{n}
        \sum_{j=1}^{L}
    \frac{1}{p_i^{(\bA^{sp})}(c_k)}
    \mathbb{E}_{\bZ}\bigg[
       \mathbb{I}\{f(\bZ,\bA^{sp}_i)=c_k\}
       \mathbb{I}\{f(\bZ,\bA^\ast_i)=c_j\}
         \bigg]
     \widetilde{Y}_i(c_j)
    \\ &=
        \frac{1}{n}
        \sum_{i=1}^{n}
        \sum_{j=1}^{L}
    \frac{p_i^{(\bA^\ast, \bA^{sp})}(c_j, c_k)}{p_i^{(\bA^{sp})}(c_k)}
        \widetilde{Y}_i(c_j).
    \\ &=
      \frac{1}{n}
        \sum_{i=1}^{n}
        \sum_{j=1}^{L}
    p_i(c_j;\bA^\ast \mid c_k; \bA^{sp})
      \widetilde{Y}_i(c_j)
        \end{split}
    \end{equation*}
   By adding and subtracting $\mu(c_k)$ we obtain,
   \begin{equation*}
\mathbb{E}_{\bZ}\left[\hat{\mu}_{\bA^{sp}}(c_k)\right] = \mu(c_k) + 
\frac{1}{n}
        \sum_{i=1}^{n}
        \Big[\big\{p_i(c_k;\bA^\ast \mid c_k; \bA^{sp})-1\big\}\widetilde{Y}_i(c_k) +
        \sum_{j=1, j\neq k}^{L}
    p_i(c_j;\bA^\ast \mid c_k; \bA^{sp})
        \widetilde{Y}_i(c_j)\Big]
   \end{equation*} 
   Rearranging and taking absolute values on both sides yields,
   \begin{equation*}
       \begin{split}
     \Big \lvert \mathbb{E}_{\bZ}\left[\hat{\mu}_{\bA^{sp}}(c_k)\right] - \mu(c_k) \Big \rvert 
     &= \Big \lvert \frac{1}{n}
        \sum_{i=1}^{n}
        \Big[\big\{p_i(c_k;\bA^\ast \mid c_k; \bA^{sp})-1\big\}\widetilde{Y}_i(c_k) +
        \sum_{j=1, j\neq k}^{L}
    p_i(c_j;\bA^\ast \mid c_k; \bA^{sp})
        \widetilde{Y}_i(c_j)\Big] \Big \rvert
    \\ &\leq  \frac{1}{n}
        \sum_{i=1}^{n}
        \Big[\big \vert p_i(c_k;\bA^\ast \mid c_k; \bA^{sp})-1 \big \vert \cdot \vert \widetilde{Y}_i(c_k) \vert +
        \sum_{j=1, j\neq k}^{L}
    p_i(c_j;\bA^\ast \mid c_k; \bA^{sp})
        \cdot \vert \widetilde{Y}_i(c_j)\vert \Big]
        \\ &\leq
        \frac{\kappa}{n}
        \sum_{i=1}^{n}
        \Big[
        \big\vert p_i(c_k;\bA^\ast \mid c_k; \bA^{sp})-1 \big \vert  + 
        \sum_{j=1, j\neq k}^{L}
    p_i(c_j;\bA^\ast \mid c_k; \bA^{sp}) \Big]
    \\ &=
        \frac{\kappa}{n}
        \sum_{i=1}^{n}
        \Big[
         \big\vert p_i(c_k;\bA^\ast \mid c_k; \bA^{sp})-1 \big \vert + 
        1 - p_i(c_k;\bA^\ast \mid c_k; \bA^{sp}) \Big]
    \\ &=
        \frac{2\kappa}{n}
        \sum_{i=1}^{n}
        [1 - p_i(c_k;\bA^\ast \mid c_k; \bA^{sp})],  
       \end{split}
   \end{equation*}
   where the second line follows from Minkowski's inequality, the third line from Assumption 2 of bounded potential outcomes $\vert Y_i(c_j)\vert \leq \kappa,\; \forall i,j$, the fourth line since there $L$ possible exposures and their probabilities sum to one, and the fifth line since $\big\vert p_i(c_k;\bA^\ast \mid c_k; \bA^{sp})-1 \big \vert =  1 - p_i(c_k;\bA^\ast \mid c_k; \bA^{sp})$.

   Additionally, the bound is sharp. To demonstrate this, we construct a specific data-generating process that attains the bound. Assume that for a chosen exposure $c_k$, the potential outcomes are  $\widetilde{Y}_i(c_k) = -\kappa$ for all units $i$, and for all other exposure  values $\widetilde{Y}_i(c_j)=\kappa$ for all units $i$ and for all $j \ne k$. Under this construction, Assumption~2 (bounded potential outcomes) holds. We obtain,
   \begin{equation*}
       \begin{split}
           \Big \lvert \mathbb{E}_{\bZ}\left[\hat{\mu}_{\bA^{sp}}(c_k)\right] - \mu(c_k) \Big \rvert 
     &= \Big \lvert \frac{1}{n}
        \sum_{i=1}^{n}
        \Big[\big\{p_i(c_k;\bA^\ast \mid c_k; \bA^{sp})-1\big\}\widetilde{Y}_i(c_k) +
        \sum_{j=1, j\neq k}^{L}
    p_i(c_j;\bA^\ast \mid c_k; \bA^{sp})
        \widetilde{Y}_i(c_j)\Big] \Big \rvert
        \\ &=
        \Big \lvert \frac{1}{n}
        \sum_{i=1}^{n}
        \Big[\kappa\big\{1-p_i(c_k;\bA^\ast \mid c_k; \bA^{sp})\big\} +
        \kappa\sum_{j=1, j\neq k}^{L}
    p_i(c_j;\bA^\ast \mid c_k; \bA^{sp})
        \Big] \Big \rvert
        \\ &=
           \Big \lvert \frac{1}{n}
        \sum_{i=1}^{n}
        \Big[\kappa\big\{1-p_i(c_k;\bA^\ast \mid c_k; \bA^{sp})\big\} +
        \kappa \big\{1-p_i(c_k;\bA^\ast \mid c_k; \bA^{sp})\big\}
        \Big] \Big \rvert
        \\ &=
        \Big \lvert \frac{2\kappa}{n}
        \sum_{i=1}^{n}
        1 - p_i(c_k;\bA^\ast \mid c_k; \bA^{sp}) \Big \rvert
        \\&=
        \frac{2\kappa}{n}
        \sum_{i=1}^{n}
        1 - p_i(c_k;\bA^\ast \mid c_k; \bA^{sp}),
       \end{split}
   \end{equation*}
   The second equality substitutes the assumed potential outcome values. The third equality uses the fact that $\sum_{j=1, j\neq k}^{L} p_i(c_j;\bA^\ast \mid c_k; \bA^{sp}) = 1 - p_i(c_k;\bA^\ast \mid c_k; \bA^{sp})$. The final equality holds because each term $2\kappa\left(1-p_i(c_k;\bA^\ast \mid c_k; \bA^{sp})\right)$ is non-negative (since $\kappa>0$ and $p_i(c_k;\bA^\ast \mid c_k; \bA^{sp}) \le 1$), so the sum is non-negative, and the absolute value can be removed. This matches the bound, thus demonstrating its sharpness under this specific DGP.
\end{proof}
Moreover, recall that the HT estimator of causal effects is
$\hat{\tau}_{\bA^{sp}}(c_\ell,c_k) = \hat{\mu}_{\bA^{sp}}(c_\ell) - \hat{\mu}_{\bA^{sp}}(c_k)$, and that causal effects are defined as $\tau(c_\ell, c_k) = \mu(c_\ell) - \mu(c_k)$. Therefore,

\begin{equation*}
    \begin{split}
        \Big \lvert \mathbb{E}_{\bZ}\left[\hat{\tau}_{\bA^{sp}}(c_\ell,c_k)\right] - \tau(c_\ell, c_k) \Big \rvert
        &= \Big \lvert \mathbb{E}_{\bZ}\left[\hat{\mu}_{\bA^{sp}}(c_\ell) - \hat{\mu}_{\bA^{sp}}(c_k)\right] -  \{\mu(c_\ell) - \mu(c_k)\} \Big \rvert
        \\ &=
        \Big \lvert \big\{\mathbb{E}_{\bZ}\left[\hat{\mu}_{\bA^{sp}}(c_\ell)\right] -\mu(c_\ell)\big\} + \{\mu(c_k) - \mathbb{E}_{\bZ}\left[\hat{\mu}_{\bA^{sp}}(c_k)\right] \big\}  \Big \rvert
        \\ &\leq
        \Big \lvert \mathbb{E}_{\bZ}\left[\hat{\mu}_{\bA^{sp}}(c_\ell)\right] -\mu(c_\ell) \Big \rvert + \Big \lvert \mu(c_k) - \mathbb{E}_{\bZ}\left[\hat{\mu}_{\bA^{sp}}(c_k)\right] \Big \rvert
        \\ &=
        \Big \lvert \mathbb{E}_{\bZ}\left[\hat{\mu}_{\bA^{sp}}(c_\ell)\right] -\mu(c_\ell) \Big \rvert + \Big \lvert \mathbb{E}_{\bZ}\left[\hat{\mu}_{\bA^{sp}}(c_k)\right] - \mu(c_k) \Big \rvert
    \end{split}
\end{equation*}
Consequently, by Theorem 1,
\begin{equation*}
     \Big \lvert \mathbb{E}_{\bZ}\left[\hat{\tau}_{\bA^{sp}}(c_\ell,c_k)\right] - \tau(c_\ell, c_k) \Big \rvert \leq 
      \frac{2\kappa}{n}
        \sum_{i=1}^{n}
        \{1 - p_i(c_\ell;\bA^\ast \mid c_\ell; \bA^{sp})\} + 
        \{1 - p_i(c_k;\bA^\ast \mid c_k; \bA^{sp})\}
\end{equation*}

\subsection{Exact bias of the Horvitz-Thompson estimator}
In this subsection we derive the exact bias of $\hat{\tau}_{\bA^{sp}}(c_\ell,c_k)$. To that end, we can relax Assumption 2 of bounded potential outcomes. From the proof of Theorem 1 shown in the previous subsection
   \begin{equation*}
        \mathbb{E}_{\bZ}\left[\hat{\mu}_{\bA^{sp}}(c_k)\right] 
        = 
      \frac{1}{n}
        \sum_{i=1}^{n}
        \sum_{j=1}^{L}
    p_i(c_j;\bA^\ast \mid c_k; \bA^{sp})
      \widetilde{Y}_i(c_j),
    \end{equation*}
    therefore,
  \begin{equation*}
    \begin{split}
    \mathbb{E}_{\bZ}\left[\hat{\tau}_{\bA^{sp}}(c_\ell,c_k)\right] 
       &=
    \mathbb{E}_{\bZ}\left[\hat{\mu}_{\bA^{sp}}(c_\ell)-\hat{\mu}_{\bA^{sp}}(c_k)\right] 
    \\ &=
     \frac{1}{n}
        \sum_{i=1}^{n}
        \sum_{j=1}^{L}
        \left[
    p_i(c_j;\bA^\ast \mid c_\ell; \bA^{sp}) - 
     p_i(c_j;\bA^\ast \mid c_k; \bA^{sp})
    \right]
        \widetilde{Y}_i(c_j)
    \\ &=
     \frac{1}{n}
        \sum_{i=1}^{n}
        \sum_{j=1}^{L}
        \left[
    p_i(c_j;\bA^\ast \mid c_\ell; \bA^{sp}) - 
     p_i(c_j;\bA^\ast \mid c_k; \bA^{sp})
    \right]
        \widetilde{Y}_i(c_j) 
        \\ &\quad + \tau(c_\ell,c_k) - \tau(c_\ell,c_k)
    \\ &=
    \tau(c_\ell,c_k) + 
  \frac{1}{n}\sum_{i=1}^{n}\sum_{j=1}^{L}\left[q_i(c_j; \bA^{\ast} \mid c_\ell; \bA^{sp}) - q_i(c_j; \bA^{\ast} \mid c_k; \bA^{sp})\right] \widetilde{Y}_i(c_j)
  \\ &=
  \tau(c_\ell,c_k) + B(c_\ell, c_k; \bA^{sp}),
   \end{split}
   \end{equation*}
 with
    $$B(c_\ell, c_k; \bA^{sp}) = \frac{1}{n}\sum_{i=1}^{n}\sum_{j=1}^{L}\Big[q_i(c_j; \bA^{\ast} \mid c_\ell; \bA^{sp}) - q_i(c_j; \bA^{\ast} \mid c_k; \bA^{sp})\Big] \widetilde{Y}_i(c_j),$$
    and where $q_i$ are defined by
 \begin{equation*}
 q_i(c_j; \bA^{\ast} \mid c_k; \bA^{sp}) =
    \begin{cases}          
    p_i(c_j;\bA^\ast \mid c_k; \bA^{sp}) ,& j\neq k \\
    p_i(c_j;\bA^\ast \mid c_k; \bA^{sp})-1,& j= k
   \end{cases}
    \end{equation*}
That is, that bias of $\hat{\tau}_{\bA^{sp}}$ is a weighted sum of all $L$ potential outcomes $\widetilde{Y}$ with weights that relates to the conditional probabilities $p_i(c_j;\bA^\ast \mid c_k; \bA^{sp})$.

Moreover, as shown in Section 3, the sum $\sum_{j=1}^{L}\mathbb{I}\{f(\bZ,\bA^\ast_i)=c_j\}
\widetilde{Y}_i(c_j)$ is equal for all $\bA^\ast \in \mathscr{A}^\ast$. Thus, the term
$\mathbb{I}\{f(\bZ,\bA^{sp}_i)=c_k\}\sum_{j=1}^{L}\mathbb{I}\{f(\bZ,\bA^\ast_i)=c_j\}
        \widetilde{Y}_i(c_j)$ is also equal for all $\bA^\ast \in \mathscr{A}^\ast$, and by taking expectation w.r.t. $\bZ$ we obtain that $B(c_\ell, c_k; \bA^{sp})$ is equal for all $\bA^{\ast}\in\mathscr{A}^\ast$.
\subsection{Proof of Theorem 2}
\label{apdx:prof_of_NMR_unbiasedness}
\begin{proof}
Let $\mathcal{A}=\{\bA^1,\dots,\bA^M\}$ be the collection of $M$ networks. Note that $\mathcal{A} \cap \mathscr{A}^\ast \ne \emptyset$ means that for some $j,\; \bA^j \in \mathscr{A}^\ast$. Assume without loss of generality that $\bA^1 \in \mathscr{A}^\ast$ and write $\bA^1 = \bA^\ast$. We obtain

\begin{equation*}
    \begin{split}
        \mathbb{E}_{\bZ}\left[\hat{\mu}_{\mathcal{A}}(c_\ell)\right] 
        &=
        \mathbb{E}_{\bZ}\left[\frac{1}{n}
        \sum_{i=1}^{n}
    \left(\prod_{j=1}^{M}
    \mathbb{I}
    \{f(\bZ,\bA^j_i)=c_\ell\}\right)
    \frac{1}{p_i^{(\mathcal{A})}(c_\ell)}Y_i\right] 
    \\ (\text{Consistency}) &=
        \mathbb{E}_{\bZ}\left[\frac{1}{n}
        \sum_{i=1}^{n}
    \left(\prod_{j=1}^{M}
    \mathbb{I}
    \{f(\bZ,\bA^j_i)=c_\ell\}\right)
    \frac{1}{p_i^{(\mathcal{A})}(c_\ell)}\sum_{k = 1}^{L}\mathbb{I}\{f(\bZ, \bA^{\ast}_i) = c_k\}\widetilde{Y}_i(c_k)\right]
    \\ 
    &=
        \mathbb{E}_{\bZ}\Biggl[\frac{1}{n}
        \sum_{i=1}^{n}
    \left(\prod_{j=2}^{M}
    \mathbb{I}
    \{f(\bZ,\bA^j_i)=c_\ell\}\right)
    \frac{1}{p_i^{(\mathcal{A})}(c_\ell)}\cdot
    \\ &\qquad \qquad \qquad \qquad
    \mathbb{I}\{f(\bZ, \bA^1_i)=c_\ell\}\sum_{k = 1}^{L} \mathbb{I}\{f(\bZ, \bA^{\ast}_i) = c_k\}\widetilde{Y}_i(c_k)\Biggr] 
    \\
    (\bA^1 = \bA^\ast) &=
        \mathbb{E}_{\bZ}\Biggl[\frac{1}{n}
        \sum_{i=1}^{n}
    \left(\prod_{j=2}^{M}
    \mathbb{I}
    \{f(\bZ,\bA^j_i)=c_\ell\}\right)
    \frac{1}{p_i^{(\mathcal{A})}(c_\ell)}\cdot
    \\ &\qquad \qquad \qquad \qquad
    \sum_{k = 1}^{L}\mathbb{I}\{f(\bZ, \bA^{\ast}_i) = c_\ell\}\mathbb{I}\{f(\bZ, \bA^{\ast}_i) = c_k\}\widetilde{Y}_i(c_k)\Biggr] 
    \\ &\stackrel{\dagger}{=}
    \mathbb{E}_{\bZ}\left[\frac{1}{n}
        \sum_{i=1}^{n}
    \left(\prod_{j=1}^{M}
    \mathbb{I}
    \{f(\bZ,\bA^j_i)=c_\ell\}\right)
    \frac{1}{p_i^{(\mathcal{A})}(c_\ell)}\widetilde{Y}_i(c_\ell)\right] 
    \\ &=
    \frac{1}{n}
        \sum_{i=1}^{n}
    \mathbb{E}_{\bZ}\left[\prod_{j=1}^{M}
    \mathbb{I}
    \{f(\bZ,\bA^j_i)=c_\ell\}\right]
    \frac{1}{p_i^{(\mathcal{A})}(c_\ell)}\widetilde{Y}_i(c_\ell) 
    \\ &=
    \frac{1}{n}
        \sum_{i=1}^{n}
    p_i^{(\mathcal{A})}(c_\ell)
    \frac{1}{p_i^{(\mathcal{A})}(c_\ell)}\widetilde{Y}_i(c_\ell)
    \\ &=
    \frac{1}{n}
        \sum_{i=1}^{n} \widetilde{Y}_i(c_\ell)
    \\ &=
    \mu(c_\ell)
    \end{split}
\end{equation*}
Where $\dagger$ follows from the fact that $\sum_{k = 1}^{L}\mathbb{I}\{f(\bZ, \bA^{\ast}_i) = c_\ell\}\mathbb{I}\{f(\bZ, \bA^{\ast}_i) = c_k\}\widetilde{Y}_i(c_k)=\mathbb{I}\{f(\bZ, \bA^{\ast}_i) = c_\ell\}\widetilde{Y}_i(c_\ell)$. Moreover, if $\bA^\ast$ is not unique (i.e., $\mathscr{A}^\ast$ is not a singleton), the sum $\sum_{k = 1}^{L}\mathbb{I}\{f(\bZ, \bA^{\ast}_i) = c_k\}\widetilde{Y}_i(c_k)$ will be equal for any $\bA^\ast \in \mathscr{A}^\ast$, as already been established in the main text (Section 3), and thus the proof will follow using similar derivations. 
The additivity of expectation yields
\begin{equation*}
    \mathbb{E}_{\bZ}\left[\hat{\tau}_{\mathcal{A}}(c_\ell,c_k)\right] 
    =
   \mathbb{E}_{\bZ}\left[\hat{\mu}_{\mathcal{A}}(c_\ell)\right]  -
   \mathbb{E}_{\bZ}\left[\hat{\mu}_{\mathcal{A}}(c_k)\right] 
   = 
   \mu(c_\ell) - \mu(c_k) 
   =
   \tau(c_\ell,c_k).
\end{equation*}
\end{proof}
\section{Bounds on Hajek estimator bias}
\label{apdx.sec:Hajek_bound}
We consider here the NMR Hajek estimator ($(8)$ in the main text) since it is a generalization of the common Hajek estimator $(4)$. As in the proof of Theorem 2,
let $\mathcal{A}=\{\bA^1,\dots,\bA^M\}$ be the collection of $M$ networks. Assume that $\bA^j \in \mathscr{A}^\ast$ for some $j$. The Hajek estimator is given by
\begin{equation*}
    \hat{\mu}^{H}_{\mathcal{A}}(c_\ell) =   
   \frac{\sum_{i=1}^{n}
    I_i^{(\mathcal{A})}(\bZ,c_\ell) 
    \frac{1}{p_i^{(\mathcal{A})}(c_\ell)}Y_i}
    {\sum_{i=1}^{n}
    I_i^{(\mathcal{A})}(\bZ,c_\ell) 
    \frac{1}{p_i^{(\mathcal{A})}(c_\ell)}} =
    \frac{V_1}{V_2}
\end{equation*}
with $V_1$ being the numerator and $V_2$ the denominator.
As already been established in Web Appendix~\ref{apdx.sec:proof}, 
\begin{equation*}
    \begin{split}
    \mathbb{E}_{\bZ}[V_1] &=
    \mathbb{E}_{\bZ}\left[\sum_{i=1}^{n}
    I_i^{(\mathcal{A})}(\bZ,c_\ell) 
    \frac{1}{p_i^{(\mathcal{A})}(c_\ell)}Y_i\right] = \sum_{i=1}^{n}\widetilde{Y}_i(c_\ell)
    \\
    \mathbb{E}_{\bZ}[V_2] 
    &=
    \mathbb{E}_{\bZ}\left[\sum_{i=1}^{n}
    I_i^{(\mathcal{A})}(\bZ,c_\ell) 
    \frac{1}{p_i^{(\mathcal{A})}(c_\ell)}\right] = n        
    \end{split}
\end{equation*}
Thus, $\frac{\mathbb{E}_{\bZ}[V_1]}{\mathbb{E}_{\bZ}[V_2]} = \mu(c_\ell)$, i.e., the Hajek estimator is the ratio of two unbiased estimators. However, such a ratio is not unbiased in itself. The bias bound of the Hajek ratio estimator is proportional to the variance of $V_1$ and $V_2$ \citep{HARTLEY1954, Saerndal2003}
\begin{equation}
    \label{eq:hajek_bounds}
    \bigg\vert \hat{\mu}^{H}_{\mathcal{A}}(c_\ell) - \mu(c_\ell) \bigg\vert \leq 
    \sqrt{Var_{\bZ}(V_1)Var_{\bZ}(V_2)}.
\end{equation}
Under some limitation on the asymptotic network structure, it can be shown that the bias bound \eqref{eq:hajek_bounds} converges to zero \citep{ugander2013,aronow_estimating_2017,Saevje2023,Li2021}.

\section{Variance of the NMR estimators}
\label{apdx.sec:NMR_var}
In this section, we derive the variance of the NMR estimators, and, following \cite{aronow2013conservative}, suggest a conservative variance estimator.

As in the proof of Theorem 2,
let $\mathcal{A}=\{\bA^1,\dots,\bA^M\}$ be the collection of $M$ networks. 
Assume throughout that $\bA^j \in \mathscr{A}^\ast$ for some $j$, i.e., $\mathcal{A}$ contains a correctly specified network.
Define $p_{ij}^{(\mathcal{A})}(c_\ell,c_k) = \mathbb{E}_{\bZ}\left[
    I_i^{(\mathcal{A})}(\bZ,c_\ell)I_j^{(\mathcal{A})}(\bZ,c_k) 
    \right]$
    as the joint probability that units $i,j$ have exposure values $c_\ell, c_k$, respectively, under all the networks in $\mathcal{A}$, and for brevity denote $p_{ij}^{(\mathcal{A})}(c_\ell, c_\ell) = p_{ij}^{(\mathcal{A})}(c_\ell)$.
The variance of the HT NMR estimator $\hat{\tau}_{\mathcal{A}}$ $(7)$ is given by \citep{Saerndal2003}
\begin{equation}
    \label{eq:ht_ce_var}
    \begin{split}
        Var_{\bZ}\Big[\hat{\tau}_{\mathcal{A}}(c_k, c_\ell)\Big]
        = 
        Var_{\bZ}\Big[\hat{\mu}_{\mathcal{A}}(c_k)\Big] + 
        Var_{\bZ}\Big[\hat{\mu}_{\mathcal{A}}(c_\ell)\Big] - 
        2Cov_{\bZ}\bigg[\hat{\mu}_{\mathcal{A}}(c_k),\;
      \hat{\mu}_{\mathcal{A}}(c_\ell)\bigg],
    \end{split}
\end{equation}
with
\begin{equation}
    \label{eq:var_NMR_HT}
    \begin{split}
    Var_{\bZ}\Big[\hat{\mu}_{\mathcal{A}}(c_\ell)\Big]
        &=
        n^{-2}\sum_{i=1}^{n} p_i^{(\mathcal{A})}(c_\ell)
        \left(1-p_i^{(\mathcal{A})}(c_\ell)\right)
        \left( \frac{\widetilde{Y}_i(c_\ell)}{p_i^{(\mathcal{A})}(c_\ell)}\right)^2 
        \\ &+  
        n^{-2}\sum_{i=1}^{n}\sum_{j \in 
        \{j \;\vert\; j\neq i, \;  p_{ij}^{(\mathcal{A})}(c_\ell)>0\}} 
        \left(p_{ij}^{(\mathcal{A})}(c_\ell) - p_i^{(\mathcal{A})}(c_\ell)p_j^{(\mathcal{A})}(c_\ell) \right)
        \frac{\widetilde{Y}_i(c_\ell)\widetilde{Y}_j(c_\ell)}{p_i^{(\mathcal{A})}(c_\ell)p_j^{(\mathcal{A})}(c_\ell)}
        \\ &-
        n^{-2}\sum_{i=1}^{n}\sum_{j \in 
        \{j \;\vert\; j\neq i, \; p_{ij}^{(\mathcal{A})}(c_\ell)=0\}}
        \widetilde{Y}_i(c_\ell)\widetilde{Y}_j(c_\ell),
    \end{split}
\end{equation}
and,
\begin{equation}
    \label{eq:cov_NMR_ht}
    \begin{split}
    Cov_{\bZ}\bigg[\hat{\mu}_{\mathcal{A}}(c_k),\;
      \hat{\mu}_{\mathcal{A}}(c_\ell)\bigg]
    &= 
    n^{-2}\sum_{i=1}^{n}
    \sum_{j \in 
        \{j \;\vert\; j\neq i, \; p_{ij}^{(\mathcal{A})}(c_k,c_\ell)>0\}}
      \left(p_{ij}^{(\mathcal{A})}(c_k, c_\ell) - p_i^{(\mathcal{A})}(c_k)p_j^{(\mathcal{A})}(c_\ell) \right)
        \frac{\widetilde{Y}_i(c_k)\widetilde{Y}_j(c_\ell)}{p_i^{(\mathcal{A})}(c_k)p_j^{(\mathcal{A})}(c_\ell)}
        \\ &-
        n^{-2}\sum_{i=1}^{n}
         \sum_{j \in 
        \{j \;\vert\;  p_{ij}^{(\mathcal{A})}(c_k,c_\ell)=0\}}
         \widetilde{Y}_i(c_k)\widetilde{Y}_j(c_\ell).
    \end{split}
\end{equation}
The first two terms in the variance \eqref{eq:var_NMR_HT} and the first term in the covariance \eqref{eq:cov_NMR_ht} can be estimated in an unbiased manner using an unbiased Horvitz-Thompson estimator \citep{aronow2013conservative}. However, the third term in \eqref{eq:var_NMR_HT} and the  second term in \eqref{eq:cov_NMR_ht} involve potential outcomes that have zero probabilities to be jointly observed ($p_{ij}^{(\mathcal{A})} =0$), and thus, these terms are not directly estimable from the observed data. We follow \citet{aronow2013conservative} and use a conservative estimator that utilizes Young's inequality.
The inequality states that
\begin{equation*}
    \frac{a^{r}}{r} + \frac{b^q}{q} \geq ab,\quad \text{for}\; a,b>0,\; \text{and}\; \frac{1}{r} + \frac{1}{q} = 1, r,q > 0.
\end{equation*}
Thus, for $r=q=2$
\begin{equation*}
    \frac{\widetilde{Y}_i(c_k)^2}{2} + \frac{\widetilde{Y}_j(c_\ell)^2}{2} =
    \frac{\vert \widetilde{Y}_i(c_k) \vert ^2}{2} + \frac{\vert \widetilde{Y}_j(c_\ell) \vert ^2}{2}
    \geq 
    \vert \widetilde{Y}_i(c_k) \vert \cdot 
    \vert \widetilde{Y}_j(c_\ell) \vert
\end{equation*}
Since any two numbers $x,y$ satisfies $\vert x \vert \vert y\vert \geq xy$ and $\vert x \vert \vert y\vert \geq -xy$, we obtain the bounds
\begin{equation}
\label{eq:first.bound}
    -\sum_{i=1}^{n}
     \sum_{j=1}^{n}
     \widetilde{Y}_i(c_\ell)\widetilde{Y}_j(c_\ell) 
     \leq 
     \sum_{i=1}^{n}
     \sum_{j=1}^{n} \frac{\widetilde{Y}_i(c_\ell)^2}{2} +
     \frac{\widetilde{Y}_j(c_\ell)^2}{2},
\end{equation}
\begin{equation}
    \label{eq:second.bound}
     -\sum_{i=1}^{n}
     \sum_{j=1}^{n}
     \widetilde{Y}_i(c_k)\widetilde{Y}_j(c_\ell) 
     \geq 
     -\sum_{i=1}^{n}\
     \sum_{j=1}^{n} 
     \frac{\widetilde{Y}_i(c_k)^2}{2}
     +
     \frac{\widetilde{Y}_j(c_\ell)^2}{2},
\end{equation}
and the RHS in both \eqref{eq:first.bound} and \eqref{eq:second.bound} can be estimated by an Horvitz-Thompson estimator. 
We can thus use the Horvitz-Thompson variance and covariance estimators
\begin{equation*}
    \begin{split}
          \reallywidehat{Var}\Big[\hat{\mu}_{\mathcal{A}}(c_\ell)\Big] 
        &=
        n^{-2}\sum_{i=1}^{n} 
         I_i^{(\mathcal{A})}(\bZ,c_\ell)
        \left(1-p_i^{(\mathcal{A})}(c_\ell)\right)
        \left( \frac{Y_i}{p_i^{(\mathcal{A})}(c_\ell)}\right)^2
        \\ &+ 
        n^{-2}\sum_{i=1}^{n}\sum_{j \in \{j \;\vert\; j\neq i, \; p_{ij}^{(\mathcal{A})}(c_k,c_\ell)>0\}}
        \Bigg(
         I_i^{(\mathcal{A})}(\bZ,c_\ell) I_j^{(\mathcal{A})}(\bZ,c_\ell)
         \cdot
        \frac{p_{ij}^{(\mathcal{A})}(c_\ell) - p_i^{(\mathcal{A})}(c_\ell)p_j^{(\mathcal{A})}(c_\ell)}{p_{ij}^{(\mathcal{A})}(c_\ell)}
        \cdot 
        \\ &\qquad \qquad 
        \qquad \qquad \qquad \qquad \qquad \qquad \qquad 
        \frac{Y_i}{p_i^{(\mathcal{A})}(c_\ell)}
        \cdot
        \frac{Y_j}{p_j^{(\mathcal{A})}(c_\ell)} \Bigg)
        \\ &+ 
        n^{-2}\sum_{i=1}^{n}\sum_{j \in 
        \{j \;\vert\; j\neq i, \; p_{ij}^{(\mathcal{A})}(c_\ell)=0\}}
        \left(
         \frac{I_i^{(\mathcal{A})}(\bZ,c_\ell)\cdot Y_i^2}{2\cdot p_i^{(\mathcal{A})}(c_\ell)} +
         \frac{I_j^{(\mathcal{A})}(\bZ,c_\ell)\cdot Y_j^2}{2\cdot p_j^{(\mathcal{A})}(c_\ell)}
         \right)
 \end{split}
    \end{equation*}
\begin{equation*}
    \begin{split}
        \reallywidehat{Cov}\Big[\hat{\mu}_{\mathcal{A}}(c_k),\;
      \hat{\mu}_{\mathcal{A}}(c_\ell)\Big]
        &= 
        n^{-2}\sum_{i }
        \sum_{j \in 
        \{j \;\vert\; j\neq i, \; p_{ij}^{(\mathcal{A})}(c_k,c_\ell)>0\}}
         \Biggl(
         I_i^{(\mathcal{A})}(\bZ,c_k) I_j^{(\mathcal{A})}(\bZ,c_\ell)
         \cdot
        \frac{p_{ij}^{(\mathcal{A})}(c_k, c_\ell) - p_i^{(\mathcal{A})}(c_k)p_j^{(\mathcal{A})}(c_\ell)}{p_{ij}^{(\mathcal{A})}(c_k,c_\ell)}
        \\ &\qquad \qquad \qquad \qquad \qquad \qquad \qquad 
        \cdot
        \frac{Y_i}{p_i^{(\mathcal{A})}(c_k)}
        \cdot
        \frac{Y_j}{p_j^{(\mathcal{A})}(c_\ell)}
        \Biggr)
        \\ &- 
        n^{-2}
        \sum_{i }
        \sum_{j \in 
        \{j \;\vert\; p_{ij}^{(\mathcal{A})}(c_k,c_\ell)=0\}}
        \left(
         \frac{I_i^{(\mathcal{A})}(\bZ,c_k)\cdot Y_i^2}{2\cdot p_i^{(\mathcal{A})}(c_k)} +\
         \frac{I_j^{(\mathcal{A})}(\bZ,c_\ell)\cdot Y_j^2}{2\cdot p_j^{(\mathcal{A})}(c_\ell)}
         \right),
    \end{split}
\end{equation*}
to obtain a plug-in estimator of \eqref{eq:ht_ce_var}
\begin{equation}
    \label{eq:ht_ce_var_esti}
    \begin{split}
        \reallywidehat{Var}\Big[\hat{\tau}_{\mathcal{A}}(c_k, c_\ell)\Big]
        = 
        \reallywidehat{Var}\Big[\hat{\mu}_{\mathcal{A}}(c_k)\Big] + 
        \reallywidehat{Var}\Big[\hat{\mu}_{\mathcal{A}}(c_\ell)\Big] - 
        2\cdot \reallywidehat{Cov}\Big[\hat{\mu}_{\mathcal{A}}(c_k),\;
      \hat{\mu}_{\mathcal{A}}(c_\ell)\Big].
    \end{split}
\end{equation}
As formally presented below, the variance estimator \eqref{eq:ht_ce_var_esti} is a conservative estimator.
\begin{proposition}
    \label{prop:conserv_ce_estimator}
    If $\bA^j \in \mathscr{A}^\ast$ for some $j$, then
     \begin{equation*}   \mathbb{E}_{\bZ}\left[\reallywidehat{Var}(\hat{\tau}_{\mathcal{A}}(c_k, c_\ell))\right]
         \geq 
         Var_{\bZ}\Big[\hat{\tau}_{\mathcal{A}}(c_k, c_\ell)\Big] 
         , \; k,\ell = 1,\dots,L.
     \end{equation*}
\end{proposition}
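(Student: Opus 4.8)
The plan is to prove conservativeness termwise, comparing the expectation of each component of the plug-in estimator \eqref{eq:ht_ce_var_esti} against the corresponding exact component of the decomposition \eqref{eq:ht_ce_var}. The entire argument rests on a single consequence of consistency (Assumption~\ref{ass:consistency}) together with the hypothesis that $\bA^j \in \mathscr{A}^\ast$ for some $j$: write that network as $\bA^\ast \in \mathcal{A}$. Whenever $I_i^{(\mathcal{A})}(\bZ^{o},c_\ell)=1$, unit $i$ has exposure value $c_\ell$ under every network in $\mathcal{A}$, in particular under $\bA^\ast$, so consistency forces $Y_i^{o} = \tilde{Y}_i(c_\ell)$. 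Consequently, on the events where the relevant indicators are active, every observed product $Y_i^{o}Y_j^{o}$ may be replaced by the deterministic quantity $\tilde{Y}_i(\cdot)\tilde{Y}_j(\cdot)$. This is exactly what collapses the random summands of $\reallywidehat{Var}$ and $\reallywidehat{Cov}$ onto the fixed potential-outcome products of \eqref{eq:var_NMR_HT}--\eqref{eq:cov_NMR_ht} once expectations are taken; without a correct network in $\mathcal{A}$ this substitution would fail.

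First I would take $\mathbb{E}_{\bZ}$ of $\reallywidehat{Var}[\hat{\mu}_{\mathcal{A}}(c_\ell)]$ term by term, using $\mathbb{E}_{\bZ}[I_i^{(\mathcal{A})}(\bZ,c_\ell)]=p_i^{(\mathcal{A})}(c_\ell)$, $\mathbb{E}_{\bZ}[I_i^{(\mathcal{A})}(\bZ,c_\ell)I_j^{(\mathcal{A})}(\bZ,c_\ell)]=p_{ij}^{(\mathcal{A})}(c_\ell)$, and the substitution above. The first two sums then reproduce exactly the first two sums of \eqref{eq:var_NMR_HT}, i.e.\ they are unbiased. The third sum, over pairs with $p_{ij}^{(\mathcal{A})}(c_\ell)=0$, has expectation $n^{-2}\sum_i\sum_{j:\,p_{ij}=0}\tfrac{1}{2}\big(\tilde{Y}_i(c_\ell)^2+\tilde{Y}_j(c_\ell)^2\big)$, which by \eqref{eq:first.bound} dominates the exact third term $-n^{-2}\sum_i\sum_{j:\,p_{ij}=0}\tilde{Y}_i(c_\ell)\tilde{Y}_j(c_\ell)$. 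Hence $\mathbb{E}_{\bZ}[\reallywidehat{Var}[\hat{\mu}_{\mathcal{A}}(c_\ell)]]\geq Var_{\bZ}[\hat{\mu}_{\mathcal{A}}(c_\ell)]$, and identically for $c_k$. The analogous computation makes the first sum of $\reallywidehat{Cov}$ unbiased, while its second sum (over pairs with $p_{ij}^{(\mathcal{A})}(c_k,c_\ell)=0$) has expectation $-n^{-2}\sum_i\sum_{j:\,p_{ij}=0}\tfrac{1}{2}\big(\tilde{Y}_i(c_k)^2+\tilde{Y}_j(c_\ell)^2\big)$, which by \eqref{eq:second.bound} is at most the exact second term; thus $\mathbb{E}_{\bZ}[\reallywidehat{Cov}]\leq Cov_{\bZ}$.

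To conclude I would assemble these facts through the decomposition \eqref{eq:ht_ce_var}--\eqref{eq:ht_ce_var_esti}. Taking expectations linearly,
\begin{equation*}
\mathbb{E}_{\bZ}\big[\reallywidehat{Var}(\hat{\tau}_{\mathcal{A}}(c_k,c_\ell))\big] - Var_{\bZ}\big[\hat{\tau}_{\mathcal{A}}(c_k,c_\ell)\big]
= \underbrace{\Delta_k + \Delta_\ell}_{\geq\, 0} \;-\; 2\,\underbrace{\Gamma}_{\leq\, 0},
\end{equation*}
where $\Delta_m \defeq \mathbb{E}_{\bZ}[\reallywidehat{Var}[\hat{\mu}_{\mathcal{A}}(c_m)]] - Var_{\bZ}[\hat{\mu}_{\mathcal{A}}(c_m)] \geq 0$ by the previous paragraph and $\Gamma \defeq \mathbb{E}_{\bZ}[\reallywidehat{Cov}] - Cov_{\bZ} \leq 0$. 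Since $\Gamma\leq 0$ the term $-2\Gamma$ is nonnegative, so the right-hand side is nonnegative, which is the claim.

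\textbf{Anticipated obstacle.} The delicate point is the sign bookkeeping in the final assembly: the covariance enters the decomposition with coefficient $-2$, so one must verify that the Young-type bound for the covariance is oriented \emph{downward} (giving $\Gamma\leq 0$) precisely opposite to the \emph{upward} variance bounds (giving $\Delta_m\geq 0$), so that $-2$ times an underestimate becomes an overestimate. This is exactly why two separate inequalities, \eqref{eq:first.bound} and \eqref{eq:second.bound}, are required rather than one. A secondary point needing care is that the unbiasedness of the estimable sums is inherited solely from the presence of a correctly specified $\bA^\ast\in\mathcal{A}$, through the identity $Y_i^{o}=\tilde{Y}_i(c_\ell)$ on $\{I_i^{(\mathcal{A})}(\bZ^{o},c_\ell)=1\}$; this is where the hypothesis $\mathcal{A}\cap\mathscr{A}^\ast\neq\emptyset$ is genuinely used.
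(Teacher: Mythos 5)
Your proposal is correct and follows essentially the same route as the paper, which defers to the derivations of \citet{aronow2013conservative} and states exactly the two facts you rely on: the identity $I_i^{(\mathcal{A})}(\bZ,c_k)Y_i^{o}=I_i^{(\mathcal{A})}(\bZ,c_k)\tilde{Y}_i(c_k)$ granted by consistency plus $\mathcal{A}\cap\mathscr{A}^\ast\neq\emptyset$, and the unbiasedness $\mathbb{E}_{\bZ}\bigl[I_i^{(\mathcal{A})}(\bZ,c_k)/p_i^{(\mathcal{A})}(c_k)\bigr]=1$. Your termwise expansion, the two oppositely oriented Young-type bounds \eqref{eq:first.bound}--\eqref{eq:second.bound}, and the sign bookkeeping for the $-2\,\Gamma$ term are precisely the content of the cited derivations, just written out explicitly.
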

\begin{proof}
    The proof stems directly from \citet{aronow2013conservative} derivations using the fact that $\mathbb{E}_{\bZ}\left[\frac{I_i^{(\mathcal{A})}(\bZ,c_k)}{p_i^{(\mathcal{A})}(c_k)}\right] = 1$
    and that if $\bA^j \in \mathscr{A}^\ast$ for some $j$ then $I_i^{(\mathcal{A})}(\bZ,c_k)Y_i= I_i^{(\mathcal{A})}(\bZ,c_k) \widetilde{Y}_i(c_k)$.
\end{proof}
Variance estimation of the Hajek NMR estimator $(8)$ is done with first order Taylor linear approximation \citep{Saerndal2003} by replacing $Y_i$ in \eqref{eq:ht_ce_var_esti} with the residuals $U_i = Y_i - \hat{\mu}_{\mathcal{A}}^{H}(c_k)$ where $c_k$ is the observed exposure value for unit $i$. 

A numerical illustration of the conservativeness property via a simulation study is Web Appendix~\ref{apdx.sec:simulations_and_data_analyses}.

\section{Asymptotic properties of NMR estimators}
\label{apdx.sec:asymptotic}

We establish asymptotic properties of the NMR estimators in a growing sequence of populations. We assume throughout that $\mathcal{A}=\{\bA_1,\ldots,\bA_M\}$ is a collection of fixed size $M$ containing at least one correctly specified network, that is, $\mathcal{A} \cap \mathscr{A}^\ast \neq \emptyset$. Specifically, each $\bA_m \in \mathcal{A}$ is a function of $n$, i.e., $\bA_m = \bA_m(n)$. 

As in previous works \citep{aronow_estimating_2017, Leung2020, Saevje2023}, for both consistency and CLT, we have to limit the growth of the pairwise exposure's covariance. 
Define $p_{ij}^{(\mathcal{A})}(c_\ell,c_k) = \mathbb{E}_{\bZ}\left[
I_i^{(\mathcal{A})}(\bZ,c_\ell)I_j^{(\mathcal{A})}(\bZ,c_k) 
\right]$
as the joint probability that units $i$ and $j$ have exposures $c_\ell$ and $c_k$, respectively, under all the networks in $\mathcal{A}$. The exposures covariance for two units is  
$$ Cov\big(I_i^{(\mathcal{A})}(\bZ,c_\ell), I_j^{(\mathcal{A})}(\bZ,c_k)
    \big) 
    = p_{ij}^{(\mathcal{A})}(c_\ell,c_k) -  p_i^{(\mathcal{A})}(c_\ell) p_j^{(\mathcal{A})}(c_k)
$$
Note that all the above terms may change with $n$. We assume that the sum of all pairwise covariance terms satisfies the following assumption.
\begin{assumption}[Pairwise dependence]
    \label{ass:exposure_dependence}
    $
    \sum_{i=1}^{n}\sum_{j\neq i}
    \big(p_{ij}^{(\mathcal{A})}(c_\ell,c_k) -  p_i^{(\mathcal{A})}(c_\ell) p_j^{(\mathcal{A})}(c_k)\big) = o(n^2)
    $
    for all $c_\ell,c_k \in \mathcal{C}$.
\end{assumption}
We begin by showing that the NMR estimators $\hat{\tau}_{\mathcal{A}}(c_\ell,c_k)$ are consistent and then show the CLT and resulting confidence intervals based on the conservative variance estimator.

\subsection{Consistency}

\setcounter{theorem}{2}
\begin{theorem}[Consistency]
    \label{thm:consistency}
    Assume that each network in $\mathcal{A}$ satisfies Definition 1 (positivity). Under Assumptions 1,2,5, if $\mathcal{A}\cap \mathscr{A}^\ast \neq \emptyset$ then for all $c_\ell,c_k \in \mathcal{C}$, $\hat{\tau}_{\mathcal{A}}(c_\ell,c_k) - \tau(c_\ell, c_k) \xrightarrow{p} 0
     $ as $n \rightarrow \infty$, where $p$ denotes convergence in probability.
\end{theorem}

\begin{proof}
 As all networks in $\mathcal{A}$ satisfy positivity (Definition 1), there exists a constant $\kappa_2 > 0$ such that $\lvert 1/p_i^{(\mathcal{A})}(c_\ell) \rvert \leq \kappa_2$ for all $i$ and $c_\ell$.

The variance of $\hat{\mu}_{\mathcal{A}}(c_\ell)$ is (Section \ref{apdx.sec:NMR_var})

\begin{equation*}
    \begin{split}
    Var_{\bZ}\Big[\hat{\mu}_{\mathcal{A}}(c_\ell)\Big]
        &=
        n^{-2}\sum_{i=1}^{n} p_i^{(\mathcal{A})}(c_\ell)
        \left(1-p_i^{(\mathcal{A})}(c_\ell)\right)
        \left( \frac{\widetilde{Y}_i(c_\ell)}{p_i^{(\mathcal{A})}(c_\ell)}\right)^2 
        \\ &+  
        n^{-2}\sum_{i=1}^{n}\sum_{j \in 
        \{j \;\vert\; j\neq i, \;  p_{ij}^{(\mathcal{A})}(c_\ell)>0\}} 
        \left(p_{ij}^{(\mathcal{A})}(c_\ell) - p_i^{(\mathcal{A})}(c_\ell)p_j^{(\mathcal{A})}(c_\ell) \right)
        \frac{\widetilde{Y}_i(c_\ell)\widetilde{Y}_j(c_\ell)}{p_i^{(\mathcal{A})}(c_\ell)p_j^{(\mathcal{A})}(c_\ell)}
        \\ &-
        n^{-2}\sum_{i=1}^{n}\sum_{j \in 
        \{j \;\vert\; j\neq i, \; p_{ij}^{(\mathcal{A})}(c_\ell)=0\}}
        \widetilde{Y}_i(c_\ell)\widetilde{Y}_j(c_\ell)
        \\ &=
        n^{-2}\sum_{i=1}^{n} p_i^{(\mathcal{A})}(c_\ell)
        \left(1-p_i^{(\mathcal{A})}(c_\ell)\right)
        \left( \frac{\widetilde{Y}_i(c_\ell)}{p_i^{(\mathcal{A})}(c_\ell)}\right)^2 
        \\ &+  
        n^{-2}\sum_{i=1}^{n}\sum_{j \neq i} 
        \left(p_{ij}^{(\mathcal{A})}(c_\ell) - p_i^{(\mathcal{A})}(c_\ell)p_j^{(\mathcal{A})}(c_\ell) \right)
        \frac{\widetilde{Y}_i(c_\ell)\widetilde{Y}_j(c_\ell)}{p_i^{(\mathcal{A})}(c_\ell)p_j^{(\mathcal{A})}(c_\ell)}
        \\ &\leq 
         n^{-2} \Big(\frac{\kappa}{\kappa_2}\Big)^2\sum_{i=1}^{n} p_i^{(\mathcal{A})}(c_\ell)
        \left(1-p_i^{(\mathcal{A})}(c_\ell)\right)
        \\ &+ 
        n^{-2} \Big(\frac{\kappa}{\kappa_2}\Big)^2 
        \sum_{i=1}^{n}\sum_{j \neq i} 
        \left(p_{ij}^{(\mathcal{A})}(c_\ell) - p_i^{(\mathcal{A})}(c_\ell)p_j^{(\mathcal{A})}(c_\ell) \right)
        \\ &\leq 
        n^{-1} \frac{1}{2} \Big(\frac{\kappa}{\kappa_2}\Big)^2 + o(1),
    \end{split}
\end{equation*}
where the first inequality follows from Assumption 2 (bounded potential outcomes) and positivity, and the second inequality since $p_i(1-p_i)\leq 1/2$ and Assumption 5. Therefore, $Var_{\bZ}\Big[\hat{\mu}_{\mathcal{A}}(c_\ell)\Big] \rightarrow 0$ as $n \rightarrow \infty$, and from Chebyshev's inequality, $\hat{\mu}_{\mathcal{A}}(c_\ell) - \mu(c_\ell) \xrightarrow{p} 0$ as $n \rightarrow \infty$ for all $c_\ell$. From the Continuous Mapping Theorem, we obtain that 
$
\hat{\tau}_{\mathcal{A}}(c_\ell,c_k) - \tau(c_\ell, c_k) \xrightarrow{p} 0
     \; \text{as} \; n \rightarrow \infty.
$
\end{proof}

\subsection{CLT and confidence intervals}
The CLT argument is based on dependency graphs CLT derived by \citet{baldi1989normal}.

The dependency graph $G_n = (V_n,E_n)$ is an undirected graph with vertices $\vert V_n \rvert = n$ and edge set $E_n$ that describes the dependencies between exposures indicators $I_i^{(\mathcal{A})}(\bZ,c_{\ell})$. Formally, $(i,j) \in E_n$ if there exists $c_\ell,c_k \in \mathcal{C}$ such that  $I_i^{(\mathcal{A})}(\bZ,c_{\ell})$ and $I_j^{(\mathcal{A})}(\bZ,c_k)$ are dependent for $i \neq j$. Define the degrees in $G_n$ by $D_{n,i} = \lvert j: (i,j) \in E_n \rvert$ and denote the maximal degree by $D_{n,max} = \max_i D_{n,i}$. The degrees $D_{n,i}$ 
represent the number of units that have dependent exposures with $i$ in $\mathcal{A}$. The maximal degree $D_{n,max}$ correspond to the unit with the highest number of dependent exposures. We assume that this degree is bounded for each $G_n$. 

\begin{assumption}[Bounded degree]
    \label{ass:bound_deg}
    There exists a finite constant $\kappa_3$ such that $D_{n,max} \leq \kappa_3$ for all $n>1$.
\end{assumption}
Assumption~\ref{ass:bound_deg} can also be relaxed for $\kappa_3$ that grows at some sub-linear rate.
Assumption~\ref{ass:bound_deg} implies that in the limit there is a constraint on the number of units with dependent exposures. 
Under neighborhood interference and Bernoulli experimental design, Assumption~\ref{ass:bound_deg} is directly related to the degrees of the networks in $\mathcal{A}$ as it precludes a unit that is connected to all other units. 

\begin{theorem}[CLT]
    \label{thm:clt}
    Assume that each network in $\mathcal{A}$ satisfies Definition 1 (positivity). Under Assumptions 1,2,5,6, if $\mathcal{A}\cap \mathscr{A}^\ast \neq \emptyset$ then for all $c_\ell,c_k \in \mathcal{C}$, 
    
    $$\frac{\hat{\tau}_{\mathcal{A}}(c_\ell,c_k) - \tau(c_\ell,c_k)}{\sqrt{ Var_{\bZ}\Big[\hat{\tau}_{\mathcal{A}}(c_\ell,c_k)\Big]}} \xrightarrow{d}
     N(0,1),\; \text{as} \; n \rightarrow \infty,$$
     where $d$ denotes convergence in distribution.
\end{theorem}

\begin{proof}
     By Theorem 2, $\mathbb{E}_{\bZ}\Big[\hat{\tau}_{\mathcal{A}}(c_\ell,c_k)\Big] = \tau(c_\ell,c_k)$ for all $n$. From a similar derivation to the one provided in the proof of Theorem \ref{thm:consistency}, we obtain that $Var_{\bZ}\Big[\hat{\tau}_{\mathcal{A}}(c_\ell,c_k)\Big] = O(n)$. By Assumption 2 (bounded potential outcomes) and Definition 1 (positivity) all items in the estimator $\hat{\tau}_{\mathcal{A}}(c_\ell,c_k)$ are bounded. By Assumption \ref{ass:bound_deg}, $D^2_{n,max} \leq \kappa_3^2 < \infty$. Since $\lvert V_n \rvert = n$ in the dependency graph $G_n$ we obtain that $\frac{\lvert V_n \rvert}{Var_{\bZ}\Big[\hat{\tau}_{\mathcal{A}}(c_\ell,c_k)\Big]^{3/2}} \rightarrow 0$ as $n \rightarrow \infty$. The CLT thus follows from \citet[Corollary 2]{baldi1989normal}.
\end{proof}

Finally, the following theorem shows that constructing confidence intervals with the conservative variance estimator \eqref{eq:ht_ce_var_esti} are asymptotically valid.

\begin{theorem}[Confidence intervals]
    \label{thm:valid_ci}
    Define confidence interval with $1-\alpha$ confidence level by
    $$
    \reallywidehat{CI}_\alpha = \Big[\hat{\tau}_{\mathcal{A}}(c_\ell,c_k) \pm z_{1-\alpha/2} \sqrt{\reallywidehat{Var}\big(\hat{\tau}_{\mathcal{A}}(c_\ell,c_k)\big)}\Big].
    $$
    Under the conditions stated in Theorem \ref{thm:clt}, $\Pr\big(\tau(c_\ell,c_k) \in \reallywidehat{CI}_\alpha \big) \rightarrow c \geq 1-\alpha$ as $n \rightarrow \infty$.
\end{theorem}

\begin{proof}
    By Proposition \ref{prop:conserv_ce_estimator}, 
     \begin{equation*}   
     \frac{Var_{\bZ}\Big[\hat{\tau}_{\mathcal{A}}(c_k, c_\ell)\Big]}{\mathbb{E}_{\bZ}\left[\reallywidehat{Var}(\hat{\tau}_{\mathcal{A}}(c_k, c_\ell))\right]}
     \in [0,1],
 \end{equation*}
     assuming finite expectation. We can write 
     \begin{equation*}
     \reallywidehat{Var}(\hat{\tau}_{\mathcal{A}}(c_k, c_\ell)) = n^{-2}\sum_i \sum_j \phi_{ij}(\bZ),    
     \end{equation*}
    for some random variables $\phi_{ij}(\bZ)$ that depends on the indicator of exposures and other constants such as the potential outcomes and probabilities of exposures which we can bound. By positivity and bounded potential outcomes, each $\phi_{ij}$ is bounded with probability one. We have
    \begin{equation*}
        Var_{\bZ}\Big[\reallywidehat{Var}(\hat{\tau}_{\mathcal{A}}(c_k, c_\ell)) \Big] =
        n^{-4} \sum_i\sum_j\sum_{i'}\sum_{j'} Cov \big(\phi_{ij}(\bZ), \phi_{i',j'}(\bZ) \big).
    \end{equation*}
    But $Cov \big(\phi_{ij}(\bZ), \phi_{i',j'}(\bZ) \big)$ is non-zero only if $(i,j)=(i',j')$ or $i,i'$ or $j,j'$ are connected in the dependency graph $G_n$. Since the covariance can be bounded for each $i,j,i',j'$ we obtain that the entire quadruple sum is $O(n^2D_{n,max}^2)$. Thus, $Var_{\bZ}\Big[\reallywidehat{Var}(\hat{\tau}_{\mathcal{A}}(c_k, c_\ell)) \Big] = O(n^{-2} D_{n,max}^2)$ which by Assumption \ref{ass:bound_deg} will converges to zero as $n \rightarrow \infty$. 
    Consequently, 
    \begin{equation*}
        \frac{Var_{\bZ}\Big[\hat{\tau}_{\mathcal{A}}(c_k, c_\ell)\Big]}{\reallywidehat{Var}(\hat{\tau}_{\mathcal{A}}(c_k, c_\ell))} \rightarrow
      c \in [0,1], \; \text{as} \; n \rightarrow \infty.
    \end{equation*}
    From Theorem \ref{thm:clt}, the statistic 
    \begin{equation*}
        \frac{\hat{\tau}_{\mathcal{A}}(c_\ell,c_k) - \tau(c_\ell,c_k)}{\sqrt{ Var_{\bZ}\Big[\hat{\tau}_{\mathcal{A}}(c_\ell,c_k)\Big]}}
    \end{equation*}
    converges to standard normal distribution. 
    Therefore, the confidence interval 
    \begin{equation*}
        CI_\alpha = 
        \Big[\hat{\tau}_{\mathcal{A}}(c_\ell,c_k) \pm z_{1-\alpha/2} \sqrt{Var\big(\hat{\tau}_{\mathcal{A}}(c_\ell,c_k)\big)}\Big],
    \end{equation*}
    achieve nominal $1-\alpha$ coverage as $n \rightarrow \infty$. But since asymptotically $Var_{\bZ}\Big[\hat{\tau}_{\mathcal{A}}(c_k, c_\ell)\Big] \leq \reallywidehat{Var}(\hat{\tau}_{\mathcal{A}}(c_k, c_\ell))$, constructing CI with the variance estimator $\reallywidehat{CI}_\alpha$ yields
    \begin{equation*}
        1-\alpha \leq \Pr \big(\tau(c_\ell,c_k) \in CI_\alpha \big) \leq \Pr\big(\tau(c_\ell,c_k) \in \reallywidehat{CI}_\alpha \big),
    \end{equation*}
    as $n \rightarrow \infty$.
\end{proof}

\section{Exposure mapping misspecification}
\label{apdx.sec:exposure_MR}

\subsection{Expected exposure effects}

Assume that researchers estimate causal effects using the NMR estimator with a set $\mathcal{A}$ of $M$ networks. It is possible that all the networks in $\mathcal{A}$ and the exposure mapping $f$ are misspecified. However, we can use the HT (or Hajek) NMR estimators to unbiasedly and consistently estimate a variant of the \emph{expected exposure effects} defined by \citet{Saevje2023}.

Let $C^{\mathcal{A}}_i =\sum_{j=1}^{L}c_jI_i^{(\mathcal{A})}(\bZ,c_j) $ be the observed exposure for unit $i$ when all networks in $\mathcal{A}$ have the same exposure value. That is, $C^{\mathcal{A}}_i=c_\ell$ if and only if $f(\bZ,\bA_i)=c_\ell$ for all $\bA \in \mathcal{A}$. Recall that given a correct exposure mapping $f$, we defined a correctly specified interference network (Definition~2) as the network that will enable us to connect the potential outcomes $Y_i(\bZ)$ to the modified potential outcomes $\widetilde{Y}_i(c_\ell)$ expressed in terms of exposure values. If both the network and the mapping are misspecified, we cannot connect $Y(\cdot)$ to $\widetilde{Y}(\cdot)$. 
Let $\overline{Y}_i(c_\ell)=\mathbb{E}_{\bZ}\left[ Y_i(\bZ) \mid C_i^{\mathcal{A}}=c_\ell\right]$ be the expected potential outcome of unit $i$ when exposures under all the networks in $\mathcal{A}$ are $c_\ell$. Define the expected exposure effect for exposures $c_\ell,c_k \in \mathcal{C}$ as
\begin{equation}
    \label{eq.apdx:eee}
    \overline{\tau}(c_\ell,c_k)=\frac{1}{n}\sum_{i=1}^{n}\left(\overline{Y}_i(c_\ell)-\overline{Y}_i(c_k) \right).
\end{equation}
Eq.~\eqref{eq.apdx:eee} is a variant of the estimand proposed by \citet{Saevje2023} as it conditions on the exposures under multiple networks instead of a single network.
Now, for any $c_\ell \in \mathcal{C}$ we can write
\begin{align*}
    \mathbb{E}_{\bZ}\left[\frac{I_i^{(\mathcal{A})}(\bZ,c_\ell)Y_i}{ p_i^{(\mathcal{A})}(c_\ell)} \right] 
    &= \frac{\mathbb{E}_{\bZ}\left[I_i^{(\mathcal{A})}(\bZ,c_\ell)Y_i\right]}{ p_i^{(\mathcal{A})}(c_\ell)}
    \\ &= 
    \frac{p_i^{(\mathcal{A})}(c_\ell) \mathbb{E}_{\bZ}\left[Y_i \mid C_i^{\mathcal{A}} = c_\ell \right]}{ p_i^{(\mathcal{A})}(c_\ell)}
    \\ &= 
    \mathbb{E}_{\bZ}\left[Y_i(\bZ) \mid C_i^{\mathcal{A}} = c_\ell \right]
    \\ &= 
     \overline{Y}_i(c_\ell),
\end{align*}
where the second equality results from the law of total expectation, and the third equality from consistency in its general form $Y_i = Y_i(\bZ)$ (without exposure mappings). 
Thus, the HT NMR estimator $\hat{\tau}_{\mathcal{A}}(c_\ell, c_k)$ is unbiased estimator of \eqref{eq.apdx:eee}. 
Under bounded potential outcomes (Assumption~2 in the main text), positivity of all networks in $\mathcal{A}$, Assumption~\ref{ass:exposure_dependence} (which is equivalent to Condition~3 of \citet{Saevje2023} for the case of joint probabilities of exposures under multiple networks), and additional limitations on the amount of specification error dependence, the results of \citet{Saevje2023} can be adapted to show that the NMR estimator is consistent estimator of the expected exposure effect \eqref{eq.apdx:eee}.

\subsection{Exposure misspecification robust estimator} 

We sketch how the NMR estimator can be modified to settings where the exposure mapping $f$, not the interference network, might be misspecified. In this scenario, researchers have a collection of possible mappings but are not sure which one is correct. We show how to construct a robust estimator that is unbiased if one of the mapping is correct. We modify the assumptions and definitions in the paper accordingly to this setup.

We assume that $\bA^\ast$ is the interference network.  
Now, the mapping $f$ is unknown but a part of a larger space of possible mappings. Denote the set of all exposure mappings with the image set $\mathcal{C}$ by $\mathscr{F} = \left\{f : Im(f) = \mathcal{C} = \{c_1, \ldots,c_L\} \right\}$. 
For example, under the four-level exposure mapping with thresholds (Eq.~(6)), $\mathscr{F}$ is the infinite set of all mappings with different threshold values.

Write the exposure probabilities under mapping $f$ as $p_i^{(f)}(c_\ell) = \mathbb{E}_{\bZ}\left[I(f(\bZ,A^\ast_i)=c_\ell) \right]$.
Positivity (Definition~1) is modified to 

\paragraph{Definition~1(M) (Positivity; modified).} We say that $f \in \mathscr{F}$ satisfies positivity if $p_i^{(f)}(c_\ell) > 0$ for all $i=1,\ldots,n$ and $c_\ell \in \mathcal{C}$.
\\
\\
The definition of a correctly specified interference structure (Definition 2) also needs to be modified to the specification of the exposure mapping instead of the network.

\paragraph{Definition~2(M) (Correctly specified interference structure; modified)} For an interference network $\bA^\ast$, we say that $f \in \mathscr{F}$ correctly specifies the exposure mapping, if $f$ satisfies Definition 1 (positivity; modified), and for all $\bz \in \mathcal{Z}$
\begin{equation*}
    \text{if} \; f(\bz,\bA^\ast_i)=c_\ell,\; \text{then}\; Y_i(\bz) = \widetilde{Y}_i(c_\ell), \; i=1,\ldots,n.
\end{equation*}
\\
If some $f \in \mathscr{F}$ satisfies Definition~2(M), then for any $\bz,\bz'$, if $f(\bz,\bA^\ast_i) = f(\bz',\bA^\ast_i)$ then $Y_i(\bz)=Y_i(\bz')$.
Similarly to the class $\mathscr{A}^\ast$ of correctly specified networks, we can define $\mathscr{F}^\ast$ as the class of all mappings $f \in \mathscr{F}$ that satisfy Definition~2(M), given an interference network $\bA^\ast$.
As with $\mathscr{A}^\ast$, the class $\mathscr{F}^\ast$ does not necessarily contain a singleton, i.e., $f^\ast \in \mathscr{F}^\ast$ is not necessarily unique.
Since all mappings have the same image space, we can define causal estimands as before, that is, as contrasts $\tau(c_\ell,c_k)=\mu(c_\ell)-\mu(c_k)$.
The consistency assumption (Assumption~1) is modified to 

\paragraph{Assumption~1(M) (Consistency; modified).} The observed outcomes are generated from one of the potential outcomes by 
$$Y_i = \sum_{j=1}^{L} \mathbb{I}\{f^\ast(\bZ,\bA^\ast_i)=c_j\}\widetilde{Y}_i(c_j), \; i=1,\ldots,n, \; f^\ast \in \mathscr{F}^\ast.$$
\\
\\
Even if $\mathscr{F}^\ast$ is not a singleton, all mappings in it will result in the same observed outcomes. That is, the sum $\sum_{j=1}^{L} \mathbb{I}\{f^\ast(\bZ,\bA^\ast_i)=c_j\}\widetilde{Y}_i(c_j)$ is constant for any $f^\ast \in \mathscr{F}^\ast$. Otherwise, if two mappings in $\mathscr{F}^\ast$ will yield two different potential outcomes for a given $\bZ$, we will either have a contradiction to Definition~2(M) or the sharp null hypothesis will hold for some exposure values.

Now, assume researchers have $\widetilde{M}$ possible mappings $\mathcal{F} =\{f^1,\ldots,f^{\widetilde{M}}\}$ but are not sure which one, if any, is a correctly specified exposure mapping. Define $I_i^{(\mathcal{F})}(\bZ,c_\ell) = \prod_{f \in \mathcal{F}} \mathbb{I}\{f(\bZ,\bA^\ast_i)=c_\ell \}$ to be the indicator that equals one only if the exposure value equals $c_\ell$ under each of the mappings in $\mathcal{F}$.
Denote the joint probability that unit $i$ has exposure value $c_\ell$ under \emph{all} mappings $f \in \mathcal{F}$ by $p_i^{(\mathcal{F})}(c_\ell)=\mathbb{E}_{\bZ}\left[I_i^{(\mathcal{F})}(\bZ,c_\ell)\right]$.
Define the \emph{exposure misspecification robust} (EMR) estimator as 
\begin{equation}
    \label{eq.apdx:emr}
    \hat{\mu}_{\mathcal{F}}(c_\ell) = 
    \frac{1}{n}
    \sum_{i=1}^{n}\frac{I_i^{(\mathcal{F})}(\bZ,c_\ell)}{p_i^{(\mathcal{F})}(c_\ell)} Y_i.
\end{equation}
The following theorem asserts the EMR estimator is unbiased if $\mathcal{F}$ include a correctly specified mapping.
\paragraph{Theorem~2(M) (modified).} Let $\mathcal{F}$ be a collection of $\widetilde{M}$ exposure mapping such that each of mappings satisfies Definition~1(M) . Under Assumption~1(M), if $\mathcal{F} \cap \mathscr{F}^\ast \neq \emptyset$, then for $c_\ell \in \mathcal{C}$
\begin{equation*}
    \mathbb{E}_{\bZ}\left[ \hat{\mu}_{\mathcal{F}}(c_\ell)\right] = \mu(c_\ell).
\end{equation*}

\begin{proof}
Note that $\mathcal{F} \cap \mathscr{F}^\ast \ne \emptyset$ means that for some $j,\; f^j \in \mathscr{A}^\ast$. Assume without loss of generality that $f^1 \in \mathscr{F}^\ast$ and write $f^1 = f^\ast$. We obtain
\begin{equation*}
    \begin{split}
        \mathbb{E}_{\bZ}\left[\hat{\mu}_{\mathcal{F}}(c_\ell)\right] 
        &=
        \mathbb{E}_{\bZ}\left[\frac{1}{n}
        \sum_{i=1}^{n}
    \left(\prod_{j=1}^{\widetilde{M}}
    \mathbb{I}
    \{f^j(\bZ,\bA^\ast_i)=c_\ell\}\right)
    \frac{1}{p_i^{(\mathcal{F})}(c_\ell)}Y_i\right] 
    \\ (\text{Consistency}) &=
        \mathbb{E}_{\bZ}\left[\frac{1}{n}
        \sum_{i=1}^{n}
    \left(\prod_{j=1}^{\widetilde{M}}
    \mathbb{I}
    \{f^j(\bZ,\bA^\ast_i)=c_\ell\}\right)
    \frac{1}{p_i^{(\mathcal{F})}(c_\ell)}\sum_{k = 1}^{L}\mathbb{I}\{f^\ast(\bZ, \bA^{\ast}_i) = c_k\}\widetilde{Y}_i(c_k)\right]
    \\ 
    &=
        \mathbb{E}_{\bZ}\Biggl[\frac{1}{n}
        \sum_{i=1}^{n}
    \left(\prod_{j=2}^{\widetilde{M}}
    \mathbb{I}
    \{f^j(\bZ,\bA^\ast_i)=c_\ell\}\right)
    \frac{1}{p_i^{(\mathcal{F})}(c_\ell)}\cdot
    \\ &\qquad \qquad \qquad \qquad
    \mathbb{I}\{f^1(\bZ, \bA^\ast_i)=c_\ell\}\sum_{k = 1}^{L} \mathbb{I}\{f^\ast(\bZ, \bA^{\ast}_i) = c_k\}\widetilde{Y}_i(c_k)\Biggr] 
    \\
    (f^1 = f^\ast) &=
        \mathbb{E}_{\bZ}\Biggl[\frac{1}{n}
        \sum_{i=1}^{n}
    \left(\prod_{j=2}^{\widetilde{M}}
    \mathbb{I}
    \{f^j(\bZ,\bA^\ast_i)=c_\ell\}\right)
    \frac{1}{p_i^{(\mathcal{F})}(c_\ell)}\cdot
    \\ &\qquad \qquad \qquad \qquad
    \sum_{k = 1}^{L}\mathbb{I}\{f^\ast(\bZ, \bA^{\ast}_i) = c_\ell\}\mathbb{I}\{f^\ast(\bZ, \bA^{\ast}_i) = c_k\}\widetilde{Y}_i(c_k)\Biggr] 
    \\ &\stackrel{\dagger}{=}
    \mathbb{E}_{\bZ}\left[\frac{1}{n}
        \sum_{i=1}^{n}
    \left(\prod_{j=1}^{\widetilde{M}}
    \mathbb{I}
    \{f^j(\bZ,\bA^\ast_i)=c_\ell\}\right)
    \frac{1}{p_i^{(\mathcal{F})}(c_\ell)}\widetilde{Y}_i(c_\ell)\right] 
    \\ &=
    \frac{1}{n}
        \sum_{i=1}^{n}
    \mathbb{E}_{\bZ}\left[\prod_{j=1}^{\widetilde{M}}
    \mathbb{I}
    \{f^j(\bZ,\bA^\ast_i)=c_\ell\}\right]
    \frac{1}{p_i^{(\mathcal{F})}(c_\ell)}\widetilde{Y}_i(c_\ell) 
    \\ &=
    \frac{1}{n}
        \sum_{i=1}^{n}
    p_i^{(\mathcal{F})}(c_\ell)
    \frac{1}{p_i^{(\mathcal{F})}(c_\ell)}\widetilde{Y}_i(c_\ell)
    \\ &=
    \frac{1}{n}
        \sum_{i=1}^{n} \widetilde{Y}_i(c_\ell)
    \\ &=
    \mu(c_\ell)
    \end{split}
\end{equation*}
Where $\dagger$ follows from the fact that $\sum_{k = 1}^{L}\mathbb{I}\{f^\ast(\bZ, \bA^{\ast}_i) = c_\ell\}\mathbb{I}\{f^\ast(\bZ, \bA^{\ast}_i) = c_k\}\widetilde{Y}_i(c_k)=\mathbb{I}\{f^\ast(\bZ, \bA^{\ast}_i) = c_\ell\}\widetilde{Y}_i(c_\ell)$. Moreover, if $f^\ast$ is not unique (i.e., $\mathscr{F}^\ast$ is not a singleton), the sum $\sum_{k = 1}^{L}\mathbb{I}\{f^\ast(\bZ, \bA^{\ast}_i) = c_k\}\widetilde{Y}_i(c_k)$ will be equal for any $f^\ast \in \mathscr{F}^\ast$.
\end{proof}

\section{Simulations and data analysis}
\label{apdx.sec:simulations_and_data_analyses}
The \texttt{R} package implementing our methodology is available at \url{https://github.com/barwein/Misspecified_Interference}. Simulations and data analysis reproducibility materials of the results are available at \url{https://github.com/barwein/CI-MIS}.

Throughout all the simulations and data analyses performed, the exposure probabilities $p_i$ (in each form) were estimated with $R=10^4$ re-sampling from the relevant $\Pr(\bZ = \bz)$. Formally, let $\bz_1,\dots,\bz_R$ denote the sampled treatments from $\Pr(\bZ = \bz)$. Define the indicator matrix $I(c_\ell) \in \mathbb{R}^{n\times R},\; \ell =1,\dots,L$ by 
$I_{ij}(c_\ell) = \mathbb{I} \{f(\bz_j,\bA_i)=c_\ell\},\; i=1,\dots,n, j=1,\dots,R$. The estimation of the exposures probabilities is performed via additive smoothing 
\citep{aronow_estimating_2017} 
\begin{equation*}
\reallywidehat{P}(c_\ell) = 
\frac{I(c_\ell)I(c_\ell)^T + I_n}{R+1},
\end{equation*}
where $I_n$ is the $n\times n$ identity matrix, and $\reallywidehat{P}(c_\ell)$ is the estimator of $P(c_\ell)$ defined by 
\begin{equation*}
    P_{ij}(c_\ell) =
    \begin{cases}
    p_i^{(\bA)}(c_\ell),& i=j \\
    p_{ij}^{(\bA)}(c_\ell),& i\neq j
    \end{cases}
\end{equation*}

To express network similarity we utilized the Jaccard index. Let $\mathcal{E}(\bA)$ be the edges set of network $\bA$. For two networks $\bA,\bA'$, the Jaccard index is defined by
\begin{equation*}
    J_{\bA,\bA'} = \frac{\big\vert \mathcal{E}(\bA) \cap \mathcal{E}(\bA') \big\vert}{\big\vert \mathcal{E}(\bA) \cup \mathcal{E}(\bA')\big\vert},
\end{equation*}
that is, $J_{\bA,\bA'}$ is the proportion of shared edges between $\bA$ and $\bA'$ to the total number of edges in $\bA$ or $\bA'$. Thus, $0 \leq J_{\bA,\bA'} \leq 1$, where values close to $1$ indicates that the networks are similar. 

\subsection{Simulations}
In the simulations, a PA network of $n=3000$ units was sampled as the baseline true network via the \texttt{igraph} package \url{https://igraph.org/r/} with power parameter set to 1 \citep{Barabasi1999}. Figure~\ref{fig:sw.deg.plot} displays the degree distribution of the sampled network. Clearly, the degrees distribution implies a heavy right tail, a property inherent in the PA algorithm which is known to generate degrees that are asymptotically Pareto distributed  \citep{Barabasi1999}.
\begin{figure}[H]
    \centering
    \includegraphics[scale=0.08]{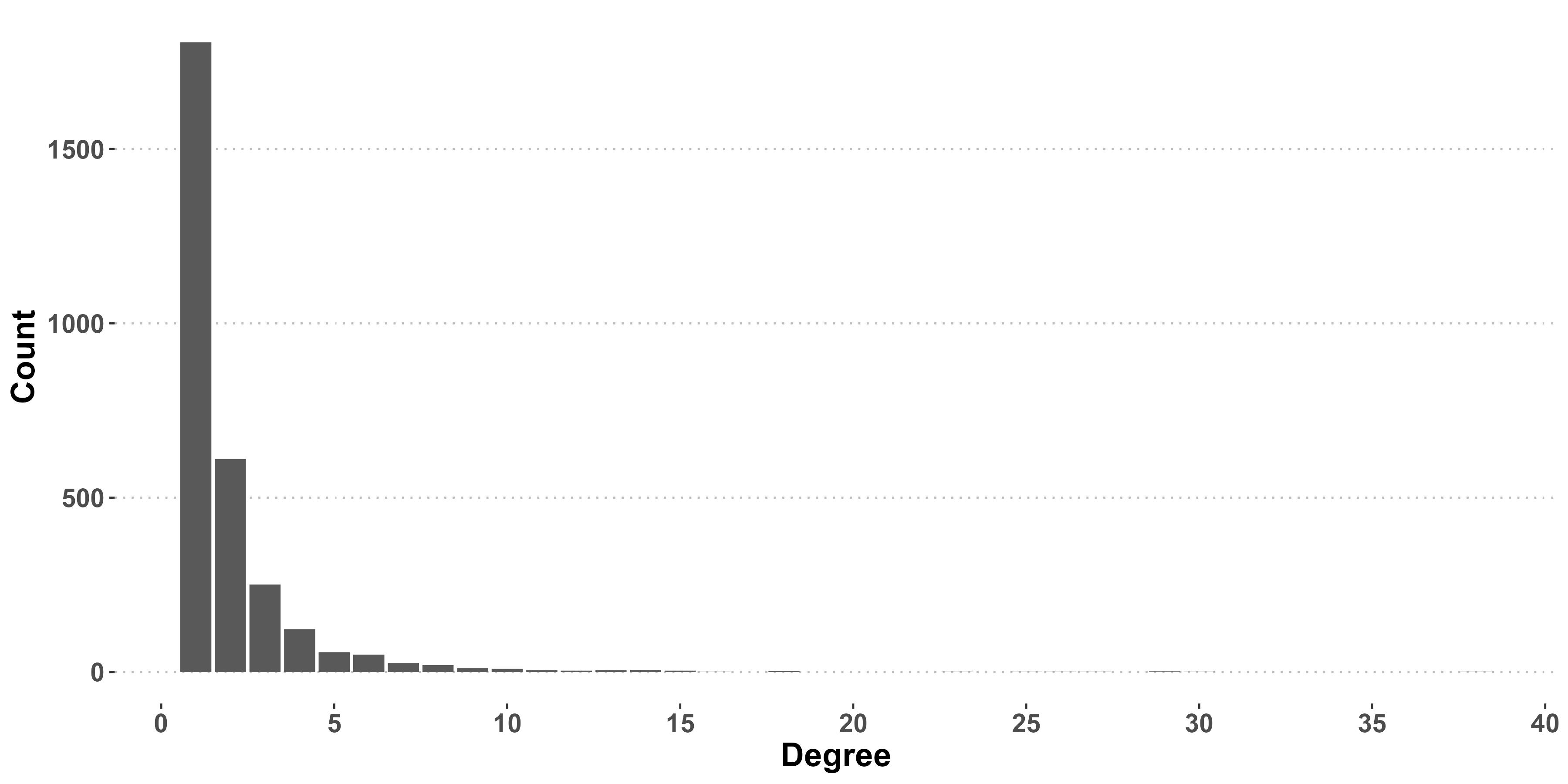}
    \caption{Histogram of the baseline preferential attachment random network degree's distribution. $n=3000$ nodes. The mean degree is $2$, and the maximal degree is $38$.}
    \label{fig:sw.deg.plot}
\end{figure}

\subsubsection{Illustration of the estimation bias}

In this subsection, we report additional results of the simulation study shown in the main text.
\noindent
\paragraph{Scenario (I) (Incorrect reporting of social connections).}
Figure~\ref{fig:sim.bias.additional} shows the absolute bias for additional estimands not displayed in the main text. The results were similar. When $\eta=0$ the bias is zero and increases with $\eta$ otherwise. Moreover, Figure~\ref{fig:sim.bias.additional} also shows the exact bias, as derived from Theorem 1, in comparison to the empirical bias of HT and Hajek estimators. The two are similar.

As discussed in Theorem 1, the bias from using a misspecified network structure results from selecting the wrong units and using invalid weights. Selecting the wrong units in our framework is equivalent to embedding units with the wrong exposure values. 
Figure~\ref{fig:n.missclass.exposure} shows the number of units with misclassified exposure values in the simulation. Clearly, the number of misclassified exposures increases with $\eta$, regardless of the exposure value.

The simulation validated Theorem 1 by illustrating that both Hajek and HT are unbiased whenever the network is correctly specified ($\eta=0$). However, HT had a larger empirical standard deviation (SD) than Hajek, possibly due to the stabilizing character of estimating $n$ when using Hajek \citep{Saerndal2003}. Figure~\ref{fig:sim_sw_SD} shows the 
empirical SD of the two estimators. We can conclude that even though both HT and Hajek had a similar bias, Hajek had a lower SD.

To quantify the similarity of $\bA^\ast$ and each of the misspecified networks, the Jaccard index was computed. Table~\ref{tab.apdx:jaccard_noise} displays the Jaccard index of $\bA^\ast$ with each sampled network (by $\eta$). In the extreme ($\eta=0.25$), there were only about $16\%$ shared edges in the networks.

In the simulation, we sampled one incorrect network for each $\eta >0$ value. To illustrate that the results are robust for replications, Figure~\ref{fig:sim_noise_spagehti} displays the results of additional $50$ replications in each we sampled different incorrect network. The bias across all replications is similar.

\begin{figure}[H]
    \centering
    \includegraphics[scale=0.055]{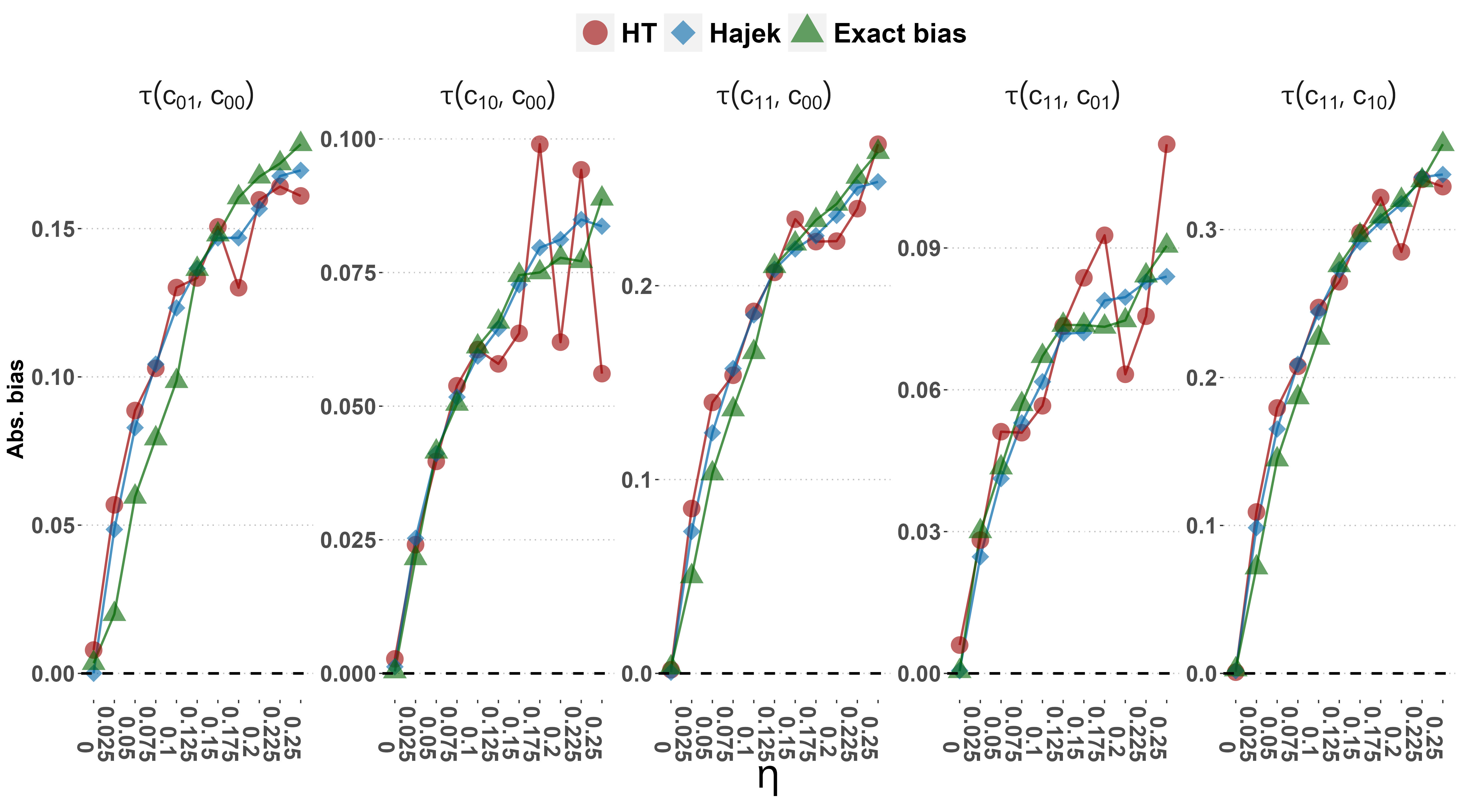}
    \caption{Scenario (I). Additional absolute bias results from estimating $\tau(c_{01},c_{00}),\; \tau(c_{11},c_{00}), \; \tau(c_{11},c_{01}),\; \tau(c_{11},c_{10})$.}
    \label{fig:sim.bias.additional}
\end{figure}

\begin{figure}[H]
    \centering
    \includegraphics[scale=0.055]{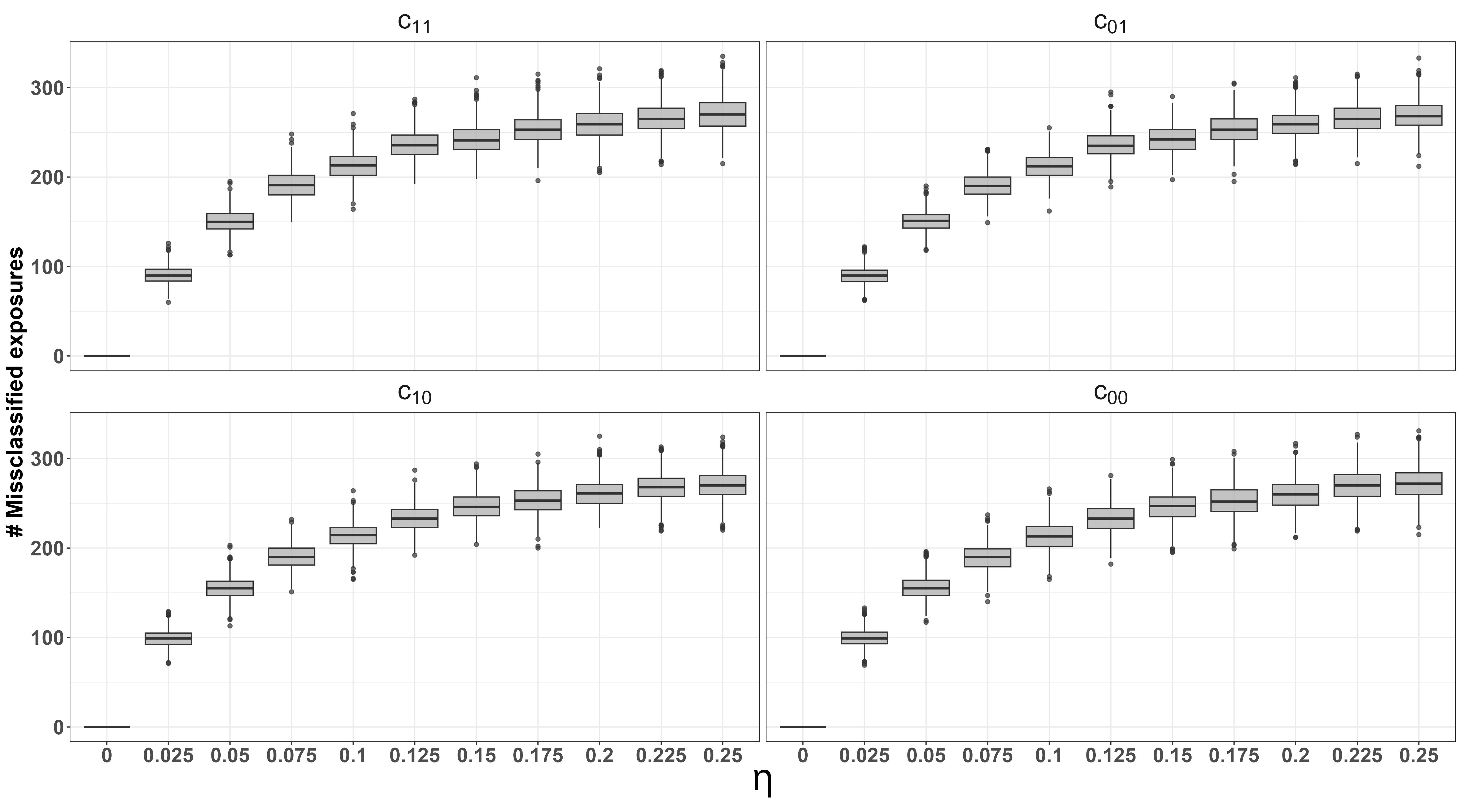}
    \caption{Number of units with misclassified exposures by exposure value in Scenario (I).}
    \label{fig:n.missclass.exposure}
\end{figure}

\begin{figure}[H]
    \centering
    \includegraphics[scale=0.055]{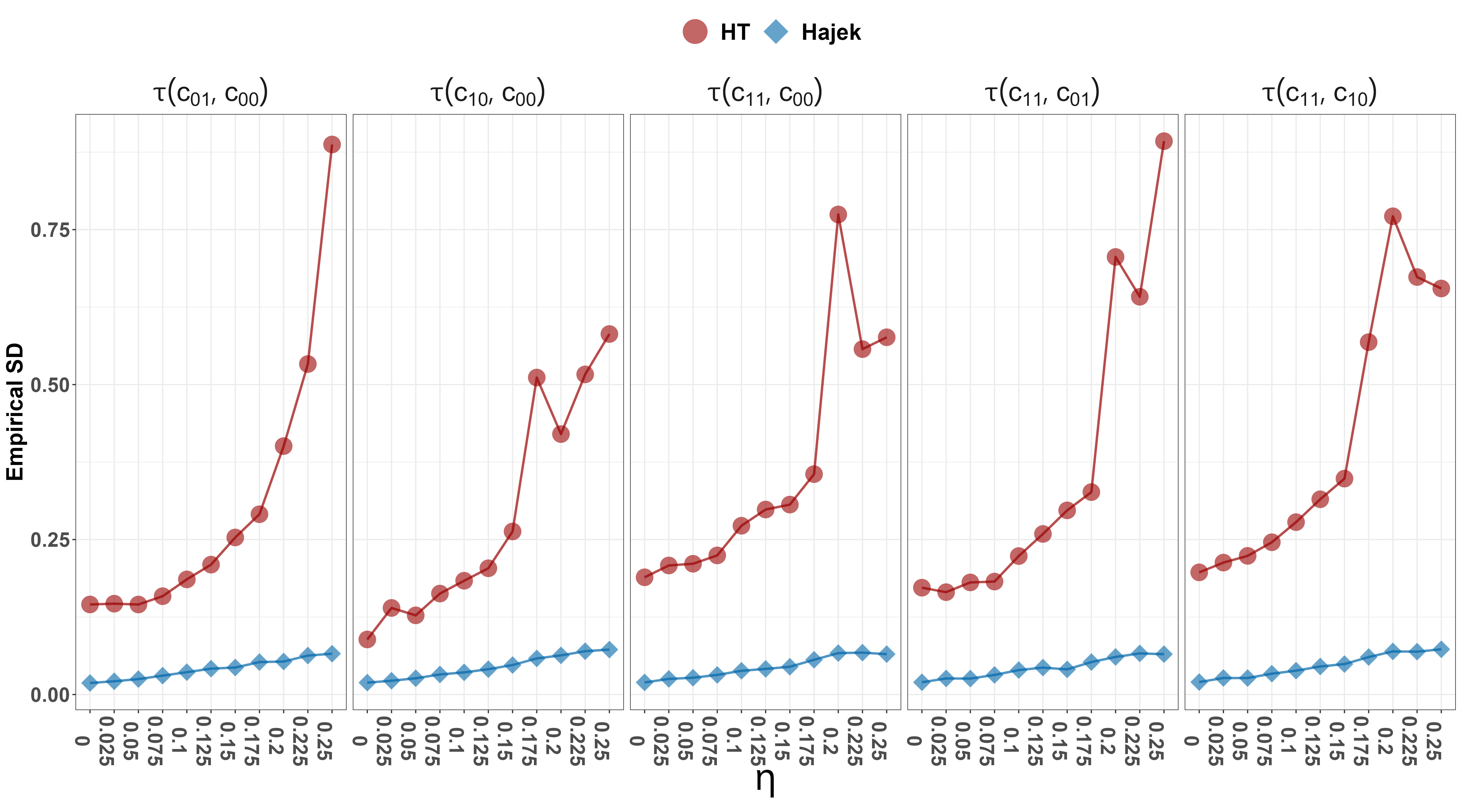}
    \caption{Empirical standard deviation (SD) of HT and Hajek estimators in Scenario (I).}
    \label{fig:sim_sw_SD}
\end{figure}

\begin{table}[H]
    \centering
    \begin{tabular}{lrrrrrrrrrrr}
    \toprule
    $\eta$ & 0 & 0.025 & 0.050 & 0.075 & 0.100 & 0.125 & 0.150 & 0.175 & 0.200 & 0.225 & 0.250\\
    $J_{\bA^\ast,\bA}$ & 1 & 0.713 & 0.545 & 0.437 & 0.365 & 0.299 & 0.261 & 0.231 & 0.203 & 0.175 & 0.163\\
    \bottomrule
    \end{tabular}
    \caption{Jaccard index of $\bA^\ast$ and the misspecified networks in Scenario (I).}
    \label{tab.apdx:jaccard_noise}
\end{table}

\begin{figure}[H]
    \centering
    \includegraphics[scale=0.04]{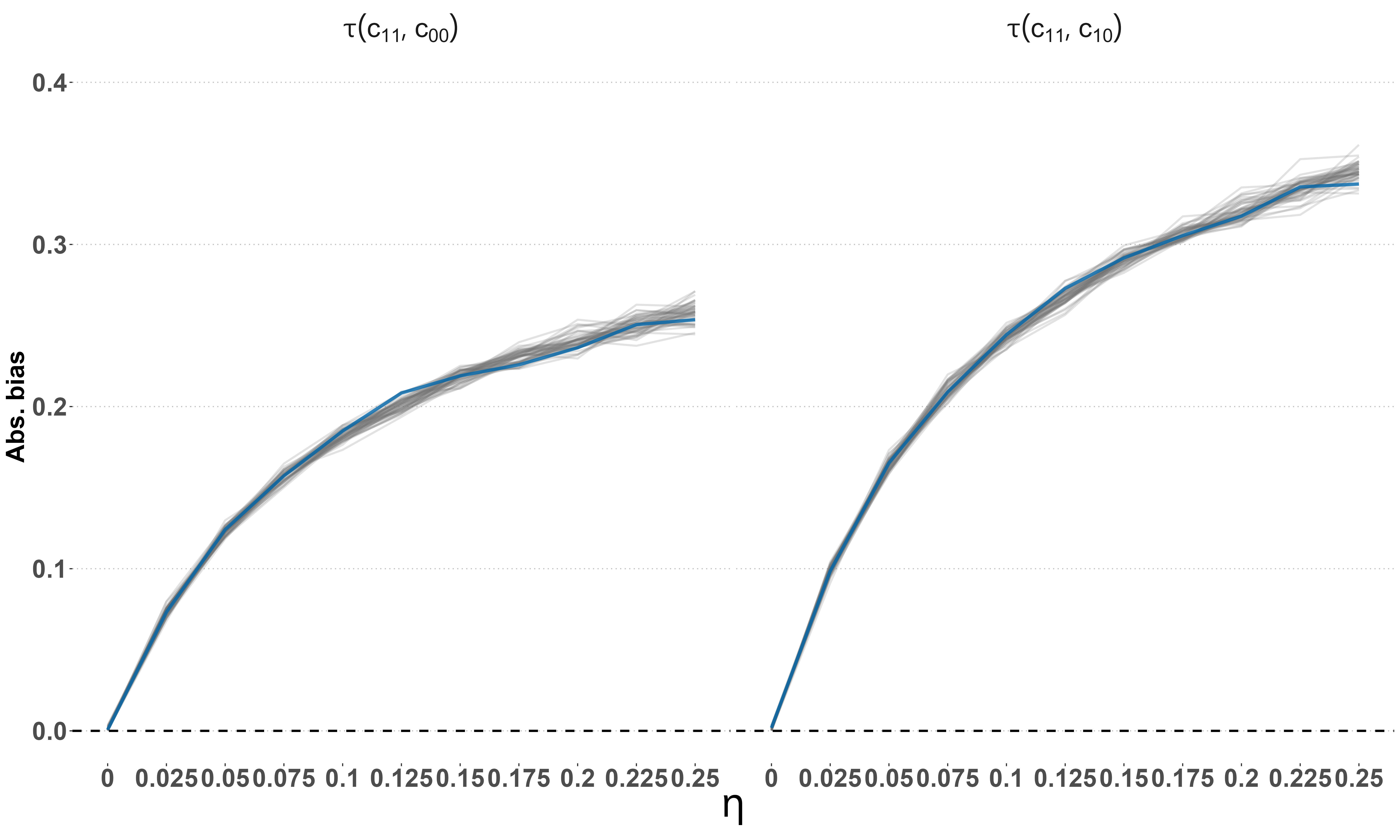}
    \caption{Multiple replications of Scenario (I). The blue line represents the absolute bias of Hajek estimates shown in the main text, whereas each grey line results from the $50$ additional replication in which different networks are sampled for each $\eta >0$.}
    \label{fig:sim_noise_spagehti}
\end{figure}

\noindent
\paragraph{Scenario (II) (Censoring).}
Here we also report additional results similar to the ones reported in the previous scenario.
Table~\ref{tab.apdx:prop.censored} shows the proportion of units with more than $K=1,\dots,7$ neighbors, i.e., the proportion of units we censored some of their edges for each of the thresholds. For example, when $K=7$ only about $2.5\%$ units had censored edges, whereas when $K=1$ almost $40\%$ of units had censored edges.
Figure~\ref{fig:sim.bias.additional.scenario2} shows absolute bias for additional estimands not shown in the main text. The same picture holds. When the censoring threshold $K$ decreases, the bias increases, and the bias is larger. Notice that HT had a larger bias than Hajek when the censoring threshold $K$ decreased, probably due to the smaller effective sample size and the weight stability of Hajek. Furthermore, the exact bias is also displayed and is similar to the empirical bias of HT and Hajek. Figure~\ref{fig:n.missclass.exposure.scenario2} displays the number of units with misclassified exposure values by censoring threshold $K$. Figure~\ref{fig:sim_sw_SD.scenario2} shows the empirical SD of HT and Hajek estimators in Scenario (II). Here also the SD of HT is uniformly higher than Hajek. However, the SD of HT decreases with $K$, i.e., when more censoring is present the variance is reduced. Table~\ref{tab.apdx:jaccard_noise.scenario2} provides the Jaccard index of $\bA^\ast$ and each of the censored networks. Similarly to Scenario (I), the index decreases with $K$. Figure~\ref{fig:sim_censor_spagehti} shows that the results from additional $50$ replications of the simulations are almost identical for those reported. 
 
\begin{table}[H]
    \centering  
    \begin{tabular}{lrrrrrrr}
    \toprule
    $K$ & 1 & 2 & 3 & 4 & 5 & 6 & 7\\
    $\Pr(d_i(\bA^\ast) > K)$ & 0.398 & 0.194 & 0.111 & 0.07 & 0.051 & 0.034 & 0.025\\
    \bottomrule
    \end{tabular}
    \caption{Edges empirical right-tail function in the PA network $\bA^\ast$. $d_i(\bA^\ast)$ is the degree of unit $i$.}
    \label{tab.apdx:prop.censored}
\end{table}

\begin{figure}[H]
    \centering
    \includegraphics[scale=0.055]{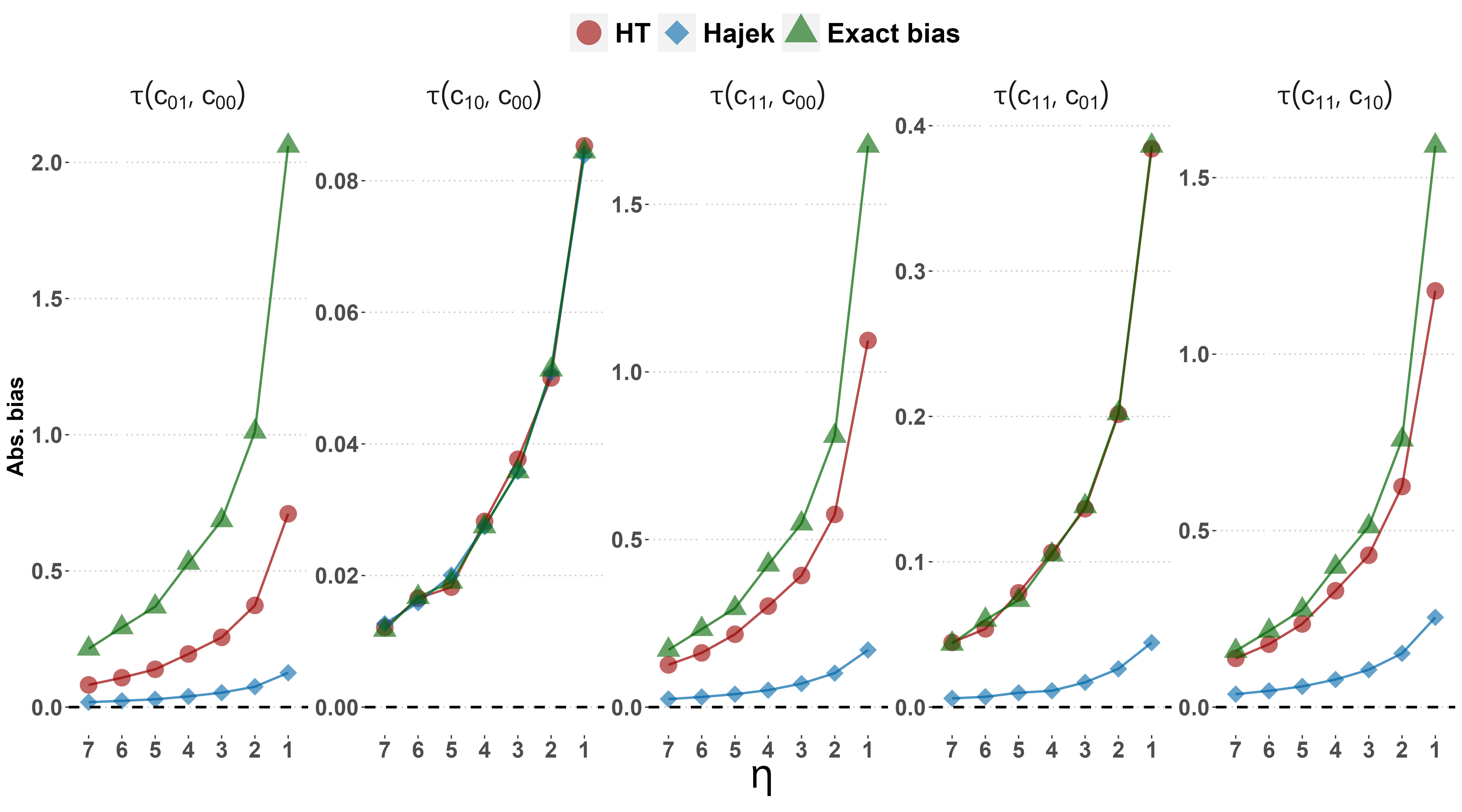}
    \caption{Scenario (II). Additional absolute bias results from estimating $\tau(c_{01},c_{00}),\; \tau(c_{11},c_{00}), \; \tau(c_{11},c_{01}),\; \tau(c_{11},c_{10})$.}
    \label{fig:sim.bias.additional.scenario2}
\end{figure}

\begin{figure}[H]
    \centering
    \includegraphics[scale=0.055]{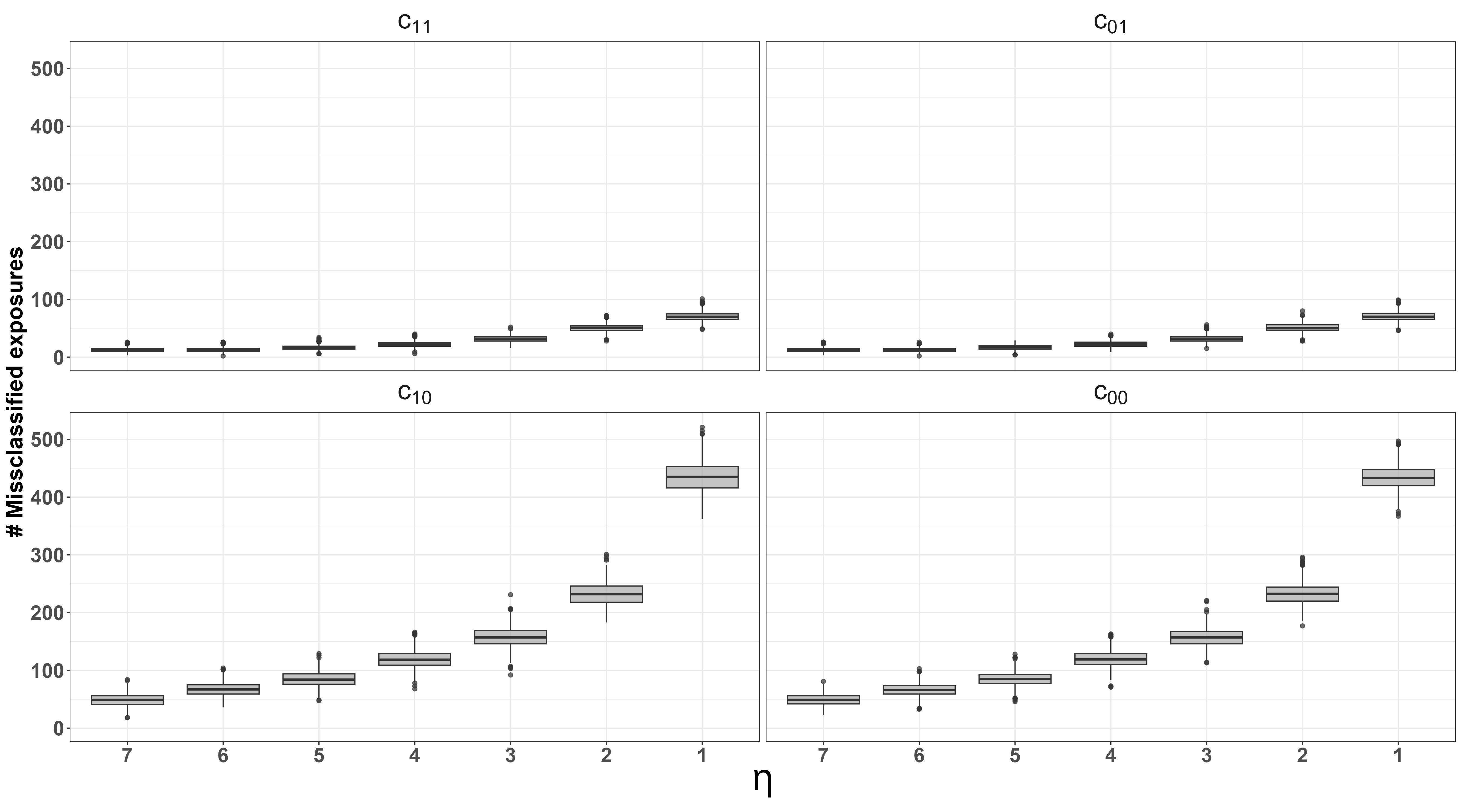}
    \caption{Number of units with misclassified exposures by exposure value in Scenario (II).}
    \label{fig:n.missclass.exposure.scenario2}
\end{figure}

\begin{figure}[H]
    \centering
    \includegraphics[scale=0.055]{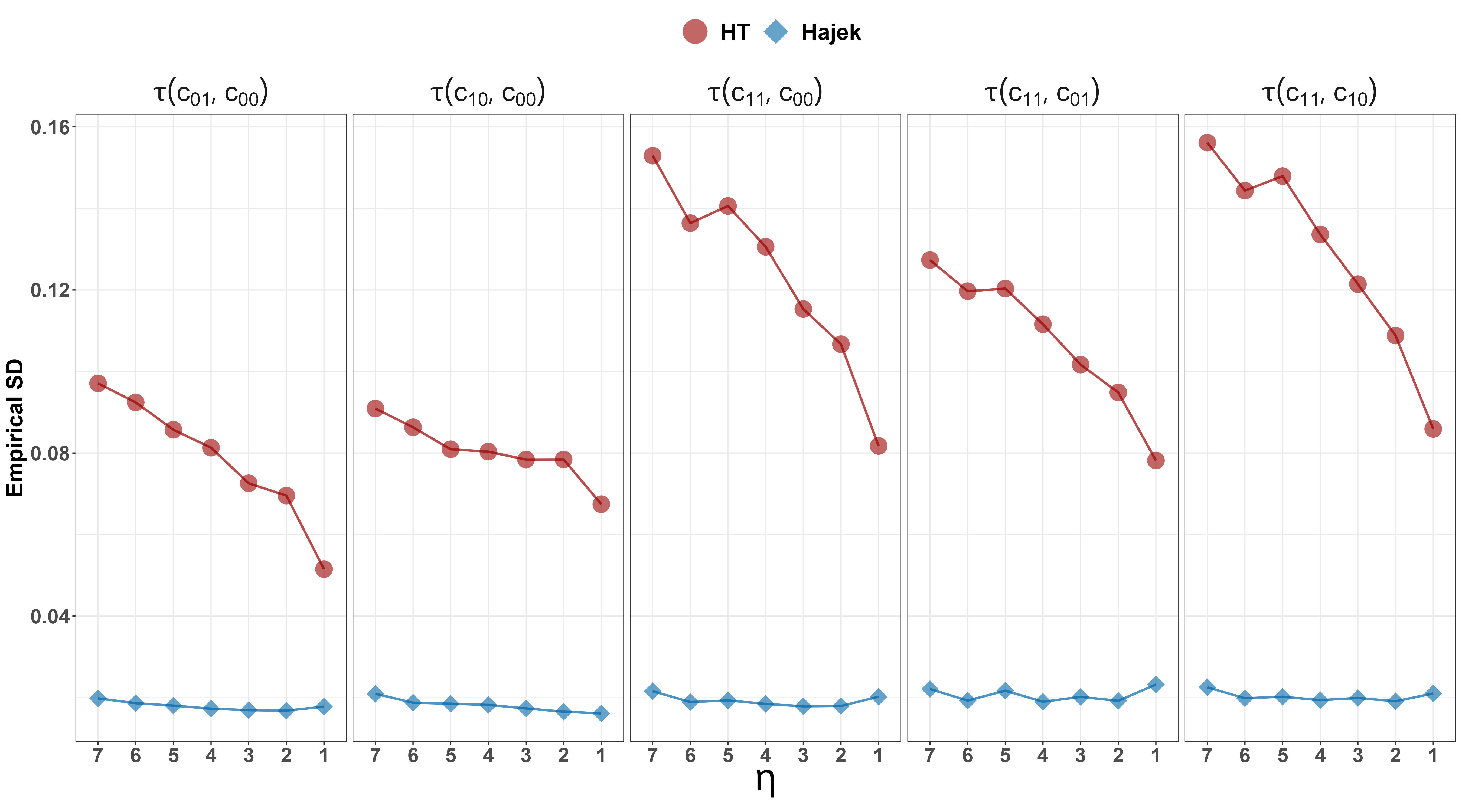}
    \caption{Empirical standard deviation (SD) of HT and Hajek estimators in Scenario (II).}
    \label{fig:sim_sw_SD.scenario2}
\end{figure}

\begin{table}[H]
    \centering  
    \begin{tabular}{lrrrrrrr}
    \toprule
    $K$ & 7 & 6 & 5 & 4 & 3 & 2 & 1\\
    $J_{\bA^\ast,\bA}$ & 0.866 & 0.835 & 0.792 & 0.730 & 0.646 & 0.509 & 0.258\\
    \bottomrule
    \end{tabular}
    \caption{Jaccard index of $\bA^\ast$ and the censored networks in Scenario (II).}
    \label{tab.apdx:jaccard_noise.scenario2}
\end{table}

\begin{figure}[H]
    \centering
    \includegraphics[scale=0.04]{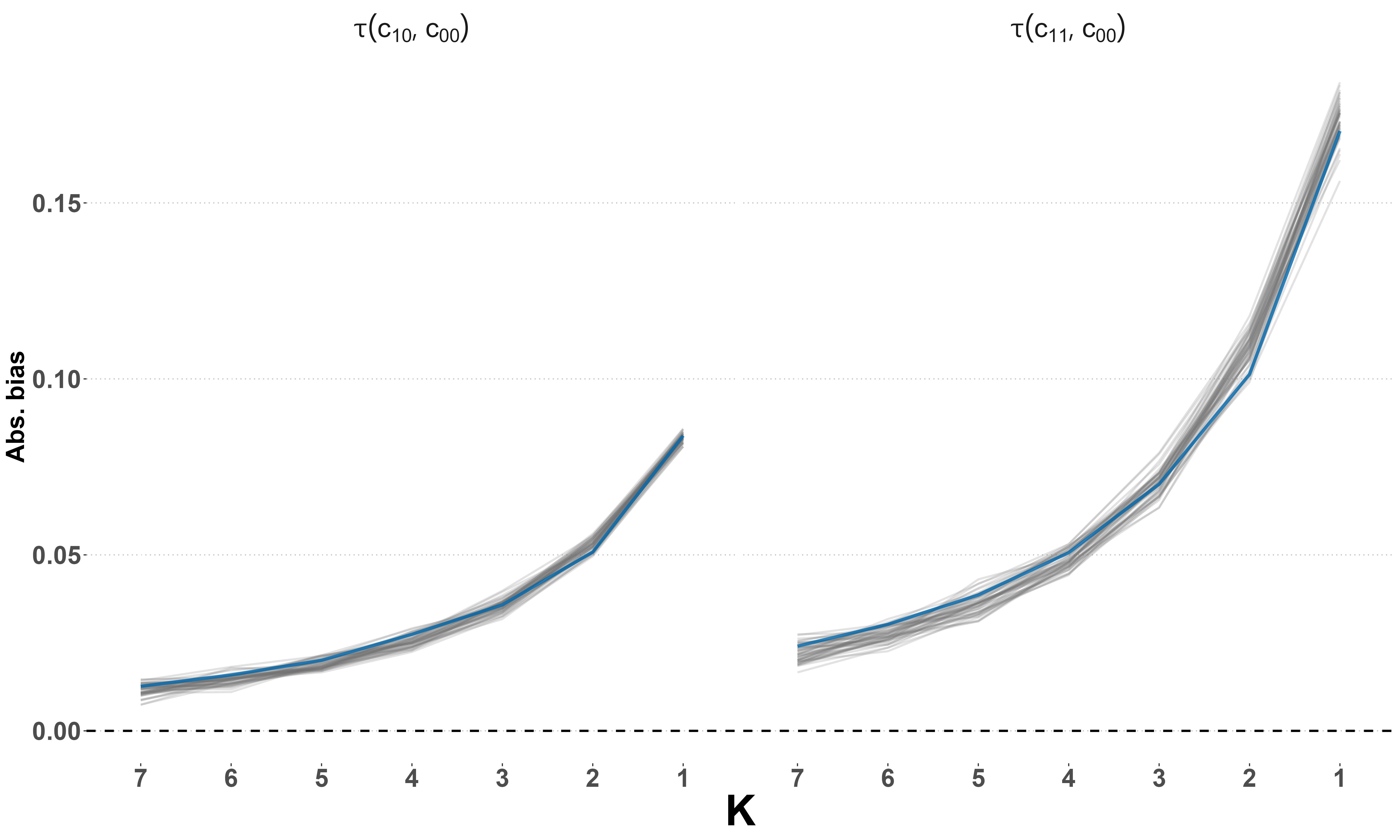}
    \caption{Multiple replications of Scenario (II). The blue line represents the absolute bias of Hajek estimates shown in the main text, whereas each grey line results from the $50$ additional replication in which different networks are sampled for each $K$. }
    \label{fig:sim_censor_spagehti}
\end{figure}

\subsubsection{Bias-variance tradeoff of the NMR estimators}
Figure~\ref{fig:mr.bias.var.apdx.plot} displays additional results of the bias-variance tradeoff simulation for $\tau(c_{01},c_{00})$ and $\tau(c_{11},c_{00})$. Similar results to those given in the main text appear there. Table~\ref{tab.apdx:jaccard_noise.nmr.bias.var} shows the pairwise Jaccard indices of all six networks used in the simulation. Figure~\ref{fig:nmr.coverage} shows the empirical $95\%$ coverage of the Hajek NMR estimator in estimating $\tau(c_{11}, c_{10})$. The confidence interval is computed with a normal approximation (Web Appendix~\ref{apdx.sec:asymptotic}) and conservative variance estimator (Web Appendix~\ref{apdx.sec:NMR_var}). NMR with the correct network achieves nominal coverage in each setup, whereas NMR with incorrect networks achieves nominal coverage only when $M\geq2$ networks are used.  Figure~\ref{fig:nmr_neu} shows the Number of Effective Units (NEU) of the NMR estimator in different combinations of networks $\mathcal{A}$. As expected, NEU decreases with the number of networks used (regardless of whether the true network is included), but the rate of decline is non-linear in the number of networks, where the slope decreases in this setup.

\begin{table}[H]
    \centering  
    \begin{tabular}{lrrrrrr}
    \toprule
      & $\bA^\ast$ & $\bA^a$ & $\bA^b$ & $\bA^c$ & $\bA^d$ & $\bA^e$
      \\
    \midrule
    $\bA^\ast$ & 1 &  &  &  &  & \\
    $\bA^a$ & 0.156 & 1 &  &  &  & \\
    $\bA^b$ & 0.155 & 0.066 & 1 &  &  & \\
    $\bA^c$ & 0.159 & 0.067 & 0.066 & 1 &  & \\
    $\bA^d$ & 0.157 & 0.067 & 0.068 & 0.068 & 1 & \\
    $\bA^e$ & 0.157 & 0.067 & 0.066 & 0.068 & 0.068 & 1\\
    \bottomrule
    \end{tabular}
    \caption{Jaccard index of the networks used in the simulations of the NMR bias-variance tradeoff.}
    \label{tab.apdx:jaccard_noise.nmr.bias.var}
\end{table}

\begin{figure}[H]
    \centering
    \includegraphics[scale=0.055]{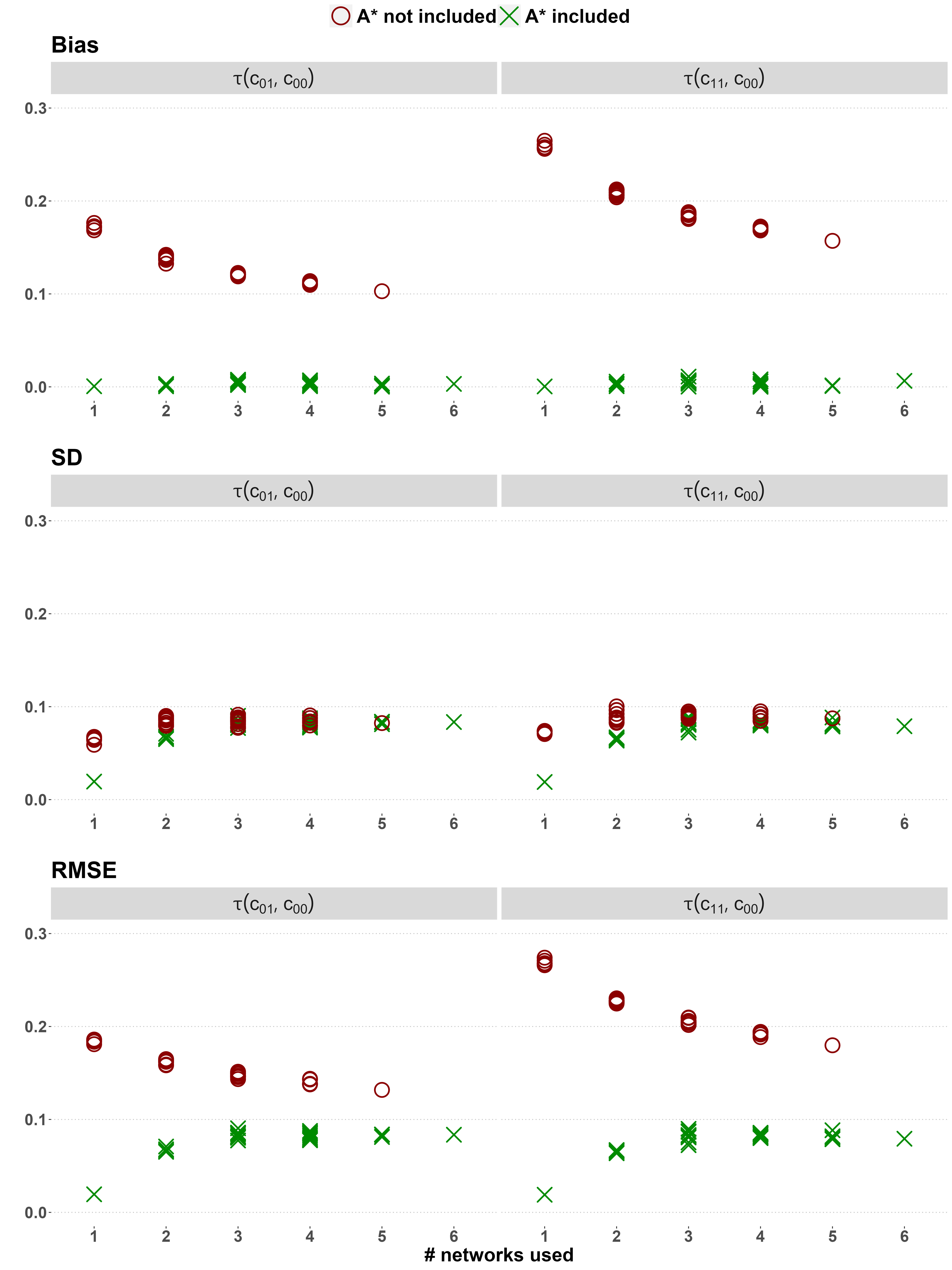}
    \caption{Bias-variance tradeoff of the NMR estimator. The results presented are the absolute bias, SD, and RMSE estimates of the Hajek NMR estimator. True causal effects are $\tau(c_{01},c_{00})=0.25$ and $\tau(c_{11},c_{00})=1$.}
    \label{fig:mr.bias.var.apdx.plot}
\end{figure}

\begin{figure}[H]
    \centering
    \includegraphics[width=0.6\linewidth]{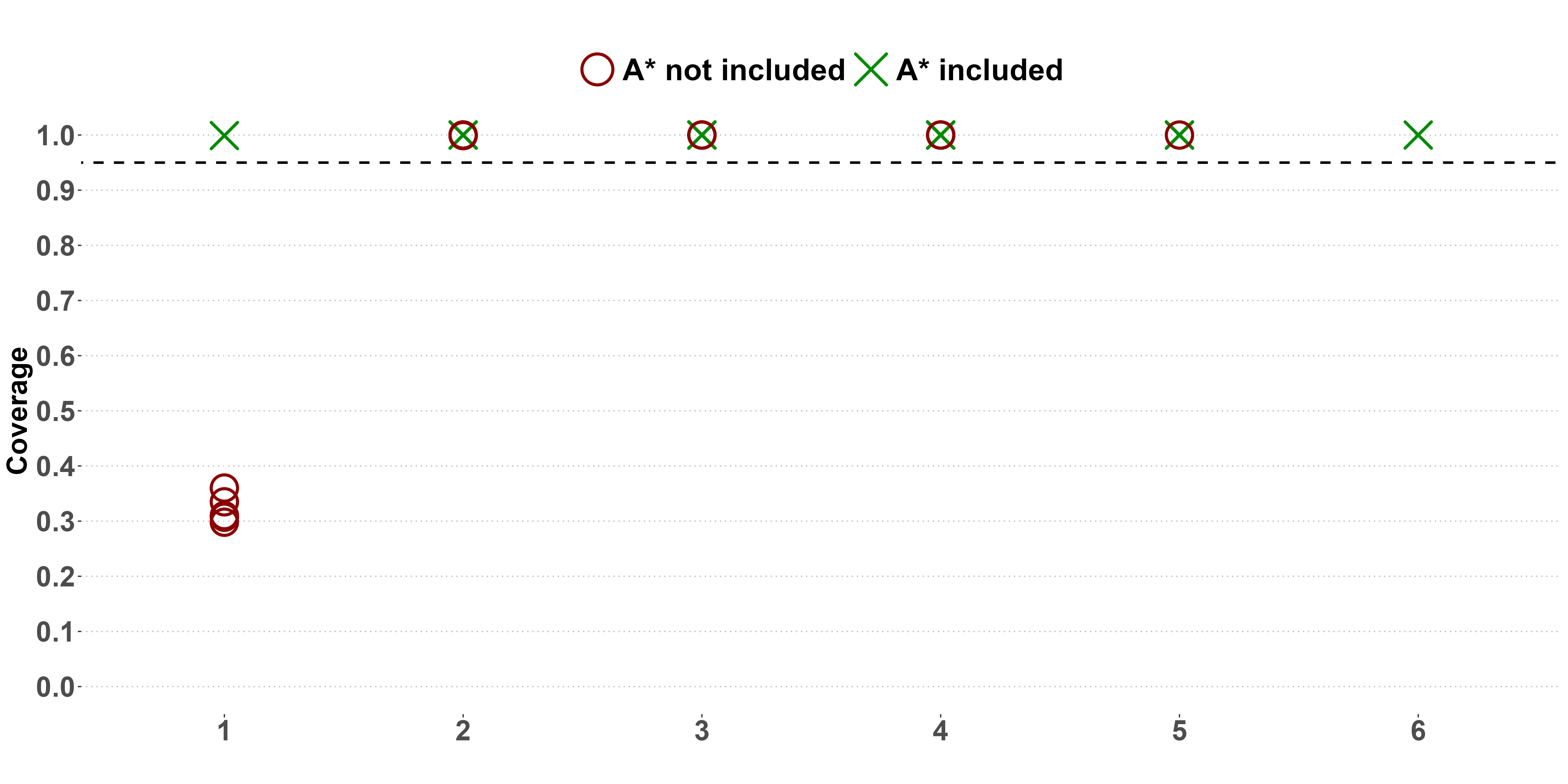}
    \caption{Empirical $95\%$ coverage of the Hajek NMR estimator with the conservative variance estimator. Coverage is defined as the proportion of iterations where the $95\%$ confidence interval contained the true estimand $\tau(c_{11},c_{10})$. The confidence interval is computed with a normal approximation (Web Appendix~\ref{apdx.sec:asymptotic}) and the conservative variance estimator (Web Appendix~\ref{apdx.sec:NMR_var}).}
    \label{fig:nmr.coverage}
\end{figure}

\begin{figure}[H]
    \centering
    \includegraphics[width=0.8\linewidth]{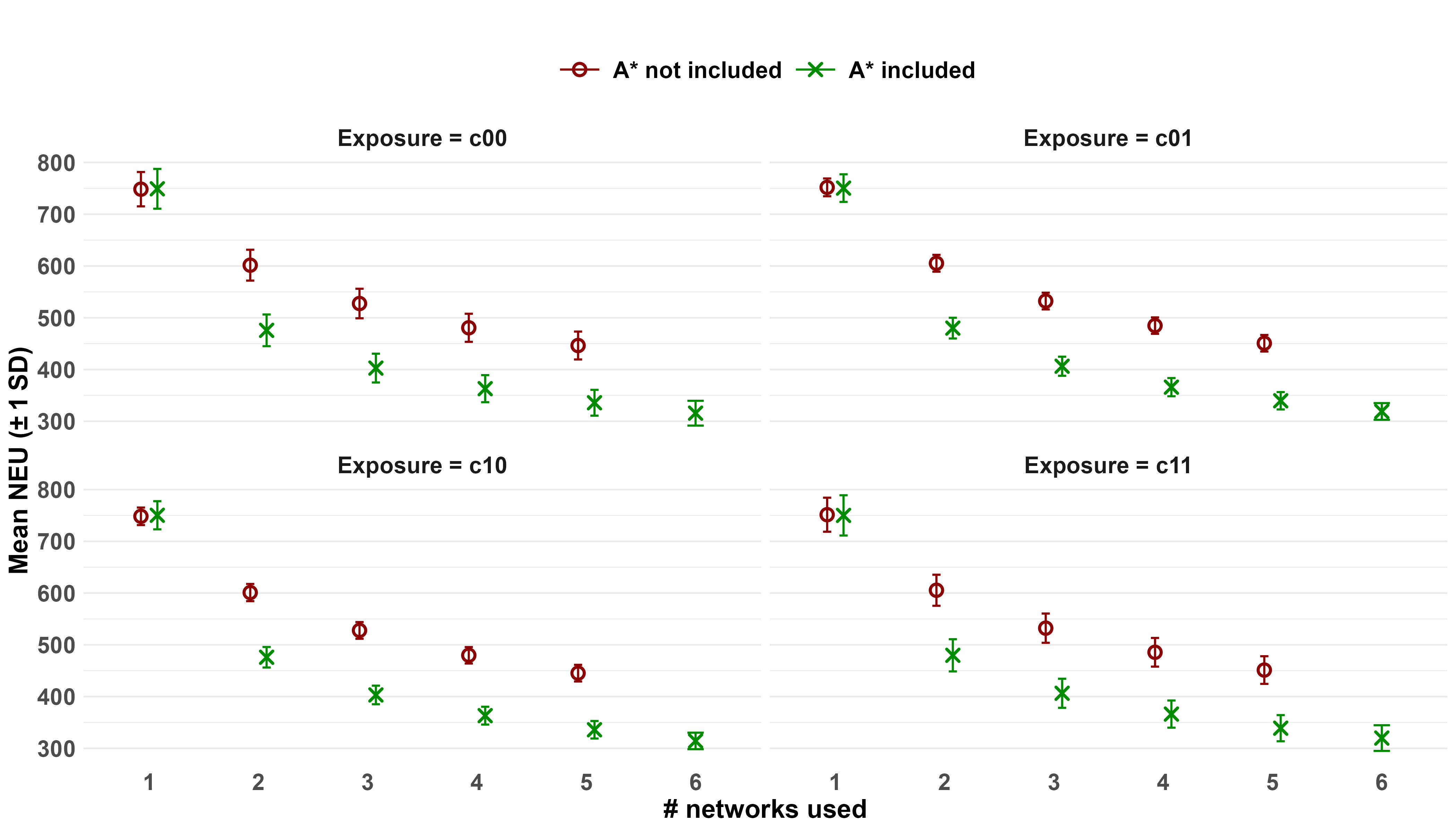}
    \caption{Mean $\pm$ SD of the Number of Effective Units (NEU) used in the NMR estimator across $1000$ iterations. NEU is defined by $\text{NEU}(\mathcal{A},c_k) = \sum_{i=1}^{n} I^{(\mathcal{A})}_i(\bZ,c_k)$ and represents the number of units used in the NMR estimator. In this figure, we aggregated combinations $\mathcal{A}$ that did not contain the true network $\bA^\ast$. As more networks are used, NEU decreases as fewer units have the same exposure value across all networks. However, the decrease is non-linear. For example, increasing from $1$ to $2$ networks yielded a steeper decline in NEU than the move from $2$ to $3$.}
    \label{fig:nmr_neu}
\end{figure}

Furthermore, we repeat the simulation in realistic quasi-experimental settings by taking $\mathcal{A}$ to consists of the four available networks from \citet{Paluck2016} study, as analyzed in the data analysis section in the main text. The correct network $\bA^\ast$ is taken to be the ST-pre network, which is the main network in \citet{Paluck2016} analysis. We used the same DGP to generate treatments and outcomes as in the previously displayed bias-variance simulations of the NMR estimators. Figure \ref{fig:mr.bias.var.apdx.palluck.plot} displays the results from $1000$ replications. The results portray the bias-variance tradeoff inherent in the NMR estimators.

\begin{figure}[H]
    \centering
    \includegraphics[scale=0.055]{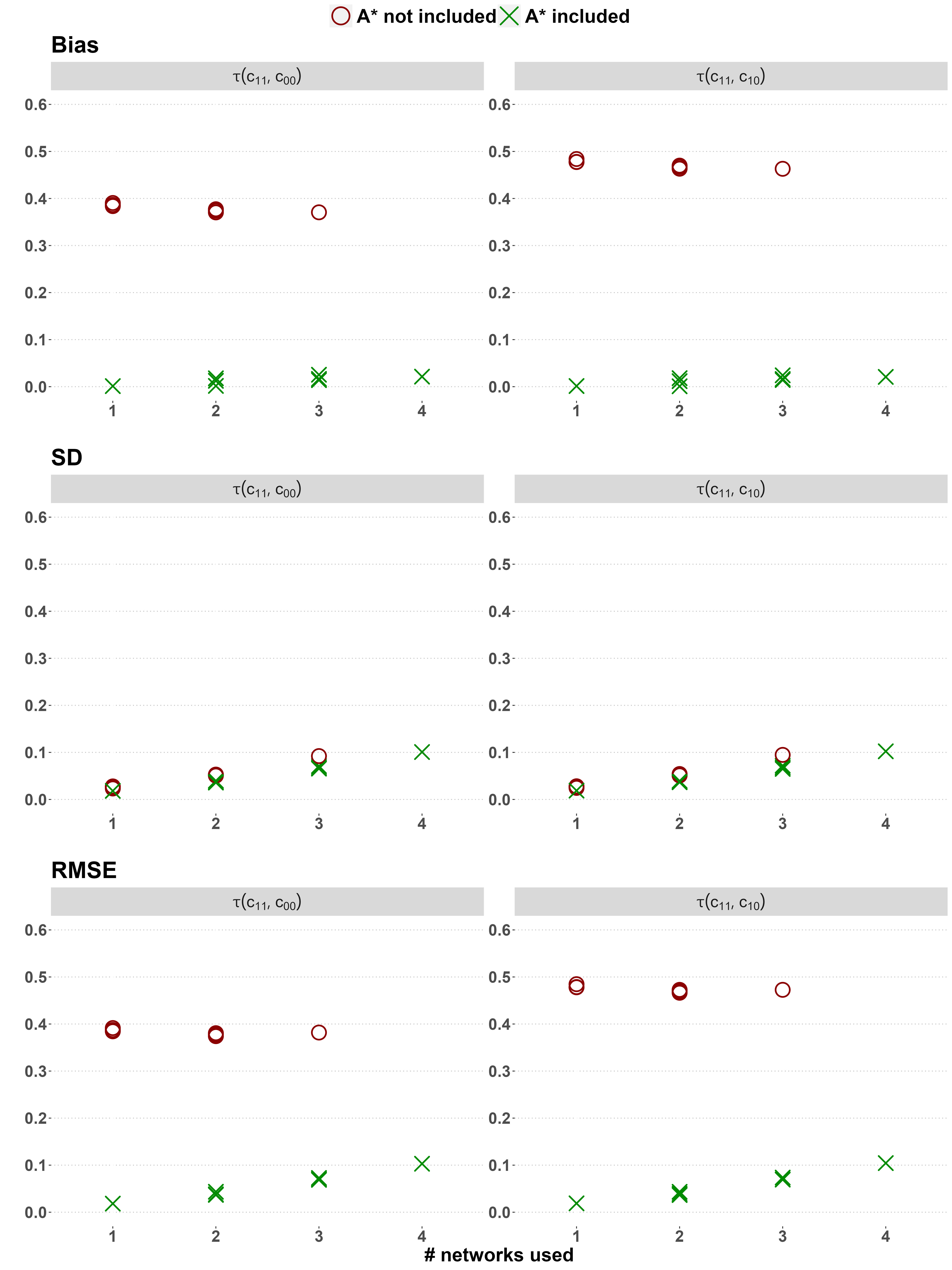}
    \caption{Bias-variance tradeoff of the NMR estimator with $\mathcal{A}$ being the four networks from \citet{Paluck2016} study. The results presented are the absolute bias, SD, and RMSE estimates of the Hajek NMR estimator. True causal effects are $\tau(c_{11},c_{00})=1$ and $\tau(c_{11},c_{10})=0.5$.}
    \label{fig:mr.bias.var.apdx.palluck.plot}
\end{figure}

\subsubsection{Conservative variance estimators}
We illustrate the conservative property of the NMR variance estimators proposed in Web Appendix~\ref{apdx.sec:NMR_var} in a small simulation study. In the same setup of the NMR bias-variance tradeoff simulation, we took all scenarios in which $\mathcal{A}$ contained the true networks $\bA^\ast$ and compared the estimated conservative SE to the empirical SD. Figure~\ref{fig:NMR_conserv_var} displays the mean SE/SD ratio of the overall effect $\tau(c_{11},c_{00})$ across the $1000$ iterations performed. Since all mean values are above one, we can surmise that the conservativeness property of the variance estimator holds. Nevertheless, it seems like the variance estimator is more conservative for Hajek than HT NMR estimators.

\begin{figure}[H]
    \centering
    \includegraphics[scale=.05]{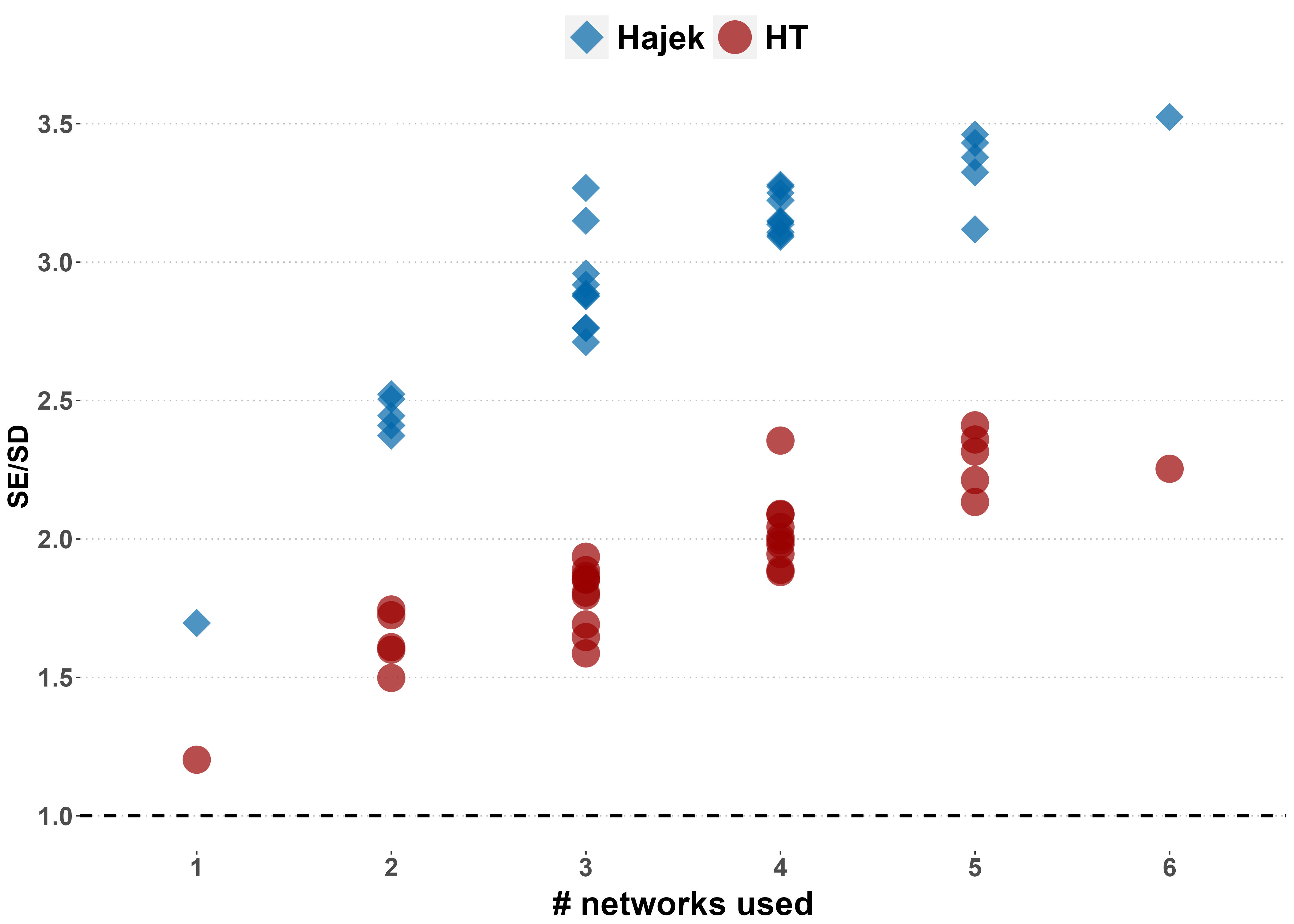}
    \caption{Conservative NMR variance estimator. Values are the mean of $\tau(c_{01},c_{00})$ estimated SE/SD.}
    \label{fig:NMR_conserv_var}
\end{figure}

\subsection{Data analysis}
% \subsubsection{Social network field experiment}
In our analysis of the data, we performed the same data pre-processing conducted by \citet{Paluck2016}. The open-source replicability package provided by \cite{Paluck2016} can be found at \url{https://www.icpsr.umich.edu/web/ICPSR/studies/37070}.
Table~\ref{tab:palluck.apdx} is an extended version of the results displayed in the main text. It contains the estimation of two more estimands ($\tau(c_{011},c_{000}),\; \tau(c_{111},c_{000})$) using more networks combinations. For example, we also use the NMR with both the ST networks (measured at the two time periods) simultaneously.

Table~\ref{tab.apdx:jaccard.palluck} shows the Jaccard index of the four available networks. Clearly, networks derived from the same questions are more similar than those from different questions, e.g., the similarity of ST and ST-2 is $27.5\%$ whereas those of ST and BF is $21.1\%$. 

\begin{table}[H]
    \centering  
      \begin{tabular}{lrrrr}
    \toprule
      & ST-pre & ST-post & BF-pre & BF-post\\
    \midrule
    ST-pre & 1 &  &  & \\
    ST-post & 0.274 & 1 &  & \\
    BF-pre & 0.211 & 0.137 & 1 & \\
    BF-post & 0.137 & 0.200 & 0.244 & 1\\
    \bottomrule
    \end{tabular}
    \caption{Jaccard index of all the four available networks from \cite{Paluck2016}.}
    \label{tab.apdx:jaccard.palluck}
\end{table}

\newpage

\begin{table}[H]
    \centering
    \begin{turn}{90}
    \scalebox{0.7}{\begin{tabular}{lllllllll}
    \toprule
    \multicolumn{1}{c}{ } & \multicolumn{2}{c}{$\tau(c_{001},c_{000})$} & \multicolumn{2}{c}{$\tau(c_{011},c_{000})$} & \multicolumn{2}{c}{$\tau(c_{101},c_{000})$} & \multicolumn{2}{c}{$\tau(c_{111},c_{000})$} \\
    \cmidrule(l{3pt}r{3pt}){2-3} \cmidrule(l{3pt}r{3pt}){4-5} \cmidrule(l{3pt}r{3pt}){6-7} \cmidrule(l{3pt}r{3pt}){8-9}
    Networks & HT & Hajek & HT & Hajek & HT & Hajek & HT & Hajek\\
\midrule
    ST (pre) & 0.061 [-0.364, 0.486] & 0.146 [-0.176, 0.468] & 0.162 [-0.53, 0.854] & 0.122 [-0.415, 0.66] & 0.096 [-0.437, 0.628] & 0.271 [-0.102, 0.644] & 0.369 [-0.67, 1.409] & 0.272 [-0.467, 1.01]\\
    BF (pre) & 0.084 [-0.414, 0.581] & 0.123 [-0.26, 0.505] & 0.068 [-0.616, 0.753] & 0.162 [-0.315, 0.639] & 0.169 [-0.538, 0.877] & 0.265 [-0.233, 0.763] & 0.143 [-0.846, 1.131] & 0.292 [-0.34, 0.924]\\
    ST \& BF (pre) & 0.051 [-0.338, 0.44] & 0.134 [-0.162, 0.431] & 0.11 [-0.769, 0.99] & 0.135 [-0.494, 0.763] & 0.079 [-0.406, 0.564] & 0.261 [-0.081, 0.603] & 0.224 [-1.006, 1.453] & 0.258 [-0.563, 1.078]\\
    ST (post) & 0.06 [-0.36, 0.479] & 0.131 [-0.189, 0.452] & 0.137 [-0.607, 0.881] & 0.13 [-0.424, 0.683] & 0.116 [-0.469, 0.701] & 0.252 [-0.163, 0.668] & 0.251 [-0.755, 1.257] & 0.246 [-0.45, 0.943]\\
    BF (post) & 0.09 [-0.424, 0.604] & 0.135 [-0.258, 0.527] & 0.039 [-0.486, 0.565] & 0.09 [-0.291, 0.47] & 0.17 [-0.539, 0.88] & 0.258 [-0.243, 0.76] & 0.134 [-0.84, 1.108] & 0.297 [-0.324, 0.917]\\
    ST pre \& post & 0.037 [-0.296, 0.37] & 0.139 [-0.115, 0.393] & 0.231 [-0.716, 1.178] & 0.133 [-0.591, 0.858] & 0.071 [-0.386, 0.528] & 0.296 [-0.02, 0.613] & 0.469 [-0.828, 1.766] & 0.25 [-0.716, 1.215]\\
    BF pre \& post & 0.077 [-0.4, 0.555] & 0.124 [-0.242, 0.491] & 0.037 [-0.494, 0.569] & 0.063 [-0.33, 0.455] & 0.15 [-0.515, 0.815] & 0.258 [-0.213, 0.728] & 0.154 [-0.922, 1.23] & 0.303 [-0.382, 0.988]\\
    ALL & 0.04 [-0.306, 0.387] & 0.178 [-0.08, 0.435] & 0.067 [-0.711, 0.845] & 0.051 [-0.517, 0.62] & 0.046 [-0.326, 0.418] & 0.227 [-0.042, 0.496] & 0.266 [-1.306, 1.838] & 0.226 [-0.809, 1.262]\\
    \bottomrule
    \end{tabular}}
    \end{turn}
     \caption{Extended results of the social network field experiment analysis. Results are reported as point estimates ($95\%$ CI). Estimation is performed using the NMR HT and Hajek estimators.}
    \label{tab:palluck.apdx}
\end{table}

\end{document}